\documentclass[a4paper,11pt]{article}
\pdfoutput=1 

\usepackage{jheppub} 

\usepackage[T1]{fontenc} 

\usepackage{amsthm} 


\newtheorem{mytheorem}{Property}
\newtheorem{lemma}{Lemma}

\newcommand{\g}{g}
\newcommand{\dd}{\gamma}
\newcommand{\ip}{\Gamma}
\newcommand{\ai}{a_{\infty}}


\usepackage{tikz}
\usetikzlibrary{math}
\usetikzlibrary{arrows,shapes,positioning}
\usetikzlibrary{decorations.markings}
\tikzstyle arrowstyle=[scale=1]
\tikzstyle directed=[postaction={decorate,decoration={markings,
    mark=at position .65 with {\arrow[arrowstyle]{stealth}}}}]
\tikzstyle reverse directed=[postaction={decorate,decoration={markings,
    mark=at position .65 with {\arrowreversed[arrowstyle]{stealth};}}}]

\newlength{\mywidth}
\setlength{\mywidth}{1cm}

\usepackage[colorlinks=true
,urlcolor=blue
,anchorcolor=blue
,citecolor=blue
,filecolor=blue
,linkcolor=blue
,menucolor=blue
,pagecolor=blue
,linktocpage=true
,pdfproducer=medialab
,pdfa=true
]{hyperref}

\title{\boldmath One-loop elastic amplitudes from tree-level elasticity in 2d}

\author[a,b]{Matheus Fabri,}
\author[a,b]{Davide Polvara}

\affiliation[a]{Dipartimento di Fisica e Astronomia,
Universita degli Studi di Padova, via Marzolo 8, 35131 Padova, Italy.}
\affiliation[b]{INFN,
Sezione di Padova, via Marzolo 8, 35131 Padova, Italy.}

\emailAdd{matheusaugusto.fabri@unipd.it}
\emailAdd{davide.polvara@gmail.com}

\abstract{In this paper we extend the study initiated in~\cite{Polvara:2023vnx} to the computation of one-loop elastic amplitudes. We consider 1+1 dimensional massive bosonic Lagrangians with polynomial-like potentials and absence of inelastic processes at the tree level; starting from these assumptions we show how to write sums of one-loop diagrams as products and integrals of tree-level amplitudes. We derive in this way a universal formula for the one-loop two-to-two S-matrices in terms of tree S-matrices. 
We test our results on different integrable theories, such as sinh-Gordon, Bullough–Dodd and the full class of simply-laced affine Toda theories, finding perfect agreement with the bootstrapped S-matrices known in the literature. We show how Landau singularities in amplitudes are naturally captured by our universal formula while they are lost in results based on unitarity-cut methods implemented in the past~\cite{Bianchi:2013nra,Bianchi:2014rfa}.
}

\begin{document} 
\maketitle
\flushbottom

\section{Introduction}
\label{Introductory_section_on_conventions}

Integrable quantum field theories (IQFTs) in 1+1 dimensions are the hydrogen atoms of scattering amplitudes: they are simple enough to be completely solved~\cite{Zamolodchikov:1978xm} and provide a fascinating link between exact and perturbative methods for the computation of S-matrices. Even so, despite the tremendous progress made in the past years in the implementation of on-shell methods to compute amplitudes in four-dimensional gauge theories (see, e.g., \cite{Britto:2004ap, Britto:2004nc, Britto:2005fq, Bern:1994zx, Bern:1994cg}), very little is known about the application of these methods to two-dimensional theories\footnote{We should mention that partial results in this direction where obtained in~\cite{Bianchi:2013nra,Bianchi:2014rfa, Engelund:2013fja}. We will discuss these results later in the paper.}. Due to this fact, the bootstrap approach proposed in~\cite{Zamolodchikov:1978xm} is still the main way to attack the problem of deriving all-loop S-matrices of integrable theories in two dimensions. While this axiomatic program allowed in the past decades for the determination of the S-matrices of several integrable theories, 
there is now growing evidence that many S-matrices are hardly obtained through the bootstrap, as is the case for many non-linear sigma models defined on the worldsheet of superstrings (we remand to~\cite{Demulder:2023bux} for a recent review on this topic). Moreover, the S-matrices found through the bootstrap are only conjectured and the bootstrap axioms themselves are based on classical considerations which may be violated at the quantum level. It is then natural to ask whether there is an alternative way to derive all-loop S-matrices of integrable models unambiguously. While in this paper we do not provide a complete answer to this question we find partial results in this direction for all massive bosonic integrable theories with polynomial-like interactions.

More precisely, we consider Lagrangians of the following type
\begin{equation}
\label{eq0_1}
\mathcal{L}_0=\sum_{a=1}^r \biggl( \frac{\partial_\mu \phi_a  \partial^\mu \phi_{\bar{a}}}{2} - \frac{m_a^2}{2}  \phi_a \phi_{\bar{a}}\biggr) - \sum_{n=3}^{+\infty} \frac{1}{n!}\sum_{a_1,\dots, a_n=1}^r C^{(n)}_{a_1 \ldots a_n} \phi_{a_1} \ldots \phi_{a_n}
\end{equation}
and ask ourselves whether is possible to find closed expressions for the one-loop S-matrices of these models using perturbation theory.
Here $\{\phi_j \}_{j=1}^r$ are bosonic massive scalar fields associated with the asymptotic states of the theory. We follow the same convention of \cite{Dorey:2021hub, Polvara:2023vnx} and define $\phi_{\bar{a}} \equiv \phi^*_a$; this convention allows for the description of both real and complex scalars; indeed if $\phi_a$ is real we assume $a$ and $\bar{a}$ to be the same index $\in \{1, \dots, r\}$ so that the kinetic term is counted a single time. On the contrary, if $\phi_a$ is complex then we assume $a$ and  $\bar{a}$ to be different indices in $\{1, \dots, r\}$ and the kinetic term comes with the correct normalization for complex scalars. 
As in~\cite{Polvara:2023vnx}, we will assume models that satisfy the following property:
\begin{mytheorem}\label{Condition_tree_level_elasticity_introduction}
The theory is purely elastic at the tree level, i.e., the only tree-level amplitudes different from zero are those
in which the incoming and outgoing particles are of the same type and carry the same set of momenta.
\end{mytheorem}
Examples of IQFTs satisfying this property are the sinh-Gordon, Bullough-Dodd and affine Toda models, as universally proven in~\cite{Gabai:2018tmm, Dorey:2021hub}. An example of integrable theories violating this property is instead provided by the class of $O(N)$ sigma models since they have reflection and annihilation processes~\cite{Zamolodchikov:1978xm}.

For models satisfying Property~\ref{Condition_tree_level_elasticity_introduction} and presenting Lagrangians of type~\eqref{eq0_1} we derive the elastic one-loop two-to-two amplitudes purely in terms of the tree-level amplitudes in a closed form.
We remark that the Lagrangian \eqref{eq0_1} is the bare Lagrangian and that (as usual in perturbation theory) it is necessary to add counterterms to regularize the model. 
After fixing the renormalization conditions by requiring the absence of ultraviolet (UV) divergences and the vanishing of all one-loop inelastic amplitudes we show that the one-loop expressions we find for the elastic S-matrices
satisfy unitarity and correctly reproduce the Coleman-Thun singularities~\cite{Coleman:1978kk} expected from the bootstrap.
We also universally prove for the first time that the S-matrices of simply-laced affine Toda models conjectured in~\cite{Dorey:1990xa,Dorey:1991zp} are correct to one-loop order in perturbation theory.

The paper is structured as follows. In Section~\ref{sec:main_results_conventions} we set up our notation, review some facts about the tree-level perturbative integrability of these models, detail the renormalization procedure used and present our main result: the aforementioned formula for two-to-two one-loop S-matrices in terms of tree-level S-matrices for purely elastic theories. 
In Section~\ref{sec:one loop derivation} we provide the derivation of our formula for one-loop S-matrices giving details of the counterterm contributions.
In Section \ref{sec:integr_counterterms} we show how to fix the counterterms to ensure the vanishing of all one-loop production amplitudes in all models preserving the mass ratios at one loop; in the same section we show that our formula satisfies unitarity and encodes the correct Landau poles expected by Feynman diagrams of Coleman-Thun type. 
To conclude, in Section \ref{sec:SL-affine-toda} we derive one-loop S-matrices for the entire class of simply-laced affine Toda models and show that they match with the formula bootstraped in~\cite{Dorey:1990xa,Dorey:1991zp}.  We regard some technical details of the computations carried out in Section~\ref{sec:one loop derivation} to appendices~\ref{appendix_on_off_shell_limit_of_tree_level_amplitudes} and~\ref{app:on_shell_limit}. Finally, in Appendix~\ref{app:orbit_relation} we prove some root system relations necessary to derive the one-loop S-matrices of simply-laced affine Toda theories.

\section{Main results and overall methodology}
\label{sec:main_results_conventions}

In this section we set the notation, describe the renormalization procedure used and show our universal formula for one-loop amplitudes, whose derivation is provided in Section \ref{sec:one loop derivation}.

\subsection{Conventions}
In this paper, we will mostly write the momenta of the particles using light-cone components
\begin{equation}
p \equiv p_0 + p_1 \quad \text{and} \quad \bar{p} \equiv p_0 - p_1 \,,
\end{equation}
where $p_0$ and $p_1$ are the temporal and spatial components of the momenta, respectively. A particle of generic mass $m_j$ is on-shell if
\begin{equation}
    \label{light_cone_components_PhD_thesis}
    p=m_j e^{\theta_p} \hspace{5mm}, \hspace{5mm} \bar{p}=m_j e^{-\theta_p} \, ,
\end{equation}
and therefore $p\bar{p}=m_j^2$. The variable $\theta_p$ used to parameterise the momentum is called the `rapidity' and is real if the momentum is physical. For a pair of momenta $p$ and $q$ we will use the convention of labelling the difference between their rapidities by
\begin{equation}
\label{convention_on_difference_between_rapidities}
\theta_{pq} \equiv \theta_p - \theta_q \,.
\end{equation}

Scattering processes can be described in terms of amplitudes, by which we mean the sums over all connected on-shell Feynman diagrams and eventual counterterms, or equivalently in terms of S-matrices. We will refer to the S-matrix as the amplitude properly normalized and multiplied by the Dirac delta function of the overall energy-momentum conservation. Since the models under discussion are relativistic both the amplitudes and the S-matrices  are of difference form in the rapidities of the external particles and are connected by the following normalization factor
\begin{equation}
\label{S_matrix_amplitude_connection}
S_{ab}(\theta_{p p'})=\frac{M_{ab}(\theta_{p p'})}{4 m_a m_b \sinh{\theta_{p p'}}} \,.
\end{equation}
This proportionality factor comes from expressing the Dirac delta function of the overall energy-momentum conservation in terms of the rapidities. The masses $m_a$ and $m_b$ appearing in~\eqref{S_matrix_amplitude_connection} are the physical masses of the scattered particles; they correspond to the renormalised masses in the Lagrangian and are independent of the coupling with respect to the loop expansion is performed. 
We label by $M_{ab \to cd}$ an amplitude associated with a process having $a$ and $b$ as incoming particles and $c$ and $d$ as outgoing particles. If the initial particles are equal to the final ones we represent the amplitude as $M_{ab} \equiv M_{ab \to ab}$. The same convention is used for the S-matrices. Here we are interested in the perturbative computation of the S-matrix, whose expansion is given by 
\begin{equation}
\label{eq:1-loop-def}
S_{ab}(\theta)=1+S^{(0)}_{ab} (\theta)+ S^{(1)}_{ab} (\theta) + \cdots \,.
\end{equation}
Where $S^{(0)}_{ab} (\theta)$ and $S^{(1)}_{ab} (\theta)$ are the tree-level and one-loop S-matrices, respectively. We use the same notation for the perturbative expansion of the amplitude.

\subsection{Tree-level integrability}
\label{sec:tree_level_int}

Inspired by the review~\cite{Dorey:1996gd}, the tree-level perturbative integrability for Lagrangian theories of type~\eqref{eq0_1} was studied in different works~\cite{Khastgir:2003au,Gabai:2018tmm,Dorey:2021hub}. 
Here we review the results of~\cite{Dorey:2021hub,Gabai:2018tmm}, that establish the conditions on the masses and Lagrangian couplings for a theory of type~\eqref{eq0_1} to satisfy Property~\ref{Condition_tree_level_elasticity_introduction}; relying on these properties we will later investigate the perturbative integrability at one loop. 

Let us consider for simplicity a two-to-two inelastic process of the following type
\begin{equation}
\label{eq:inel_scattering}
a(p)+b(p') \to c(q)+d(q') \,,
\end{equation}
where $\{a, b\} \ne \{c, d\}$ and $p$, $p'$, $q$, and $q'$ represent the two-component momenta of the scattered particles. If we define the Mandelstam variables as follows 
\begin{equation}
\label{eq:Mandelstam}
\begin{split}
&s \equiv (p+p')^2= m^2_a + m_b^2 + 2 m_a m_b \cosh{\theta_{p p'}}\,,\\
&t \equiv (p-q')^2= m^2_a + m_d^2 - 2 m_a m_d \cosh{\theta_{p q'}}\,,\\
&u \equiv (p-q)^2= m^2_a + m_c^2 - 2 m_a m_c \cosh{\theta_{p q}}\,,
\end{split}
\end{equation}
then the tree-level amplitude associated with this process is given by
\begin{multline}
\label{eq:tree-level-scattering}
M_{ab \to cd}^{(0)}=-i \sum_{e \in s}  \ C_{ab\bar{e}}^{(3)}\ \frac{1}{s-m_e^2}\ C_{e\bar{c}\bar{d}}^{(3)} -i \sum_{j \in t} C_{a \bar{d} \bar{j}}^{(3)} \ \frac{1}{t-m_j^2} \ C_{j b \bar{c}}^{(3)}\\
-i \sum_{l \in u} C_{a \bar{c} \bar{l}}^{(3)} \ \frac{1}{u-m_l^2} \ C_{l b\bar{d}}^{(3)}- i C_{ab \bar{c} \bar{d}}^{(4)} \,.
\end{multline}
This is simply the sum of Feynman diagrams with particles propagating in the $s$-, $t$- and $u$-channel, and the quartic vertex, respectively. For the theory to be purely elastic at the tree level this amplitude needs to vanish for all the on-shell values of the momenta of the external particles. Let us consider  
the limit $s \to + \infty$, where the kinematics simplifies drastically; in this limit there are two branches of solutions satisfying the overall energy-momentum conservation: $\{s \to + \infty,t \to -\infty, u \to 0\}$ and  $\{s \to + \infty,t \to 0, u \to -\infty\}$. The requirement that the amplitude must vanish in both branches leads to the following constraint
\begin{equation}
\label{eq0_4point}
C^{(4)}_{ab\bar{c}\bar{d}}= \sum_{j \in t} C_{a \bar{d} \bar{j}}^{(3)} \ \frac{1}{m_j^2} \ C_{j b \bar{c}}^{(3)} = \sum_{l \in u} C_{a \bar{c} \bar{l}}^{(3)} \ \frac{1}{m_l^2} \ C_{l b\bar{d}}^{(3)} \, ,
\end{equation}
linking the values of masses, cubic- and quartic-couplings.
This simple argument was recently generalised 
to inelastic processes with arbitrary numbers of external particles by using a particular high-energy limit of the kinematics~\cite{Bercini:2018ysh,Gabai:2018tmm}, through
which it was noted that a necessary condition for a tree-level $n$-point production amplitude to vanish is
\begin{equation}
\label{eq0_6}
\begin{split}
C^{(n)}_{b_1 \ldots b_n}&- \sum_l C^{(n-1)}_{b_1 \ldots b_{n-2} \bar{l}} \frac{1}{m^2_l}C^{(3)}_{l  b_{n-1}b_{n}}- \sum_e C^{(n-2)}_{b_1 \ldots b_{n-3} \bar{e}} \frac{1}{m^2_e} C_{e b_{n-2} b_{n-1} b_n}^{(4)}\\
&+ \sum_{l, s} C^{(n-2)}_{b_1 \ldots b_{n-3} \bar{s}} \frac{1}{m^2_s} C_{s b_{n-2} \bar{l}}^{(3)} \frac{1}{m_l^2} C_{l b_{n-1} b_n}^{(3)}=0 \, \ \ \ \textrm{for} \ \ \ n\geq5.
\end{split}
\end{equation}
With these relations, the space of Lagrangians of type~\eqref{eq0_1} which yields purely elastic S-matrices at the tree level is completely determined by the masses and $3$-point couplings. It is important to stress that only particular sets of masses and $3$-point couplings allow for equations~\eqref{eq0_4point} and \eqref{eq0_6} to be solved. The space of masses and $3$-point couplings from which the theory is generated by applying recursively~\eqref{eq0_4point} and \eqref{eq0_6} can be obtained by requiring the cancellation of all singularities in inelastic $4$- and $5$-point processes. We remand to~\cite{Dorey:2021hub} for a detailed explanation of all the necessary constraints for the theory to be purely elastic at the tree level. 
Early studies on the absence of production for similar models date back to~\cite{Goebel:1986na, Arefeva:1974bk}.

Proving the vanishing of production amplitudes in perturbation theory is an interesting problem, whose understanding would lead to a deeper comprehension of why these models are integrable and would help classify integrable models. Moving from the tree-level to one-loop different difficulties arise, mainly due to the sheer complexity involved in enumerating and computing the loop diagrams.
One-loop production processes for the class of theories defined by Lagrangians of the type in~\eqref{eq0_1} have been recently studied in~\cite{Polvara:2023vnx} and it was shown that the absence of production at the tree-level can be used to drastically simplify one-loop computations. In the following, we will use the same renormalization procedure adopted in~\cite{Polvara:2023vnx} and we will show how the method applied in that paper can be extended to the computation of one-loop elastic amplitudes.

\subsection{Renormalization conditions and counterterms}
\label{sec:ren_condition_123}

To compute one-loop amplitudes we work with renormalized perturbation theory. We should stress that the fields, masses and couplings appearing in the Lagrangian~\eqref{eq0_1} are the bare quantities. Thus in the rewriting of~\eqref{eq0_1} in terms of renormalized fields and physical parameters, we introduce a series of counterterms that we fix by imposing the following set of renormalization conditions:
\begin{itemize}
    \item[(1)] All amplitudes need to be free of UV divergences at one-loop;
    \item[(2)] For all pairs of indices $a, b \in \{1, \dots, r\}$ we require the propagator $G_{ab}(p^2)$, having as incoming leg $a$ and outgoing leg $b$, to satisfy
    \begin{equation}
    \label{eq:res_prop_rencond}
\text{Res} \, G_{ab}(p^2)\Bigl|_{p^2=m^2_a}=\text{Res} \, G_{ab}(p^2)\Bigl|_{p^2=m^2_b}= i \delta_{ab} \,,
    \end{equation}
    where $m_a$ and $m_b$ are the renormalised masses;
    \item[(3)] All inelastic and production amplitudes need to vanish at one-loop for all values of on-shell momenta\footnote{By on-shell we mean that the momenta squared are equal to the renormalized masses.}.
\end{itemize}
The first two conditions listed above are standard requirements from renormalized perturbation theory; the latter is instead because we want our model to be integrable and possess elastic scattering to one-loop. Note that the divergent parts of all counterterms are unambiguously fixed by condition (1) and remove all diagrams with self-contracted vertices, as detailed in \cite{Polvara:2023vnx}. Due to this fact, we keep into account these counterterms by simply neglecting all diagrams with vertices containing selfcontracted propagators and
from now on we refer to the finite part of the counterterms as counterterms. The expressions we will write are therefore always finite and free of UV divergences.

As aforementioned, it is necessary to rewrite the bare Lagrangian~\eqref{eq0_1} in terms of the renormalized parameters. Here we follow the renormalization scheme used in~\cite{Braden:1991vz,Polvara:2023vnx} and replace the bare quantities with the physical quantities, summed with the associated counterterms
\begin{subequations}
\label{renormalization_m_C_phi}
\begin{equation}
\label{renormalization_m}
m^2_a \to m^2_a+ \delta m_a^2 \, ,
\end{equation}
\begin{equation}
\label{renormalization_C}
C_{a_1 \dots a_n}^{(n)} \to C_{a_1 \dots a_n}^{(n)}+ \delta C_{a_1 \dots a_n}^{(n)} \, ,
\end{equation}
\begin{equation}
\label{renormalization_phi}
\phi_a \to (1+\delta Z_a) \phi_a - \sum_{\substack{b=1 \\ b\neq a}}^r t_{ba} \phi_b \,.
\end{equation}
\end{subequations}
Since the original theory contains an infinite number of couplings determined by applying recursively relation~\eqref{eq0_6} then we need infinitely many counterterms. This justifies the introduction of an infinite number of renormalization conditions in point (3).
After substituting~\eqref{renormalization_m_C_phi} into~\eqref{eq0_1} we obtain the renormalized Lagrangian 
\begin{equation}
\label{eq:renormalised_Lagrangian}
\begin{split}
    \mathcal{L}&=\sum_{a=1}^r \biggl( \frac{\partial_\mu \phi_a  \partial^\mu \phi_{\bar{a}} }{2} - \frac{m_a^2}{2}  \phi_a \phi_{\bar{a}}\biggr) - \sum_{n=3}^{+\infty} \frac{1}{n!}\sum_{a_1,\dots, a_n=1}^r C^{(n)}_{a_1 \ldots a_n} \phi_{a_1} \ldots \phi_{a_n}\\
    & \begingroup\color{blue} +\sum_{a=1}^r\delta Z_{a} \Bigl(\partial_\mu \phi_a  \partial^\mu \phi_{\bar{a}}-  m^2_a \phi_a \phi_{\bar{a}}\Bigr) +\sum_{\substack{a,b=1 \\ b\neq a}}^r t_{ba} \Bigl(-\partial_\mu \phi_{\bar{a}} \partial^{\mu} \phi_b+m^2_a  \phi_{\bar{a}} \phi_b \Bigr) \endgroup \\
    &\begingroup\color{blue}- \sum_{n=3}^{+\infty} \frac{1}{n!}\sum_{a_1,\dots, a_n=1}^r C^{(n)}_{a_1 \ldots a_n} \Bigl( \delta Z_{a_1}+\delta Z_{a_2} + \ldots \delta Z_{a_n}\Bigr) \phi_{a_1} \ldots \phi_{a_n} \endgroup \\
    &\begingroup\color{blue}+\sum_{n=3}^{+\infty} \frac{1}{(n-1)!}\sum_{a_1,\dots, a_n=1}^r C^{(n)}_{a_1 \ldots a_n} \sum_{\substack{b=1 \\ b\neq a_1}}^r t_{b a_1} \phi_{b} \phi_{a_2} \ldots \phi_{a_n} -\sum_{a=1}^r\frac{1}{2} \delta m^2_a \phi_a \phi_{\bar{a}} \endgroup\\
    &{\color{red}- \sum_{n=3}^{+\infty} \frac{1}{n!}\sum_{a_1,\dots, a_n=1}^r  \delta C^{(n)}_{a_1 \ldots a_n} \phi_{a_1} \ldots \phi_{a_n} } \,.\\
\end{split}
\end{equation}
This will be the Lagrangian used in the computation of one-loop amplitudes in the next section.

The counterterms $\delta m_a^2$, $\delta Z_a$ and $t_{ab}$ can be fixed by imposing the renormalization condition~\eqref{eq:res_prop_rencond}. Indeed let $\Sigma_{jj}(p^2)$ be the sum over all one-loop bubble diagrams having two external legs of type $j$, which is shown in Figure~\ref{Im:Bubble_diagrams_j_particle}, and
\begin{equation}
\label{coefficients_of_the_Sigma_expansion}
\Sigma^{(0)}_{jj} \equiv \Sigma_{jj}(m_j^2) \hspace{5mm} \text{,} \hspace{5mm} \Sigma^{(1)}_{jj} \equiv \frac{\partial}{\partial p^2}\Sigma_{jj}(p^2)\Bigl|_{p^2=m_j^2} \,.
\end{equation}
Then condition (2) requires
\begin{equation}
\label{definition_of_mass_renormalization_and_tab_to_diagonalise_the_mass_matrix}
\delta m^2_a = - \Sigma^{(0)}_{aa} \hspace{3mm},\hspace{3mm} \delta Z_a = \frac{1}{2} \Sigma^{(1)}_{aa} \hspace{3mm} \text{and} \hspace{3mm}
    t_{ab} = \begin{cases}
\frac{\Sigma^{(0)}_{ab}}{m^2_b - m^2_a} \ \ \text{if} \ \ m_a \ne m_b\\
0 \hspace{12mm} \text{if} \ \ m_a = m_b \,.
\end{cases}
\end{equation}
A detailed derivation of these results is reported in~\cite{Braden:1991vz,Polvara:2023vnx}. 

While condition (2) is enough to fix the counterterms in~\eqref{renormalization_m} and \eqref{renormalization_phi}, to determine the counterterms~\eqref{renormalization_C} arising from the renormalization of the couplings we need to impose condition (3).
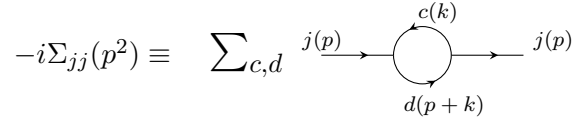
\begin{figure}[t]
\begin{center}
\begin{tikzpicture}
\tikzmath{\y=1.9;}


\filldraw[black] (-2.8*\y,0.1*\y)  node[anchor=west] {$-i \Sigma_{jj}(p^2) \equiv $};
\filldraw[black] (-1.45*\y,0.1*\y)  node[anchor=west] {\Large{$\sum_{c, d}$}};
\draw[directed] (-0.6*\y,0.1*\y) -- (-0.1*\y,0.1*\y);
\draw[directed] (0.3*\y,0.1*\y) arc(0:180:0.2*\y);
\draw[directed] (-0.1*\y,0.1*\y) arc(180:360:0.2*\y);
\draw[directed] (0.3*\y,0.1*\y) -- (0.8*\y,0.1*\y);
\filldraw[black] (-0.8*\y,0.2*\y)  node[anchor=west] {\scriptsize{$j(p)$}};
\filldraw[black] (0.8*\y,0.2*\y)  node[anchor=west] {\scriptsize{$j(p)$}};
\filldraw[black] (0*\y,0.4*\y)  node[anchor=west] {\scriptsize{$c(k)$}};
\filldraw[black] (-0.1*\y,-0.25*\y)  node[anchor=west] {\scriptsize{$d(p+k)$}};

\end{tikzpicture}
\caption{Sum of bubble diagrams with external leg $j$. Here we integrate over inner momentum $k$ and sum over exchanged particles $c$ and $d$.}
\label{Im:Bubble_diagrams_j_particle}
\end{center}
\end{figure}
One might expect this condition to impose multiple relations between distinct counterterms $\delta C_{a_1 \dots a_n}^{(n)}$.
However, it is important to recall that the couplings are not all independent due to the constraints~\eqref{eq0_4point} and~\eqref{eq0_6} determined by the tree-level integrability of the theory.
It is then reasonable to expect all counterterms to be related to each other and only a finite number of parameters to enter the renormalization conditions.
This is indeed the case of simply-laced affine Toda models~\cite{Braden:1989bu,Christe:1989my} where all masses and coupling counterterms scale with a common multiplicative factor
\begin{subequations}
\label{eq:equal_gamma_convention}
\begin{equation}
\label{eq:equal_gamma_convention1}
\delta m^2_a = \gamma\ m^2_a \, ,
\end{equation}
\begin{equation}
\label{eq:equal_gamma_convention2}
\delta C_{a_1 \dots a_n}^{(n)} = \gamma\ C_{a_1 \dots a_n}^{(n)}\, .
\end{equation}
\end{subequations}
Here the constant $\gamma$ depends neither on $n$ nor on the types of particles entering the vertices. Thus in these models, there is a single independent parameter and the counterterms in \eqref{eq:renormalised_Lagrangian} can be introduced altogether by modifying the mass scale in front of the integrable potential whose expansion generates the Lagrangian~\eqref{eq0_1}. Even if it is not proven that all integrable Lagrangians of type~\eqref{eq0_1} are of affine Toda type it is reasonable to expect that in all theories satisfying the tree-level elasticity condition defined by Property~\ref{Condition_tree_level_elasticity_introduction} the counterterms are introduced by renormalizing a finite number of parameters. This would be a consequence of all couplings being connected at the tree level. We will return to this point in Section~\ref{sec:integr_counterterms}.

\subsection{One-loop amplitudes}

Given the renormalized Lagrangian~\eqref{eq:renormalised_Lagrangian}, we want to analyze on-shell processes of the form
\begin{equation}
\label{two_to_two_elastic_process}
a(p)+b(p') \to a(q)+ b(q')
\end{equation}
on the elastic branch of the kinematics $\{q=p , \,q'=p'\}$. The one-loop amplitude associated with the process~\eqref{two_to_two_elastic_process} can be written as
\begin{equation}
\label{eq:definition_one_loop_amplitude}
M^{(1)}_{ab} = M^{(1\text{-loop})}_{ab}+{\color{blue} M^{(\text{ct.I})}_{a b} }+ {\color{red}M^{(\text{ct.II})}_{a b} } \,.
\end{equation}
Here $M^{(1\text{-loop})}_{ab}$ is the sum of all one-loop diagrams obtained by applying standard Feynman rules to the first-row of~\eqref{eq:renormalised_Lagrangian}, apart from the diagrams with self contracted vertices which are cancelled by the divergent part of the counterterms (which we do not write as they play no role apart from cancelling diagrams with self contracted vertices). We stress that this term contains both one-particle reducible and one-particle irreducible diagrams (and therefore also diagrams with loops in the external legs).
The remaining blue and red terms in~\eqref{eq:definition_one_loop_amplitude} contain instead tree-level diagrams with counterterms; they are obtained by applying Feynman rules to the red and blue interaction terms in the Lagrangian~\eqref{eq:renormalised_Lagrangian}, respectively. 

To describe our closed formula for one-loop amplitudes, we need the following result which is proven in Appendix~\ref{app:on_shell_limit}:
\begin{mytheorem}\label{Property_4_point_couplings}
If a theory has a Lagrangian of type~\eqref{eq0_1} and satisfies Property~\ref{Condition_tree_level_elasticity_introduction} then for any pair of particles of types $i$ and $j \in \{1,\dots, r\}$ 
the value of $M^{(0)}_{i j}$ at infinite rapidity is given by
\begin{equation}
\label{values_of_tree_level_amplitudes_at_infty}
\frac{M^{(0)}_{i j} (\infty)}{m_i^2 m_j^2}= \ai \,,
\end{equation}
where $\ai$ is a constant independent of $i$ and $j$. 
\end{mytheorem}

In Section~\ref{sec:one loop derivation} we will prove that elastic scattering processes of type~\eqref{two_to_two_elastic_process}, with rapidities $\theta_p>\theta_{p'}$, must satisfy the following relation
\begin{equation}
\label{eq:final_result_amplitude}
\begin{split}
&M^{(1\text{-loop})}_{ab}(\theta_{pp'})+ M^{(\text{ct.I})}_{a b}(\theta_{pp'})=\\
& \sum_{j \in \{\text{prop, ext} \}} \delta m_j^2 \frac{\partial}{\partial m_j^2} M^{(0)}_{ab}(\theta_{p p'})-\frac{1}{2} \Bigl( \frac{\delta m_a^2}{m_a^2} +\frac{\delta m_b^2}{m_b^2} \Bigl) M^{(0)}_{ab}(\theta_{p p'} )\\
&+\frac{\bigl(M^{(0)}_{ab}(\theta_{p p'}) \bigl)^2}{8 m_a m_b \sinh{\theta_{p p'}}}+\frac{i \ai}{2 \pi} m_a m_b \sinh{\theta_{p p'}} \theta_{p p'} \frac{\partial}{\partial \theta_{p p'}} S^{(0)}_{ab}(\theta_{p p'}) \sum^r_{e=1} m_e^2 \\
&+\frac{i}{\pi} m_a m_b \sinh{\theta_{pp'}}  \sum_{e=1}^r \frac{\partial}{\partial \theta_{p p'}} \ \text{p.v.} \int^{+\infty}_{-\infty} d\theta_k \,  S^{(0)}_{ae} (\theta_{pk} ) S^{(0)}_{b e} (\theta_{p' k}) \,.
\end{split}
\end{equation}
By ``p.v.'' we denote the Cauchy principal value prescription, which is necessary to avoid collinear singularities at $\theta_k=\theta_p$ and $\theta_k=\theta_{p'}$, as we should expect from the normalization factor in~\eqref{S_matrix_amplitude_connection}. This prescription means that we are integrating on the domain
\begin{equation}
(-\infty, \theta_{p'}-\epsilon) \cup (\theta_{p'}+\epsilon, \theta_{p}-\epsilon) \cup (\theta_{p}+\epsilon, +\infty) \quad \text{with} \ \epsilon > 0 \,.
\end{equation}
In the limit $\epsilon\rightarrow 0$, that we take after having performed the integral, the result is convergent because the tree-level S-matrices are odd functions in the rapidity. The constant $\ai$ in the third line of~\eqref{eq:final_result_amplitude} was defined in~\eqref{values_of_tree_level_amplitudes_at_infty}; it depends on the theory under discussion but not on the types of scattered particles $a$ and $b$. The mass counterterms in the second line of~\eqref{eq:final_result_amplitude} are given in~\eqref{definition_of_mass_renormalization_and_tab_to_diagonalise_the_mass_matrix} and are completely fixed in terms of one-loop bubble diagrams of the form in Figure~\ref{Im:Bubble_diagrams_j_particle}.  The sum in the first term on the r.h.s. of~\eqref{eq:final_result_amplitude} is performed over the masses of the virtual particles propagating inside the Feynman diagrams contributing to $M^{(0)}_{ab}$ and over the masses $m_a$ and $m_b$ of the external particles\footnote{This term must be computed by considering the masses of propagators and external particles appearing in $M^{(0)}_{ab}$ as free parameters, which need to be fixed to be equal to the renormalized masses appearing in the Lagrangian~\eqref{eq0_1} only after having performed the derivatives.}.

We stress that formula~\eqref{eq:final_result_amplitude} is written in terms of tree amplitudes and can be used to generate one-loop S-matrices of integrable theories in terms of tree-level data. As stressed in~\cite{Polvara:2023vnx}, the production amplitudes obtained from the black and blue terms in the Lagrangian~\eqref{eq:renormalised_Lagrangian}, do not vanish at one-loop in general. Then to enforce no particle production (our third renormalization condition), the additional red counterterms $\delta C^{(n)}_{a_1 \cdots a_n}$ in~\eqref{eq:renormalised_Lagrangian} are necessary. Due to this fact, from now on we will often refer to these counterterms as `integrable counterterms'. The contribution of these counterterms to the amplitude is 
\begin{equation}
\label{eq:countertermsII}
\begin{split}
M_{ab}^{(\text{ct. II})}=&-i \sum_{i \in s}  \frac{\delta C_{ab\bar{i}}^{(3)} C_{i \bar{a}\bar{b}}^{(3)}+ C_{ab\bar{i}}^{(3)} \delta C_{i \bar{a}\bar{b}}^{(3)}}{s-m_i^2}-i \sum_{j \in t} \frac{\delta C_{a \bar{b} \bar{j}}^{(3)} C_{j b \bar{a}}^{(3)}+ C_{a \bar{b} \bar{j}}^{(3)} \delta C_{j b \bar{a}}^{(3)}}{t-m_j^2}\\
&+i \sum_{l \in u}  \frac{\delta C_{a \bar{a} \bar{l}}^{(3)} C_{l b\bar{b}}^{(3)}+C_{a \bar{a} \bar{l}}^{(3)} \delta C_{l b\bar{b}}^{(3)}}{m_l^2} - i \delta C_{ab \bar{a} \bar{b}}^{(4)} \,.
\end{split}
\end{equation}
While the problem of fixing these counterterms for general theories remains open, in Section~\ref{sec:integr_counterterms} we show how these counterterms can be fixed in all models having mass ratios that do not renormalise at one loop (i.e. they satisfy~\eqref{eq:equal_gamma_convention1}). For these models condition~\eqref{eq:equal_gamma_convention2} is enough to remove all one-loop inelastic amplitudes and the one-loop elastic S-matrices are given by
\begin{equation}
\label{eq:result_S_mat_eq_m_ren_2Int}
\begin{split}
S^{(1)}_{ab}(\theta_{pp'})&=\frac{\bigl(S^{(0)}_{ab}(\theta_{p p'}) \bigl)^2}{2}+\frac{i \ai}{8 \pi}  \theta_{p p'} \frac{\partial}{\partial \theta_{p p'}} S^{(0)}_{ab}(\theta_{p p'}) \sum^r_{e=1} m_e^2 \\
&+\frac{i}{4 \pi}  \sum_{e=1}^r \frac{\partial}{\partial \theta_{p p'}} \ \text{p.v} \int^{+\infty}_{-\infty} d\theta_k \   S^{(0)}_{ea} (\theta_{k p} ) S^{(0)}_{e b} (\theta_{k p'}) \,.
\end{split}
\end{equation}

\section{Deriving closed formulas for one-loop amplitudes}
\label{sec:one loop derivation}

In this section, and the associated appendices~\ref{appendix_on_off_shell_limit_of_tree_level_amplitudes} and~\ref{app:on_shell_limit}, we report the derivation of formula~\eqref{eq:final_result_amplitude} for one-loop amplitudes. To this end, we follow the same technique adopted in~\cite{Polvara:2023vnx} and described in some detail in chapter 24 of~\cite{schwartz2014quantum}. We will often refer to this technique with the name `cutting method'. This consists in splitting each Feynman propagator into a retarded propagator and a Dirac delta function as follows
\begin{equation}
\label{splitting_propagators_into_retarded_part_and_delta_function}
\frac{i}{k^2-m^2_a+i \epsilon} =\Pi_a^{(R)} (k) + \frac{\pi}{\omega_a(k)} \delta(k_0 - \omega_a(k)) \,.
\end{equation}
In the expression above
\begin{equation}
\label{definition_of_the_retarded_propagator}
\Pi_a^{(R)} (k) \equiv \frac{i}{k^2-m_a^2- i k_0 \ \epsilon}\ ,
\end{equation}
is the retarded propagator associated with a particle of type $a \in\{1, \dots, r\}$ and
$$
\omega_a(k) \equiv \sqrt{k_1^2 + m_a^2}
$$
is the on-shell energy of the particle. Using this split, any loop diagram can be decomposed into the sum of different terms corresponding to combinations of Dirac-delta functions and retarded propagators. Among them, the nonzero terms contributing to the amplitude must contain at least one Dirac-delta function. Indeed, in contrast to Feynman propagators, retarded propagators have pairs of poles located in the same half of the energy complex plane; therefore, an integral containing just retarded propagators is zero by the existence of a contour in the complex plane that does not enclose any poles. Due to this fact, each loop amplitude can be completely written in terms of tree-level quantities, which can be products or integrals of on-shell tree-level amplitudes. We refer the reader to~\cite{Polvara:2023vnx} for a detailed example of how this procedure works. 

With these preliminaries, we now move to the derivation of formula~\eqref{eq:final_result_amplitude}.

\subsection{Standard counterterm contributions}
\label{section_to_explain_the_mass_and_field_renormalization}

We start with the derivation of $M^{(\text{ct.I})}_{a b}(\theta_{pp'})$. In~\cite{Polvara:2023vnx} it was shown that the sum of all the bubble diagrams contributing to a one-loop process of type $a(p) \to a(p)$ can be written as follows
\begin{equation}
\label{definition_necessary_for_mass_renormalization_first_time_ga_appears}
-i\Sigma_{aa}(p^2)= \frac{1}{8 \pi} \sum_{c=1}^r \int_{-\infty}^{+\infty} d\theta_k \ \hat{M}^{(0)}_{a c}(\theta_{pk}) \, ,
\end{equation}
where $\Sigma_{aa}(p^2)$ is the quantity depicted in Figure~\ref{Im:Bubble_diagrams_j_particle} and $\hat{M}^{(0)}_{a c}$ is the two-to-two tree-level amplitude $M^{(0)}_{a c}$ associated with the elastic process
$$
a(p)+c(k) \to a(p)+c(k)
$$
from which we subtract the contribution at $\theta_{k}=\infty$ (being $\theta_k$ the rapidity of the particle $c$ associated with the cut propagator):
\begin{equation}
\hat{M}^{(0)}_{a c}(\theta_{p k}) \equiv M^{(0)}_{a c}(\theta_{p k}) - M^{(0)}_{a c}(\infty) \,.
\end{equation}

Using these definitions for the sum of the bubble diagrams we can write $M^{(\text{ct.I})}_{a b}$ as follows
\begin{equation}
\label{total_counterterms_contribution_production_process}
\begin{split}
    M^{(\text{ct.I})}_{a b} &=\Bigl( \frac{2\delta m^2_a}{p^2 - m^2_a} + \frac{2\delta m^2_b}{p'^2 - m^2_b}\Bigl) M^{(0)}_{a b}(p, p')\\
    &- (\Sigma^{(1)}_{a a}+\Sigma^{(1)}_{b b}) M^{(0)}_{a b}(p, p')+\sum_{j \in \text{prop.}} \delta m^2_j \frac{\partial}{\partial m^2_j} M^{(0)}_{a b}(p, p').
    \end{split}
\end{equation}
The derivation of this result was performed in~\cite{Polvara:2023vnx} and we omit to report it again here.
The result is identical to the one in Section 2.4 of~\cite{Polvara:2023vnx} but for the additional contribution
$$
(\Sigma^{(1)}_{a a}+\Sigma^{(1)}_{b b}) M^{(0)}_{a b}(p, p'),
$$
which was zero in~\cite{Polvara:2023vnx} since the processes considered in that work were inelastic. Note that the first row in~\eqref{total_counterterms_contribution_production_process} is singular when the external momenta are on-shell. This singular contribution will be necessary to cancel specific ill-defined one-loop Feynman diagrams containing bubbles in external legs. It is indeed associated to one-loop external leg corrections.
Note that now the sum in the last term on the r.h.s. of~\eqref{total_counterterms_contribution_production_process} is performed \textit{only} over the masses appearing in the internal propagators in the tree-level process $M^{(0)}_{a b}$ and not on the external masses like in \eqref{eq:final_result_amplitude}.

\subsection{One-loop contributions}
\label{section_to_explain_two_to_two_inelastic_processes}

We now focus on one-loop diagrams obtained by applying standard Feynman rules to the first row of the Lagrangian~\eqref{eq:renormalised_Lagrangian}. As shown in \cite{Polvara:2023vnx}, using the cutting method previously described, the sum of all one-loop diagrams contributing to the process~\eqref{two_to_two_elastic_process} can be written as
\begin{equation}
\label{sum_of_single_and_double_cut_contributions}
M_{ab}^{(1\text{-loop})}=M_{ab}^{(2\text{-cut})}+M_{ab}^{(1\text{-cut})} \,,
\end{equation}
where
\begin{equation}
\label{4_point_one_loop_from_tree_single_cut_contribution}
M_{ab}^{(1\text{-cut})} =\frac{1}{8\pi} \sum_{e=1}^r \int_{-\infty}^{+\infty} d\theta_k \hat{M}_{abe}^{(0)}(p,p',k),
\end{equation}
and
\begin{equation}
\label{4_point_one_loop_from_tree_double_cut_contribution}
M_{ab}^{(2\text{-cut})} =M_{ab}^{(2\text{-cut},t)}+M_{ab}^{(2\text{-cut},u)}= \sum_{e, f=1}^r \frac{M_{ae\to bf}^{(0)}M_{bf\to ae}^{(0)}}{8 m_e m_f |\sinh{\theta_{k k'}}|}+\sum_{e, f=1}^r \frac{M_{ae\to af}^{(0)}M_{bf\to be}^{(0)}}{8 m_e m_f  |\sinh{\theta_{k k'}}|} \,.
\end{equation}
We refer to equations~\eqref{4_point_one_loop_from_tree_single_cut_contribution} and~\eqref{4_point_one_loop_from_tree_double_cut_contribution} as the single- and double-cut contributions, respectively. As before, the hat on the integrands of~\eqref{4_point_one_loop_from_tree_single_cut_contribution} means that we subtract from the tree-level amplitudes their values at $\theta_k=\infty$ ($\theta_k$ is the rapidity associated with the particle $e$). This subtraction corresponds to not considering diagrams with propagators that connect back to their originating vertices, which are removed thanks to the counterterms introduced by requiring point (1) in the renormalization conditions in Section~\ref{sec:main_results_conventions}. Note that in~\eqref{4_point_one_loop_from_tree_double_cut_contribution} $k$ and $k'$ are the momenta associated with the particles $e$ and $f$. 
These momenta are associated with the propagators that have been cut and are fixed by the requirement that the energy and momentum are conserved in each blob of Figure~\ref{All_possible_double_cuts_ab_into_cd}, where the double-cut contributions for the $s$-, $t$- and $u$-channels are shown. We start computing~\eqref{4_point_one_loop_from_tree_double_cut_contribution} and then we focus on the more complicated single-cut contribution in~\eqref{4_point_one_loop_from_tree_single_cut_contribution}.

\subsection{Double-cuts in elastic amplitudes}
\label{sec:double-cut-elastic}

The $s$-, $t$-, and $u$-channel double-cuts contributing to equation~\eqref{4_point_one_loop_from_tree_double_cut_contribution} are shown in Figure~\ref{All_possible_double_cuts_ab_into_cd}.  As explained in~\cite{Polvara:2023vnx} the double cut in the $s$-channel, depicted on the left of the figure, is null by kinematics, otherwise the process $f\to a+b+e$ would be allowed on-shell and Property~\ref{Condition_tree_level_elasticity_introduction} would be violated. For this reason, in~\eqref{4_point_one_loop_from_tree_double_cut_contribution} only double-cuts in the $u$ and $t$ channels are considered. 

We evaluate the double cut in the $u$-channel first. 
By Property~\ref{Condition_tree_level_elasticity_introduction} the only nonzero $u$-channel double-cuts are those in which $f=e$. Indeed this is the only possibility for the two blobs in the third picture in Figure~\ref{All_possible_double_cuts_ab_into_cd} to be elastic amplitudes. To correctly define these cuts
a regulator
\begin{equation}
\label{definition_of_my_regulator_x}
x=\mu(e^{\theta_x},e^{-\theta_x}) \, ,
\end{equation}
with $\mu>0$ and $\theta_x \in \mathbb{R}$, needs to be introduced. The vector in~\eqref{definition_of_my_regulator_x} is expressed in light-cone components. We assume $p$ and $p'$ fixed and on-shell, and $q$ and $q'$ defined by
\begin{equation}
\label{regularization_through_x_introducing_outgoing_mass_deformations}
q-p=p'-q'=x.
\end{equation}
From~\eqref{regularization_through_x_introducing_outgoing_mass_deformations} we see that $q$ and $q'$ are off-shell and become on-shell only in the limit $\mu \to 0$ where $q = p$ and $q' = p'$. With this regulator the momenta of the cut propagators need to satisfy
\begin{equation}
\label{difference_between_hatk_and_tildek_returning_x}
k- k'=x.
\end{equation}
Equation~\eqref{difference_between_hatk_and_tildek_returning_x} admits no physical solutions for $k$ and $k'$ when $\mu>0$ and $\theta_x \in \mathbb{R}$; the $u$-channel cuts are therefore all null due to kinematics. Since the double cuts in the $u$ channel are all null when $\mu>0$ then they need to be null also in the limit $\mu \to 0$ (when all the external particles become on-shell). Indeed, no discontinuity in $M_{ab}^{(1)}$ is expected in this limit. This is consistent as far as the same regulator is used to compute also the expression in~\eqref{4_point_one_loop_from_tree_single_cut_contribution}. 

Therefore, the only nonzero term in~\eqref{4_point_one_loop_from_tree_double_cut_contribution} is the $t$-channel contribution, which is depicted at the centre of Figure~\ref{All_possible_double_cuts_ab_into_cd} and corresponds to the first sum in~\eqref{4_point_one_loop_from_tree_double_cut_contribution}. The non-vanishing contributions in this sum are those in which $e=b$ and $f=a$: in this way both the tree-level amplitudes in the numerator of the first sum in~\eqref{4_point_one_loop_from_tree_double_cut_contribution} are elastic. Solving the energy-momentum conservation constraints for these values we obtain
\begin{equation}
\label{eq_double_cut_contribution_elastic}
M_{ab}^{(2\text{-cut})} =M_{ab}^{(2\text{-cut},t)}= \frac{(M_{ab}^{(0)}(\theta_{p p'}))^2}{8 m_a m_b  |\sinh{\theta_{p p'}}|}.
\end{equation}
While double-cut contributions are relatively simple to compute, diagrams containing single cuts (i.e. a single Dirac delta function and all remaining propagators of retarded type) are quite convoluted. We will spend the remaining part of this section to compute these diagrams.

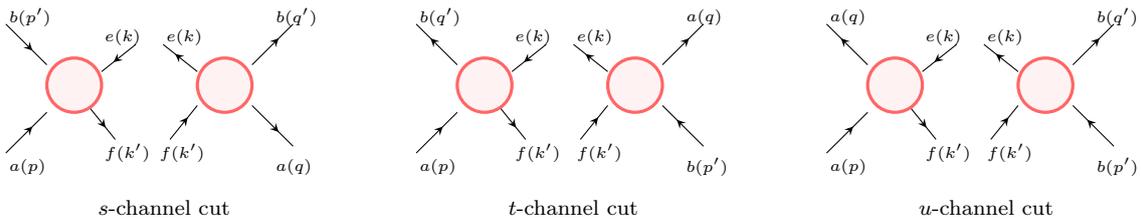
\begin{figure}
\begin{center}
\begin{tikzpicture}
\tikzmath{\y=0.9;}


\draw[directed] (-1.4*\y,0.5*\y) -- (-0.8*\y,1.1*\y);
\draw[directed] (-1.4*\y,2.4*\y) -- (-0.8*\y,1.8*\y);
\draw[directed] (0.5*\y,2.1*\y) -- (0*\y,1.7*\y);
\draw[directed] (1.4*\y,1.7*\y) -- (0.9*\y,2.1*\y);
\draw[directed] (-0.2*\y,1.2*\y) -- (0.2*\y,0.7*\y);
\draw[directed] (1*\y,0.7*\y) -- (1.4*\y,1.2*\y);
\draw[directed] (2.2*\y,1.2*\y) -- (2.8*\y,0.6*\y);
\draw[directed] (2.2*\y,1.8*\y) -- (2.8*\y,2.4*\y);

\filldraw[color=red!60, fill=red!5, very thick](-0.4*\y,1.5*\y) circle (0.4*\y);
\filldraw[color=red!60, fill=red!5, very thick](1.8*\y,1.5*\y) circle (0.4*\y);
\filldraw[black] (-1.5*\y,0.3*\y)  node[anchor=west] {\tiny{$a(p)$}};
\filldraw[black] (-1.5*\y,2.5*\y)  node[anchor=west] {\tiny{$b(p')$}};
\filldraw[black] (2.4*\y,0.3*\y)  node[anchor=west] {\tiny{$a(q)$}};
\filldraw[black] (2.4*\y,2.5*\y)  node[anchor=west] {\tiny{$b(q')$}};
\filldraw[black] (-0.1*\y,2.2*\y)  node[anchor=west] {\tiny{$e(k)$}};
\filldraw[black] (0.8*\y,2.2*\y)  node[anchor=west] {\tiny{$e(k)$}};
\filldraw[black] (-0.1*\y,0.5*\y)  node[anchor=west] {\tiny{$f(k')$}};
\filldraw[black] (0.7*\y,0.5*\y)  node[anchor=west] {\tiny{$f(k')$}};

\filldraw[black] (-0.2*\y,-0.3*\y)  node[anchor=west] {\scriptsize{$s$-channel cut}};


\draw[directed] (4.6*\y,0.5*\y) -- (5.2*\y,1.1*\y);
\draw[directed] (5.2*\y,1.8*\y) -- (4.6*\y,2.4*\y);
\draw[directed] (6.5*\y,2.1*\y) -- (6*\y,1.7*\y);
\draw[directed] (7.4*\y,1.7*\y) -- (6.9*\y,2.1*\y);
\draw[directed] (5.8*\y,1.2*\y) -- (6.2*\y,0.7*\y);
\draw[directed] (7*\y,0.7*\y) -- (7.4*\y,1.2*\y);
\draw[directed] (8.8*\y,0.6*\y) -- (8.2*\y,1.2*\y);
\draw[directed] (8.2*\y,1.8*\y) -- (8.8*\y,2.4*\y);

\filldraw[color=red!60, fill=red!5, very thick](5.6*\y,1.5*\y) circle (0.4*\y);
\filldraw[color=red!60, fill=red!5, very thick](7.8*\y,1.5*\y) circle (0.4*\y);
\filldraw[black] (4.5*\y,0.3*\y)  node[anchor=west] {\tiny{$a(p)$}};
\filldraw[black] (4.5*\y,2.5*\y)  node[anchor=west] {\tiny{$b(q')$}};
\filldraw[black] (8.4*\y,0.3*\y)  node[anchor=west] {\tiny{$b(p')$}};
\filldraw[black] (8.4*\y,2.5*\y)  node[anchor=west] {\tiny{$a(q)$}};
\filldraw[black] (5.9*\y,2.2*\y)  node[anchor=west] {\tiny{$e(k)$}};
\filldraw[black] (6.8*\y,2.2*\y)  node[anchor=west] {\tiny{$e(k)$}};
\filldraw[black] (5.9*\y,0.5*\y)  node[anchor=west] {\tiny{$f(k')$}};
\filldraw[black] (6.8*\y,0.5*\y)  node[anchor=west] {\tiny{$f(k')$}};

\filldraw[black] (5.8*\y,-0.3*\y)  node[anchor=west] {\scriptsize{$t$-channel cut}};


\draw[directed] (10.6*\y,0.5*\y) -- (11.2*\y,1.1*\y);
\draw[directed] (11.2*\y,1.8*\y) -- (10.6*\y,2.4*\y);
\draw[directed] (12.5*\y,2.1*\y) -- (12*\y,1.7*\y);
\draw[directed] (13.4*\y,1.7*\y) -- (12.9*\y,2.1*\y);
\draw[directed] (11.8*\y,1.2*\y) -- (12.2*\y,0.7*\y);
\draw[directed] (13*\y,0.7*\y) -- (13.4*\y,1.2*\y);
\draw[directed] (14.8*\y,0.6*\y) -- (14.2*\y,1.2*\y);
\draw[directed] (14.2*\y,1.8*\y) -- (14.8*\y,2.4*\y);

\filldraw[color=red!60, fill=red!5, very thick](11.6*\y,1.5*\y) circle (0.4*\y);
\filldraw[color=red!60, fill=red!5, very thick](13.8*\y,1.5*\y) circle (0.4*\y);
\filldraw[black] (10.5*\y,0.3*\y)  node[anchor=west] {\tiny{$a(p)$}};
\filldraw[black] (10.5*\y,2.5*\y)  node[anchor=west] {\tiny{$a(q)$}};
\filldraw[black] (14.4*\y,0.3*\y)  node[anchor=west] {\tiny{$b(p')$}};
\filldraw[black] (14.4*\y,2.5*\y)  node[anchor=west] {\tiny{$b(q')$}};
\filldraw[black] (11.9*\y,2.2*\y)  node[anchor=west] {\tiny{$e(k)$}};
\filldraw[black] (12.8*\y,2.2*\y)  node[anchor=west] {\tiny{$e(k)$}};
\filldraw[black] (11.9*\y,0.5*\y)  node[anchor=west] {\tiny{$f(k')$}};
\filldraw[black] (12.8*\y,0.5*\y)  node[anchor=west] {\tiny{$f(k')$}};

\filldraw[black] (11.8*\y,-0.3*\y)  node[anchor=west] {\scriptsize{$u$-channel cut}};

\end{tikzpicture}
\caption{Double-cuts in the $s$-, $t$- and $u$-channel (from left to right). The $s$-channel cut is not allowed for kinematical reasons.}
\label{All_possible_double_cuts_ab_into_cd}
\end{center}
\end{figure}

\subsection{Subtleties in three-to-three tree-level amplitudes}

To evaluate the quantity in~\eqref{4_point_one_loop_from_tree_single_cut_contribution} we need to understand the properties of its integrands.
It is important to note that the momenta of the external particles $p$, $p'$, $q$ and $q'$ cannot be set all on-shell otherwise there are cut diagrams containing retarded propagators
\begin{equation}
\frac{i}{r^2-m_e^2- i r_0 \ \epsilon}
\end{equation}
(there is one of them for each particle type entering the sum in~\eqref{4_point_one_loop_from_tree_single_cut_contribution}) which satisfy $r^2=m^2_e$ and are thus singular. 
\begin{figure}
\begin{center}
\begin{tikzpicture}
\tikzmath{\y=1.4;}

\draw[directed] (4.6*\y,0.5*\y) -- (5.2*\y,1.1*\y);
\draw[directed] (5.2*\y,1.8*\y) -- (4.6*\y,2.4*\y);
\draw[directed] (8.8*\y,0.6*\y) -- (8.2*\y,1.2*\y);
\draw[directed] (8.2*\y,1.8*\y) -- (8.8*\y,2.4*\y);
\filldraw[color=red!60, fill=red!5, very thick](5.6*\y,1.5*\y) circle (0.4*\y);
\draw[directed] (6.5*\y,2.1*\y) -- (6*\y,1.7*\y);
\draw[directed] (7.4*\y,1.7*\y) -- (6.9*\y,2.1*\y);
\draw[directed] (5.95*\y,1.3*\y) arc(240:300:1.5*\y);
\filldraw[color=red!60, fill=red!5, very thick](7.8*\y,1.5*\y) circle (0.4*\y);

\filldraw[black] (4.6*\y,0.3*\y)  node[anchor=west] {\scriptsize{$a(p)$}};
\filldraw[black] (4.6*\y,2.5*\y)  node[anchor=west] {\scriptsize{$a(q)$}};
\filldraw[black] (8.2*\y,0.3*\y)  node[anchor=west] {\scriptsize{$b(p')$}};
\filldraw[black] (8.2*\y,2.5*\y)  node[anchor=west] {\scriptsize{$b(q')$}};
\filldraw[black] (6.15*\y,2.2*\y)  node[anchor=west] {\scriptsize{$e(k)$}};
\filldraw[black] (6.75*\y,2.2*\y)  node[anchor=west] {\scriptsize{$e(k)$}};
\filldraw[black] (6.4*\y,0.9*\y)  node[anchor=west] {\scriptsize{$e(r)$}};

\filldraw[black] (9*\y,1.5*\y)  node[anchor=west] {\scriptsize{$+$}};


\draw[directed] (9.6*\y,0.5*\y) -- (10.2*\y,1.1*\y);
\draw[directed] (10.2*\y,1.8*\y) -- (9.6*\y,2.4*\y);
\draw[directed] (13.8*\y,0.6*\y) -- (13.2*\y,1.2*\y);
\draw[directed] (13.2*\y,1.8*\y) -- (13.8*\y,2.4*\y);
\draw[directed] (12.45*\y,1.7*\y) arc(60:120:1.5*\y);
\draw[directed] (10.8*\y,1.2*\y) -- (11.2*\y,0.7*\y);
\filldraw[color=red!60, fill=red!5, very thick](10.6*\y,1.5*\y) circle (0.4*\y);
\draw[directed] (12*\y,0.7*\y) -- (12.4*\y,1.2*\y);
\filldraw[color=red!60, fill=red!5, very thick](12.8*\y,1.5*\y) circle (0.4*\y);
\filldraw[black] (9.6*\y,0.3*\y)  node[anchor=west] {\scriptsize{$a(p)$}};
\filldraw[black] (9.6*\y,2.5*\y)  node[anchor=west] {\scriptsize{$a(q)$}};
\filldraw[black] (13.2*\y,0.3*\y)  node[anchor=west] {\scriptsize{$b(p')$}};
\filldraw[black] (13.2*\y,2.5*\y)  node[anchor=west] {\scriptsize{$b(q')$}};
\filldraw[black] (11.4*\y,2.1*\y)  node[anchor=west] {\scriptsize{$e(r')$}};
\filldraw[black] (11*\y,0.6*\y)  node[anchor=west] {\scriptsize{$e(k)$}};
\filldraw[black] (11.7*\y,0.6*\y)  node[anchor=west] {\scriptsize{$e(k)$}};

\end{tikzpicture}
\caption{Pair of single cuts which are ill-defined when the external particles are on-shell and satisfy $\{q=p, q'=p'\}$, for which $r=r'=k$ with $k^2=m_e^2$.}
\label{relevant_terms_in_single_cut_t_channel_diagram}
\end{center}
\end{figure}
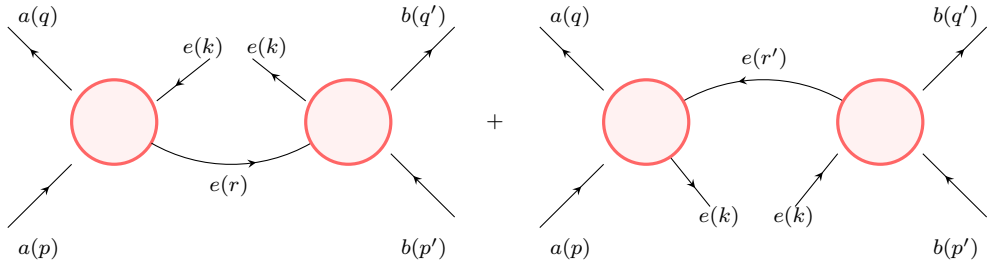
In Figure~\ref{relevant_terms_in_single_cut_t_channel_diagram} we show two combinations of single-cut diagrams which are ill-defined at the kinematical point $\{q=p, \, q'=p'\}$. At this point,
the momenta $r$ and $r'$ flowing in the diagrams are both on-shell and equal to $k$ (the momentum of the propagator that has been cut). To avoid this situation we will adopt the regularization scheme~\eqref{regularization_through_x_introducing_outgoing_mass_deformations}. With this scheme, the momenta $q$ and $q'$ are off-shell and there are no singular propagators. We can therefore set $i \epsilon=0$ in all denominators of propagators, define the amplitude for a small parameter $\mu$ in~\eqref{definition_of_my_regulator_x} and then take the limit $\mu \to 0$.

In all this procedure we should ask ourselves how the fact that the theory is tree-level integrable simplifies the computation.
Let us consider a tree-level amplitude $M_{abe}^{(0)}$ associated with the process
\begin{equation}
\label{3_to_3_tree_level_process}
a(p)+b(p')+e(k) \to a(q)+b(q')+e(\tilde{k}) \, ,
\end{equation}
in the case in which Property~\ref{Condition_tree_level_elasticity_introduction} is satisfied.
If we do not assume the external particles to be on-shell then $M_{abe}^{(0)}$ is a function of ten parameters: 
\begin{equation}
\label{ten_free_parameters_in_3_to_3_amplitudes}
p^2, \, p'^2, \, k^2, \, q^2, \, q'^2, \, \tilde{k}^2, \, \theta_p, \, \theta_{p'}, \, \theta_{k} \ \text{and} \ \theta_q \, .
\end{equation}
The rapidities $\theta_{\tilde{k}}$ and $\theta_{q'}$ are fixed by the overall energy-momentum conservation and have two branches of solutions. 
If we set $i \epsilon=0$ in the denominators of propagators, then for on-shell values of the external momenta it holds that $M_{abe}^{(0)}=0$. Indeed, if $i \epsilon=0$ there is no difference between a two-to-four process and a three-to-three process. Both $M_{2\to 4}^{(0)}$ and $M_{3\to 3}^{(0)}$ are zero everywhere (see, e.g., \cite{Dorey:2021hub}). If $i \epsilon\ne 0$ then $M_{2\to 4}^{(0)}$ and $M_{3\to 3}^{(0)}$ are zero but for a surface of the rapidity-space corresponding to the region where these amplitudes factorise.
In $M_{3\to 3}^{(0)}$ this surface intersects the on-shell space of physical external momenta while in $M_{2\to 4}^{(0)}$ all points of the surface correspond to unphysical momenta. Indeed certain external energies need to be negative in the region in which $M_{2\to 4}^{(0)}$ factorises. 
Due to this fact if we set $i \epsilon=0$ then the amplitude $M_{abe\to abe}^{(0)}$
vanishes on the surface (inside the space spanned by the ten parameters in~\eqref{ten_free_parameters_in_3_to_3_amplitudes}) defined by
\begin{equation} 
p^2=q^2=m_a^2 \quad,\quad p'^2=q'^2=m_b^2 \quad,\quad k^2=\tilde{k}^2=m_e^2 \,.
\end{equation}
Moving along this surface, parameterised by $\theta_p$, $\theta_{p'}$, $\theta_k$ and $\theta_q$, the amplitude $M_{abe\to abe}^{(0)}=0$ and is zero also in the limit $\theta_q \to \theta_p$, at which $\theta_{\tilde{k}}=\theta_{k}$ and $\theta_{q'}=\theta_{p'}$\footnote{Despite there are two branches of the kinematics, from now on we will focus on the elastic branch (the one corresponding to have $\theta_{q'}=\theta_{p'}$ and $\theta_{\tilde{k}}=\theta_k$ when $\theta_q=\theta_p$) since the integrand in~\eqref{4_point_one_loop_from_tree_single_cut_contribution} is evaluated on this branch.}.

Despite this fact, the on-shell point $\theta_q=\theta_p$ is very special and the ten-parameter function $M_{abe\to abe}^{(0)}$ is ill-defined and not continuous at this point. 
When $i \epsilon=0$, the limit of $M_{abe\to abe}^{(0)}$ to this point is zero on the surface where the external particles are all on-shell but is in general non-zero if the limit is taken along a different direction outside this surface. As previously explained we cannot assume all the external particles to be on-shell, otherwise certain propagators would be ill-defined. Therefore, the integrands in~\eqref{4_point_one_loop_from_tree_single_cut_contribution} are nontrivial functions that need to be computed. In the next sections, we will show how Property~\ref{Condition_tree_level_elasticity_introduction} can be used to simplify the computation and find closed expressions for these integrands. We will refer to the limit $\mu \to 0$ adopted to reach the elastic point 
\begin{equation} 
\label{eq:on_shell_point}
q=p \quad,\quad q'=p' \quad,\quad \tilde{k}=k \,,
\end{equation}
with all the particles on-shell, as the `off-shell' limit. This is because, even if all particles are on-shell at this point, we move along off-shell configurations to reach this point and $M_{abe \to abe}^{(0)} \ne 0$ in this limit. In contrast, if the particles are on-shell at each step of the limit used to reach the configuration~\eqref{eq:on_shell_point} then we will call this limit `on-shell limit' and $M_{abe \to abe}^{(0)} = 0$ in this limit.

\subsection{The off-shell limit of single-cuts}

In this section, we will provide a class of suitable directions to evaluate the integrands in~\eqref{4_point_one_loop_from_tree_single_cut_contribution} and we will show that the result is independent on the direction considered along this class. 

We set $i \epsilon=0$ in the denominators of propagators and we split the tree-level amplitude $\hat{M}_{abe}^{(0)}$ into two terms
\begin{equation}
\label{tree_level_amplitude_abe_to_abe}
\hat{M}_{abe}^{(0)}= \hat{V}_{abe}+ \hat{R}_{abe} \ne 0\, .
\end{equation}
As previously explained this amplitude is nonzero by the fact that the external momenta are off-shell along the limit.
The hats on the different terms in~\eqref{tree_level_amplitude_abe_to_abe} mean that we subtract contributions that survive in the limit of large $\theta_k$ otherwise the integrals in~\eqref{4_point_one_loop_from_tree_single_cut_contribution} may not converge. As before, these contributions are due to vertices with self-contracted propagators and can always be cancelled by introducing suitable counterterms in the Lagrangian.
The first term on the r.h.s. of~\eqref{tree_level_amplitude_abe_to_abe} contains collections of Feynman diagrams which, for on-shell values of the momenta, are singular on the branch of the kinematics at which $q=p$, $q'=p'$ and $\tilde{k}=k$. An example of these singular diagrams is reported in Figure~\ref{relevant_terms_in_single_cut_t_channel_diagram}, while the complete list of singular diagrams is represented in Figure~\ref{Factorization_contributions_in_tree_level_6_point_processes}; the diagrams are organised into three sets depending whether the singular propagator is of type $e$, $a$ or $b$. We write $\hat{V}_{abe\to abe}$ as the sum of these three collections of diagrams following the enumeration of Figure~\ref{Factorization_contributions_in_tree_level_6_point_processes}:
\begin{equation}
\label{split_of_V_3_to_3_into_V1_V2_V3}
\hat{V}_{abe}= \hat{V}^{(1)}_{abe}+\hat{V}^{(2)}_{abe}+\hat{V}^{(3)}_{abe} \, .
\end{equation}
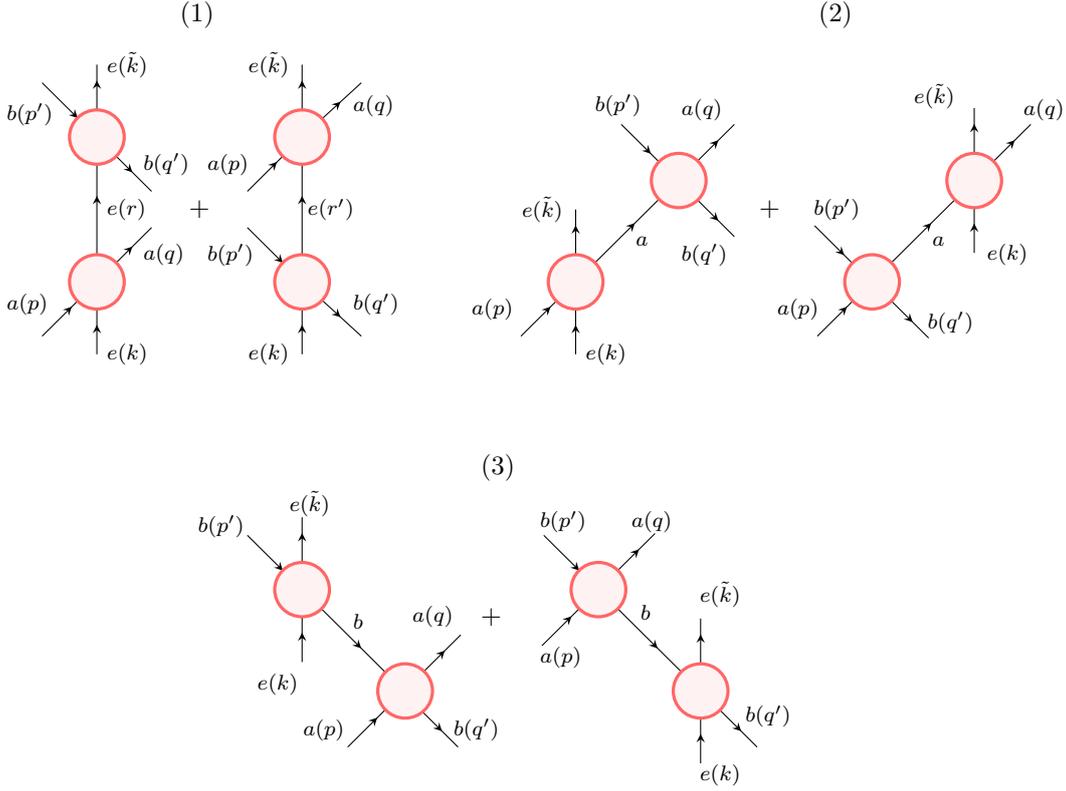
\begin{figure}
\begin{center}
\begin{tikzpicture}
\tikzmath{\y=0.6;}


\draw[directed] (0*\y,-1*\y) -- (0*\y,1*\y);
\draw[directed] (0*\y,2.2*\y) -- (0*\y,3.2*\y);
\draw[directed] (0*\y,-3.2*\y) -- (0*\y,-2.2*\y);
\draw[directed] (-1.2*\y,-2.8*\y) -- (-0.2*\y,-1.8*\y);
\draw[directed] (0*\y,-1.6*\y) -- (1.2*\y,-0.4*\y);
\draw[directed] (0*\y,1.6*\y) -- (1.2*\y,0.4*\y);
\draw[directed] (-1.2*\y,2.8*\y) -- (0*\y,1.6*\y);
\filldraw[color=red!60, fill=red!5, very thick](0*\y,1.6*\y) circle (0.6*\y);
\filldraw[color=red!60, fill=red!5, very thick](0*\y,-1.6*\y) circle (0.6*\y);

\filldraw[black] (-2.2*\y,-2.1*\y)  node[anchor=west] {\scriptsize	{$a(p)$}};
\filldraw[black] (0.8*\y,-1*\y)  node[anchor=west] {\scriptsize{$a(q)$}};
\filldraw[black] (-2.2*\y,2.1*\y)  node[anchor=west] {\scriptsize{$b(p')$}};
\filldraw[black] (0.8*\y,1*\y)  node[anchor=west] {\scriptsize{$b(q')$}};
\filldraw[black] (0*\y,-3.2*\y)  node[anchor=west] {\scriptsize{$e(k)$}};
\filldraw[black] (0*\y,0*\y)  node[anchor=west] {\scriptsize{$e(r)$}};
\filldraw[black] (0*\y,3.2*\y)  node[anchor=west] {\scriptsize{$e(\tilde{k})$}};

\draw[directed] (4.5*\y,-1*\y) -- (4.5*\y,1*\y);
\draw[directed] (4.5*\y,2.2*\y) -- (4.5*\y,3.2*\y);
\draw[directed] (4.5*\y,-3.2*\y) -- (4.5*\y,-2.2*\y);
\draw[directed] (3.3*\y,0.4*\y) -- (4.3*\y,1.4*\y);
\draw[directed] (4.5*\y,1.6*\y) -- (5.8*\y,2.8*\y);
\draw[directed] (4.5*\y,-1.6*\y) -- (5.8*\y,-2.8*\y);
\draw[directed] (3.3*\y,-0.4*\y) -- (4.5*\y,-1.6*\y);
\filldraw[color=red!60, fill=red!5, very thick](4.5*\y,1.6*\y) circle (0.6*\y);
\filldraw[color=red!60, fill=red!5, very thick](4.5*\y,-1.6*\y) circle (0.6*\y);

\filldraw[black] (5.4*\y,-2.1*\y)  node[anchor=west] {\scriptsize{$b(q')$}};
\filldraw[black] (2.2*\y,-1*\y)  node[anchor=west] {\scriptsize{$b(p')$}};
\filldraw[black] (5.4*\y,2.3*\y)  node[anchor=west] {\scriptsize{$a(q)$}};
\filldraw[black] (2.2*\y,1*\y)  node[anchor=west] {\scriptsize{$a(p)$}};
\filldraw[black] (3.1*\y,-3.2*\y)  node[anchor=west] {\scriptsize{$e(k)$}};
\filldraw[black] (4.4*\y,0*\y)  node[anchor=west] {\scriptsize{$e(r')$}};
\filldraw[black] (3.1*\y,3.2*\y)  node[anchor=west] {\scriptsize{$e(\tilde{k})$}};

\filldraw[black] (1.6*\y,4.3*\y)  node[anchor=west] {\small{$(1)$}};

\filldraw[black] (1.8*\y,0*\y)  node[anchor=west] {\small{$+$}};


\draw[directed] (10.5*\y,-1*\y) -- (10.5*\y,0*\y);
\draw[directed] (10.5*\y,-3.2*\y) -- (10.5*\y,-2.2*\y);
\draw[directed] (9.3*\y,-2.8*\y) -- (10.3*\y,-1.8*\y);
\draw[directed] (10.5*\y,-1.6*\y) -- (12.5*\y,0.4*\y);
\draw[directed] (12.98*\y,0.88*\y) -- (13.98*\y,1.88*\y);
\draw[directed] (11.5*\y,1.88*\y) -- (12.5*\y,0.88*\y);
\draw[directed] (12.98*\y,0.4*\y) -- (13.98*\y,-0.6*\y);
\filldraw[color=red!60, fill=red!5, very thick](12.76*\y,0.64*\y) circle (0.6*\y);
\filldraw[color=red!60, fill=red!5, very thick](10.5*\y,-1.6*\y) circle (0.6*\y);

\filldraw[black] (8*\y,-2.2*\y)  node[anchor=west] {\scriptsize{$a(p)$}};
\filldraw[black] (11.6*\y,-0.7*\y)  node[anchor=west] {\scriptsize{$a$}};
\filldraw[black] (12.6*\y,2.2*\y)  node[anchor=west] {\scriptsize{$a(q)$}};
\filldraw[black] (10.7*\y,2.3*\y)  node[anchor=west] {\scriptsize{$b(p')$}};
\filldraw[black] (12.6*\y,-1*\y)  node[anchor=west] {\scriptsize{$b(q')$}};
\filldraw[black] (10.5*\y,-3.2*\y)  node[anchor=west] {\scriptsize{$e(k)$}};
\filldraw[black] (9.1*\y,0*\y)  node[anchor=west] {\scriptsize{$e(\tilde{k})$}};

\draw[directed] (19.24*\y,1.24*\y) -- (19.24*\y,2.24*\y);
\draw[directed] (19.24*\y,-0.96*\y) -- (19.24*\y,0.04*\y);
\draw[directed] (15.8*\y,-2.8*\y) -- (16.8*\y,-1.8*\y);
\draw[directed] (17*\y,-1.6*\y) -- (19*\y,0.4*\y);
\draw[directed] (19.48*\y,0.88*\y) -- (20.48*\y,1.88*\y);
\draw[directed] (15.74*\y,1.88*\y-2.24*\y) -- (16.74*\y,0.88*\y-2.24*\y);
\draw[directed] (17.24*\y,0.4*\y-2.24*\y) -- (18.24*\y,-0.6*\y-2.24*\y);

\filldraw[color=red!60, fill=red!5, very thick](19.24*\y,0.64*\y) circle (0.6*\y);
\filldraw[color=red!60, fill=red!5, very thick](17*\y,-1.6*\y) circle (0.6*\y);

\filldraw[black] (14.7*\y,-2.2*\y)  node[anchor=west] {\scriptsize{$a(p)$}};
\filldraw[black] (18.1*\y,-0.7*\y)  node[anchor=west] {\scriptsize{$a$}};
\filldraw[black] (20.1*\y,2.2*\y)  node[anchor=west] {\scriptsize{$a(q)$}};
\filldraw[black] (15.5*\y,0*\y)  node[anchor=west] {\scriptsize{$b(p')$}};
\filldraw[black] (18*\y,-2.5*\y)  node[anchor=west] {\scriptsize{$b(q')$}};
\filldraw[black] (19.3*\y,-1*\y)  node[anchor=west] {\scriptsize{$e(k)$}};
\filldraw[black] (17.7*\y,2.5*\y)  node[anchor=west] {\scriptsize{$e(\tilde{k})$}};

\filldraw[black] (15.6*\y,4.3*\y)  node[anchor=west] {\small{$(2)$}};
\filldraw[black] (14.3*\y,0*\y)  node[anchor=west] {\small{$+$}};


\draw[directed] (4.5*\y,-10*\y) -- (4.5*\y,-9*\y);
\draw[directed] (4.5*\y,-7.8*\y) -- (4.5*\y,-6.8*\y);
\draw[directed] (3.3*\y,-7.2*\y) -- (4.5*\y,-8.4*\y);
\draw[directed] (4.5*\y,-8.4*\y) -- (6.5*\y,-10.4*\y);
\draw[directed] (6.92*\y,-10.88*\y) -- (7.92*\y,-11.88*\y);
\draw[directed] (5.5*\y,-11.88*\y) -- (6.5*\y,-10.88*\y);
\draw[directed] (6.98*\y,-10.4*\y) -- (7.98*\y,-9.4*\y);
\filldraw[color=red!60, fill=red!5, very thick](4.5*\y,-8.4*\y) circle (0.6*\y);
\filldraw[color=red!60, fill=red!5, very thick](6.76*\y,-10.64*\y) circle (0.6*\y);

\filldraw[black] (4.3*\y,-11.5*\y)  node[anchor=west] {\scriptsize{$a(p)$}};
\filldraw[black] (6.7*\y,-9*\y)  node[anchor=west] {\scriptsize{$a(q)$}};
\filldraw[black] (2*\y,-7*\y)  node[anchor=west] {\scriptsize{$b(p')$}};
\filldraw[black] (5.4*\y,-9.1*\y)  node[anchor=west] {\scriptsize{$b$}};
\filldraw[black] (7.6*\y,-11.5*\y)  node[anchor=west] {\scriptsize{$b(q')$}};
\filldraw[black] (3.3*\y,-10.5*\y)  node[anchor=west] {\scriptsize{$e(k)$}};
\filldraw[black] (4*\y,-6.5*\y)  node[anchor=west] {\scriptsize{$e(\tilde{k})$}};

\draw[directed] (13.24*\y,-12.24*\y) -- (13.24*\y,-11.24*\y);
\draw[directed] (13.24*\y,-10.04*\y) -- (13.24*\y,-9.04*\y);
\draw[directed] (9.8*\y,-7.2*\y) -- (11*\y,-8.4*\y);
\draw[directed] (11*\y,-8.4*\y) -- (13*\y,-10.4*\y);
\draw[directed] (13.48*\y,-10.88*\y) -- (14.48*\y,-11.88*\y);
\draw[directed] (9.76*\y,-9.64*\y) -- (10.76*\y,-8.64*\y);
\draw[directed] (11.24*\y,-8.16*\y) -- (12.24*\y,-7.16*\y);
\filldraw[color=red!60, fill=red!5, very thick](11*\y,-8.4*\y) circle (0.6*\y);
\filldraw[color=red!60, fill=red!5, very thick](13.24*\y,-10.64*\y) circle (0.6*\y);

\filldraw[black] (9.5*\y,-9.9*\y)  node[anchor=west] {\scriptsize{$a(p)$}};
\filldraw[black] (11.5*\y,-6.9*\y)  node[anchor=west] {\scriptsize{$a(q)$}};
\filldraw[black] (9.5*\y,-6.9*\y)  node[anchor=west] {\scriptsize{$b(p')$}};
\filldraw[black] (11.7*\y,-8.9*\y)  node[anchor=west] {\scriptsize{$b$}};
\filldraw[black] (14*\y,-11.2*\y)  node[anchor=west] {\scriptsize{$b(q')$}};

\filldraw[black] (13*\y,-12.5*\y)  node[anchor=west] {\scriptsize{$e(k)$}};
\filldraw[black] (13*\y,-8.5*\y)  node[anchor=west] {\scriptsize{$e(\tilde{k})$}};

\filldraw[black] (8.2*\y,-5.7*\y)  node[anchor=west] {\small{$(3)$}};
\filldraw[black] (8.2*\y,-9*\y)  node[anchor=west] {\small{$+$}};

\end{tikzpicture}
\caption{Collections of singular Feynman diagrams on the on-shell limit that contribute to $\hat{V}_{abe}$.}
\label{Factorization_contributions_in_tree_level_6_point_processes}
\end{center}
\end{figure}
The second contribution on the r.h.s. of~\eqref{tree_level_amplitude_abe_to_abe} contains instead all the remaining Feynman diagrams, which for general on-shell values of $p$, $p'$ and $k$ are finite and well-defined at the point $\{q=p, q'=p', \tilde{k}=k\}$.

Defining
\begin{equation}
\label{definition_of_Vab_to_ab_term_i_123}
\hat{V}^{(i)}_{ab} \equiv \frac{1}{8\pi} \sum_{e=1}^r \int_{-\infty}^{+\infty} d\theta_k \hat{V}^{(i)}_{abe}(p,p',k) \hspace{5mm} \text{with} \ i=1,2,3 \, ,
\end{equation}
and
\begin{equation}
\hat{R}_{ab} \equiv \frac{1}{8\pi} \sum_{e=1}^r \int_{-\infty}^{+\infty} d\theta_k \hat{R}_{abe}(p,p',k) 
\end{equation}
then~\eqref{4_point_one_loop_from_tree_single_cut_contribution} can be written as
\begin{equation}
\label{sisngle_cut_contribution_written_as_V_plus_R}
M_{ab}^{(1\text{-cut})} =\hat{V}_{ab}+\hat{R}_{ab} \, ,
\end{equation}
where
\begin{equation}
\label{definition_of_Vab_to_ab_term}
\hat{V}_{ab}=\hat{V}^{(1)}_{ab}+\hat{V}^{(2)}_{ab}+\hat{V}^{(3)}_{ab} \, .
\end{equation}
To avoid divergencies in the integrands on the r.h.s. of~\eqref{definition_of_Vab_to_ab_term_i_123} we set $p$, $p'$, $q$ and $q'$ off-shell and satisfying $p+p'=q+q'$. The momenta $k$ and $\tilde{k}$ come from the same cut propagator and for this reason they satisfy $\tilde{k}=k$ and need to be on-shell. After having evaluated~\eqref{definition_of_Vab_to_ab_term} for $p$, $p'$, $q$ and $q'$ off-shell we take the limit at which the momenta are on-shell and $q=p$ and $q'=p'$.

We start by evaluating $\hat{V}^{(2)}_{ab\to ab}$ and $\hat{V}^{(3)}_{ab\to ab}$. Focusing on the class of diagrams $(2)$ in Figure~\ref{Factorization_contributions_in_tree_level_6_point_processes}, with $\tilde{k}=k$, we can write
\begin{equation}
\label{expression_for_Vabe_to_abe_2}
\begin{split}
\hat{V}^{(2)}_{abe} &= M^{(0)}_{ae \to ae} (p, k , p , k) \frac{i}{p^2 - m^2_a}  M^{(0)}_{ab \to ab} (p, p' , q , q')\\
&+ M^{(0)}_{ab \to ab} (p, p' , q , q') \frac{i}{q^2 - m^2_a} M^{(0)}_{ae \to ae} (q, k , q , k) \, . 
\end{split}
\end{equation}
Plugging~\eqref{expression_for_Vabe_to_abe_2} into~\eqref{definition_of_Vab_to_ab_term_i_123} and using~\eqref{definition_necessary_for_mass_renormalization_first_time_ga_appears} we obtain
\begin{equation}
\begin{split}
\hat{V}^{(2)}_{ab} &= \Sigma_{aa} (p^2) \frac{1}{p^2 - m^2_a}  M^{(0)}_{ab \to ab} (p, p' , q , q')\\
&+ \Sigma_{aa} (q^2) \frac{1}{q^2 - m^2_a} M^{(0)}_{ab \to ab} (p, p' , q , q') \, .
\end{split}
\end{equation}
An analogous computation can be performed for $\hat{V}^{(3)}_{ab} $.
If we set  $\{q=p, q'=p'\}$ and we expand $\Sigma_{aa}$ and $\Sigma_{bb}$ around $p^2=m_a^2$ and $p'^2=m_b^2$ respectively, taking into account that $\Sigma^{(0)}_{jj}= - \delta m_j^2$, then we end up with
\begin{equation}
\label{V2ab_V3ab_to_ab_final_result}
\begin{split}
\hat{V}^{(2)}_{ab}+\hat{V}^{(3)}_{ab} =& - \frac{2  \delta m_a^2}{p^2 - m^2_a}  M^{(0)}_{ab} (p, p')-\frac{2  \delta m_b^2}{p'^2 - m^2_b}  M^{(0)}_{ab} (p, p')\\
&+ 2 \Sigma^{(1)}_{aa} M^{(0)}_{ab} (p, p') + 2 \Sigma^{(1)}_{bb} M^{(0)}_{ab} (p, p')  \, .
\end{split}
\end{equation}
The expression on the r.h.s. of~\eqref{V2ab_V3ab_to_ab_final_result} is actually ill-defined: the first row on the r.h.s. of~\eqref{V2ab_V3ab_to_ab_final_result} is indeed divergent when the external momenta are on-shell. However, it turns out that the ill-defined terms in~\eqref{V2ab_V3ab_to_ab_final_result} cancel with the contributions in~\eqref{total_counterterms_contribution_production_process} introduced by the renormalization of the propagators.

To evaluate~\eqref{definition_of_Vab_to_ab_term} we still need to compute $\hat{V}^{(1)}_{ab}$. Labelling the momenta as in the diagram class $(1)$ in Figure~\ref{Factorization_contributions_in_tree_level_6_point_processes}, with the additional constraint $\tilde{k}=k$, we can write
\begin{equation}
\label{single_cut_relevant_contribution_u_channel_not_yet_expanded}
\begin{split}
\hat{V}^{(1)}_{ab}&= \sum_{e=1}^r \Bigl(\frac{1}{8\pi}  \int^{+\infty}_{-\infty} d\theta_k \ M^{(0)}_{ae \to a e}(p,k,q,r) \ \Pi^{(R)}_e(r) \ M^{(0)}_{b e \to b e }(p', r, q', k)\\
&+\frac{1}{8\pi}  \int^{+\infty}_{-\infty} d\theta_k \ M^{(0)}_{ae \to a e}(p, r', q, k) \ \Pi^{(R)}_e(r') \  M^{(0)}_{be \to be }(p', k, q', r') \Bigl).
\end{split}
\end{equation}
To compute this term we choose once again the regulator~\eqref{definition_of_my_regulator_x} and we require~\eqref{regularization_through_x_introducing_outgoing_mass_deformations} with $p$ and $p'$ on-shell. With this choice, then it holds that 
\begin{subequations}
\label{r_and_rprime_written_in_terms_of_k_and_x}
\begin{equation}
\label{r_written_in_terms_of_k_and_x}
r=k-x \, ,
\end{equation}
\begin{equation}
\label{rprime_written_in_terms_of_k_and_x}
r'=k+x \, ,
\end{equation}
\end{subequations}
and  the retarded propagators can be expanded around the point $\mu=0$ as
\begin{equation}
\label{expansion_of_propagators_in_u_channel}
    \begin{split}
        &\Pi_e^{(R)}(r)= \frac{i}{-2 k \cdot x +\mu^2}= -\frac{i}{2 k \cdot x}-\frac{i \mu^2}{4 (k \cdot x)^2} + O(\mu) \, , \\
        &\Pi_e^{(R)}(r')= \frac{i}{+2 k \cdot x +\mu^2}= +\frac{i}{2 k \cdot x}-\frac{i \mu^2}{4 (k \cdot x)^2}+ O(\mu) \, .
    \end{split}
\end{equation}
The $i \epsilon$ factors in the denominators of propagators can be removed since, as already mentioned in Section~\ref{sec:double-cut-elastic}, for $\mu>0$ and $\theta_x \in \mathbb{R}$ the momenta $r$ and $r'$ are never on-shell.
Plugging~\eqref{expansion_of_propagators_in_u_channel} into~\eqref{single_cut_relevant_contribution_u_channel_not_yet_expanded} and expanding the tree-level amplitudes appearing in the integrand of~\eqref{single_cut_relevant_contribution_u_channel_not_yet_expanded} around $\mu=0$ we obtain the following result
\begin{equation}
\label{V1ab_to_ab_final_result}
\begin{split}
\hat{V}^{(1)}_{ab}&=\frac{i}{8\pi} \sum_{e=1}^r \int^{+\infty}_{-\infty} d\theta_k \  \frac{\partial}{\partial k^2} \Bigl( M^{(0)}_{ae}(p, k) M^{(0)}_{be}(p', k) \Bigl)\Bigl|_{k^2=m^2_e}\\
&- \frac{i}{8\pi} \sum_{e=1}^r \frac{1}{m_e^2}    M^{(0)}_{ae}(p, \infty) M^{(0)}_{be}(p', \infty) + O(\mu).
\end{split}
\end{equation}
A detailed derivation of~\eqref{V1ab_to_ab_final_result} is reported in Appendix~\ref{appendix_on_off_shell_limit_of_tree_level_amplitudes}. Remarkably we note that the expression in~\eqref{V1ab_to_ab_final_result} is finite in the limit $\mu \to 0$ and does not depend on the parameter $\theta_x$ appearing in~\eqref{definition_of_my_regulator_x}. 
Summing~\eqref{V1ab_to_ab_final_result} with~\eqref{V2ab_V3ab_to_ab_final_result}, at the leading order in $\mu$ we obtain
\begin{equation}
\label{final_result_single_cut_ill_defined_diagrams}
\begin{split}
\hat{V}_{ab}&=\frac{i}{8\pi} \sum_{e=1}^r \int^{+\infty}_{-\infty} d\theta_k \  \frac{\partial}{\partial k^2} \Bigl( M^{(0)}_{ae}(p, k) M^{(0)}_{be}(p', k) \Bigl)\Bigl|_{k^2=m^2_e}\\
&- \frac{i}{8\pi} \sum_{e=1}^r \frac{1}{m_e^2}    M^{(0)}_{ae}(p, \infty) M^{(0)}_{be}(p', \infty)+ 2 \Sigma^{(1)}_{aa} M^{(0)}_{ab} (p, p') + 2 \Sigma^{(1)}_{bb} M^{(0)}_{ab} (p, p') \\
&- \frac{2  \delta m_a^2}{p^2 - m^2_a}  M^{(0)}_{ab} (p, p')-\frac{2  \delta m_b^2}{p'^2 - m^2_b}  M^{(0)}_{ab} (p, p')\, .
\end{split}
\end{equation}
Apart from the ill-defined terms in the last row of~\eqref{final_result_single_cut_ill_defined_diagrams}, which cancel the ill-defined terms in the first-row of~\eqref{total_counterterms_contribution_production_process} introduced by the renormalization of propagators, $\hat{V}_{ab\to ab}$ does not depend on the direction adopted (within the class of directions spanned by the parameter $\theta_x$ in~\eqref{definition_of_my_regulator_x}) to reach the elastic on-shell point
$$
p^2=q^2=m_a^2 \,, \ p'^2=q'^2=m_b^2 \,, \ k^2=\tilde{k}^2=m_e^2 \,, \ \theta_q=\theta_p \, , \ \theta_{q'}=\theta_{p'} \, , \ \theta_{\tilde{k}}=\theta_{k}.
$$

In the next section, we will compute $\hat{R}_{ab}$ in~\eqref{sisngle_cut_contribution_written_as_V_plus_R}, which is the missing ingredient to obtain the single-cut contribution to the one-loop amplitude. We recall that $\hat{R}_{ab}$ is generated by all the remaining Feynman diagrams, which in the limit $\mu \to 0$ are individually finite and well-defined.
To compute this term, we will exploit the fact that the theory satisfies Property~\ref{Condition_tree_level_elasticity_introduction}.

\subsection{On-shell limit of single-cuts in elastic amplitudes}
\label{section_on_properties_of_tree_level_inelastic_amplitudes}

If instead of computing $\hat{M}_{abe}^{(0)}$ following the off-shell limit previously discussed we compute it along a direction in which all the external particles are on-shell then by Property~\ref{Condition_tree_level_elasticity_introduction} it needs to hold that
\begin{equation}
\label{on_shell_limit_of_a_3_to_3_non_elastic_amplitude}
    \hat{M}_{abe}^{(0,\text{on})}= \hat{V}_{abe}^{(\text{on})}+ \hat{R}_{abe}=0.
\end{equation}
We remind that this is the case only if we set $i \epsilon=0$ in the denominators of propagators, as we will do.
The superscript `on' means that we are computing this tree amplitude keeping all the particles on-shell at each step in the limit. To perform such an on-shell limit we consider the momenta of the incoming particles $p$, $p'$ and $k$ fixed and on-shell in the process~\eqref{3_to_3_tree_level_process}; then we move the momentum $\tilde{k}$ (which is also on-shell) sending $\theta_{\tilde{k}} \to \theta_k$. The term $\hat{R}_{abe}$ is the same in~\eqref{tree_level_amplitude_abe_to_abe} and~\eqref{on_shell_limit_of_a_3_to_3_non_elastic_amplitude} since it is reproduced by summing over Feynman diagrams which are finite at the elastic point~\eqref{eq:on_shell_point}
and does not depend on the limit adopted to reach this on-shell point.
Due to this fact, defining
\begin{equation}
\label{definition_of_Vab_to_ab_term_i_123_onshell_limit}
\hat{V}^{(\text{on})}_{ab} \equiv \frac{1}{8\pi} \sum_{e=1}^r \int_{-\infty}^{+\infty} d\theta_k \hat{V}^{(\text{on})}_{abe}(p,p',k) \, ,
\end{equation}
then it needs to hold due to \eqref{on_shell_limit_of_a_3_to_3_non_elastic_amplitude} that
\begin{equation}
\hat{R}_{ab} = -\hat{V}^{(\text{on})}_{ab} \,.
\end{equation}
In Appendix~\ref{app:on_shell_limit} we report a detailed computation of $\hat{V}^{(\text{on})}_{ab}$, through which we prove that
\begin{equation}
\label{final_result_single_cut_well_defined_diagrams}
\begin{split}
\hat{R}_{ab} =& -\bigl( \Sigma^{(1)}_{aa}+ \Sigma^{(1)}_{bb} \bigl) M^{(0)}_{ab}(p, p')+\delta m_a^2 \frac{\partial}{\partial p^2} M^{(0)}_{ab}(p, p' )\Bigl|_{p^2=m_a^2}+\delta m_b^2 \frac{\partial}{\partial p'^2} M^{(0)}_{ab}(p, p' )\Bigl|_{p'^2=m_b^2}\\
&- \frac{i}{8 \pi} \sum_{e=1}^r \int_{-\infty}^{+\infty} d \theta_k \frac{\partial}{\partial k^2} \Bigl( M^{(0)}_{ae} (p, k) M^{(0)}_{b e} (p', k) \Bigl)\Bigl|_{k^2=m^2_e}\\
&-\frac{i \ai}{8 \pi} \theta_{p p'} \coth{\theta_{p p'}} M^{(0)}_{ab} (p, p') \, \sum_{e=1}^r m_e^2-\frac{1}{2} \Bigl( \frac{\delta m_a^2}{m_a^2} +\frac{\delta m_b^2}{m_b^2} \Bigl) M^{(0)}_{ab}(p, p' )\\
&+ \frac{i \ai}{8 \pi} \theta_{p p'}\frac{\partial}{\partial \theta_p} M^{(0)}_{ab} (p, p') \sum_{e=1}^r \, m_e^2+ \frac{i (\ai)^2}{8 \pi} m_a^2 m_b^2  \sum_{e=1}^r m_e^2\\
&+\frac{i}{\pi} m_a m_b \sinh{\theta_{pp'}}  \sum_{e=1}^r \frac{\partial}{\partial \theta_{p p'}} \ \text{p.v.} \int^{+\infty}_{-\infty} d\theta_k \,  S^{(0)}_{ae} (p, k ) S^{(0)}_{b e } (p', k )  \,.
\end{split}
\end{equation}
In the expression above the constant $\ai$ corresponds to the common value of the amplitudes at infinity properly normalised by the masses of the scattered particles, as defined in~\eqref{values_of_tree_level_amplitudes_at_infty}.
We stress that $\hat{V}_{abe \to abe}^{(\text{on})}$ and $\hat{V}_{abe \to abe}$ are both obtained by summing over the diagrams in Figure~\ref{Factorization_contributions_in_tree_level_6_point_processes}. However, they are evaluated at two different kinematical configurations and their limits to the point~\eqref{eq:on_shell_point} are different.
Finally if we now sum the expressions in~\eqref{total_counterterms_contribution_production_process}, \eqref{eq_double_cut_contribution_elastic}, \eqref{final_result_single_cut_ill_defined_diagrams} and~\eqref{final_result_single_cut_well_defined_diagrams} we obtain formula~\eqref{eq:final_result_amplitude} for the renormalized one-loop amplitude.

\section{Integrable counterterms and one-loop S-matrices}
\label{sec:integr_counterterms}

In~\cite{Polvara:2023vnx} one-loop production amplitudes were computed starting from tree-level integrable Lagrangians of type~\eqref{eq0_1} and it was shown that the contribution 
to the amplitudes arising from the black and blue terms in~\eqref{eq:renormalised_Lagrangian} (the analogue of $M^{(1\text{-loop})}_{ab}+ M^{(\text{ct.I})}_{a b}$)  was nonzero in general. These terms correspond to the part of the Lagrangian containing the renormalized quantities, and the masses and field counterterms, respectively.
To avoid production amplitudes at one loop it was necessary to introduce the additional counterterms $\delta C^{(n)}_{a_1 \cdots a_n}$, depicted in red in~\eqref{eq:renormalised_Lagrangian}.
These counterterms will affect the elastic amplitudes as well through the contribution $M^{(\text{ct.II})}_{a b}$, defined in~\eqref{eq:countertermsII}. In this section, we will provide a universal expression for this contribution in theories having mass ratios which do not renormalise at one loop and we will use it to write a universal expression for one-loop renormalized S-matrices in terms of tree-level S-matrices. Then we will check our expression on well-known examples of integrable theories of this type. 

Even if it is beyond the purpose of this paper, we recall that in certain integrable theories these counterterms arise by integrating out certain auxiliary fields in apparently different models, as is the case of the generalised sine-Gordon theories considered in~\cite{Hoare:2010fb}. It should be interesting to check if a similar mechanism to the one used in~\cite{Hoare:2010fb} applies more in general also for the class of theories considered here.

\subsection{Theories with all masses scaling in the same manner}
\label{sec:mass-scale-same}

As aforementioned in Section \ref{sec:main_results_conventions}, the counterterms $\delta C^{(n)}_{a_1 \cdots a_n}$ are fixed by imposing the renormalization condition (3), i.e., absence of inelastic and production processes at one loop.
While fixing these counterterms for general Lagrangians of type~\eqref{eq0_1} remains an open problem, which is possible only on a case-by-case study of different theories, in~\cite{Polvara:2023vnx} it was explained how to set these counterterms universally for Lagrangian having mass ratios that do not renormalize at one-loop. In particular, based on the results of~\cite{Polvara:2023vnx} we observe the following property.
\begin{mytheorem}
\label{Property_masses_renormalization}
Given a tree-level purely elastic theory of type~\eqref{eq0_1} with mass counterterms satisfying
\begin{equation}
\label{eq:mass_renormalz_condition}
 \frac{\delta m_a^2}{m_a^2}= \dd \quad \forall \ a=1, \dots, r \ \textrm{and} \ \dd \in \mathbb{R} \,,
\end{equation}
then if we set
\begin{equation}
\label{eq:couplings-special-case3}
\frac{\delta C_{a b c}^{(3)}}{C_{a b c}^{(3)}}=\gamma_3 \quad \textrm{with} \ \gamma_3 \in \mathbb{R}
\end{equation}
and
\begin{equation}
\label{eq:couplings-special-casen}
\frac{\delta C_{a_1 \dots a_n}^{(n)}}{C_{a_1 \dots a_n}^{(n)}} = (n-2) \gamma_3 + (3- n)\gamma \ \ \ , \ \ \ n\geq 4\,,
\end{equation}
all one-loop production amplitudes vanish.
\end{mytheorem} 

Here we briefly sketch how this property can be proved, while we remand the reader to~\cite{Polvara:2023vnx} for a more complete discussion. Let us make the following ansatz for the counterterms
\begin{equation}
\label{eq:counterterm_ansatz}
\frac{\delta C^{(n)}_{a_1 \cdots a_n}}{C^{(n)}_{a_1 \cdots a_n}}= \gamma_n \,,
\end{equation}
where $\{\gamma_n\}^{\infty}_{n=3}$ are parameters independent of the particle types, and consider the inelastic scattering process in~\eqref{eq:inel_scattering}, with $\{a, b\} \ne \{c, d\}$.
In this case, as shown in \cite{Polvara:2023vnx}, expression \eqref{eq:final_result_amplitude} reduces to
\begin{equation}
\label{eq:inel_amp_all_massess_scale_well}
\begin{split}
M^{(1\text{-loop})}_{ab \to cd}+ M^{(\text{ct.I})}_{a b \to cd}&= \sum_{j \in \{\text{prop, ext} \}} \delta m_j^2 \frac{\partial}{\partial m_j^2} M^{(0)}_{ab \to cd}.
\end{split}
\end{equation}
Note that even though the inelastic tree-level amplitude $M^{(0)}_{ab \to cd}$ vanishes on-shell, its derivatives in \eqref{eq:inel_amp_all_massess_scale_well} are evaluated off-shell and the masses are set to be on-shell only afterwards. For this reason, the expression in~\eqref{eq:inel_amp_all_massess_scale_well} is non-vanishing in general. We then need to add the contribution
\begin{equation}
\label{eq:inel_4_point_ctII}
    \begin{split}
        M_{ab \to cd}^{(\text{ct. II})}=&-i \sum_{i \in s} \frac{1}{s - m^2_i} \bigl(\delta C^{(3)}_{ab\bar{i}}  C^{(3)}_{i \bar{c} \bar{d}} + C^{(3)}_{ab\bar{i}}   \delta C^{(3)}_{i \bar{c} \bar{d}} \bigl) 
-i \sum_{j \in t} \frac{1}{t - m^2_j} \bigl(\delta C^{(3)}_{a \bar{d}\bar{j}}  C^{(3)}_{j b \bar{c}} + C^{(3)}_{a \bar{d} \bar{j}}   \delta C^{(3)}_{j b \bar{c}} \bigl)\\
&-i \sum_{l \in u} \frac{1}{u - m^2_l} \bigl(\delta C^{(3)}_{a \bar{c}\bar{l}}  C^{(3)}_{l b \bar{d}} + C^{(3)}_{a \bar{c} \bar{l}}   \delta C^{(3)}_{l b \bar{d}} \bigl)- i \delta C_{ab \bar{c} \bar{d}}^{(4)}\ ,\\
    \end{split}
\end{equation}
due to the coupling counterterms, to expression
\eqref{eq:inel_amp_all_massess_scale_well} and get the full one-loop result that we require to vanish. With this requirement, we fix the cubic and quartic counterterms. 
Let us write $M^{(0)}_{ab \to cd}$ as a function of the Mandelstam variables as in~\eqref{eq:tree-level-scattering} and consider a Feynman diagram with a particle of type $i$ propagating in the $s$-channel. The derivatives with respect to the squares of the masses will act on this propagator as follows 
\begin{equation}
\begin{split}
&\left(m_a^2 \frac{\partial}{\partial m_a^2}+m_b^2 \frac{\partial}{\partial m_b^2}+m_i^2 \frac{\partial}{\partial m_i^2}\right)\frac{1}{s - m_i^2}=\\
&\frac{1}{(s - m_i^2)^2} \left[ m_i^2 - m_a^2 \frac{\partial s}{\partial m^2_a}  - m_b^2 \frac{\partial s}{\partial m^2_b}  \right] = -\frac{1}{s - m_i^2} \,,
\end{split}
\end{equation}
where the definition of $s$ in \eqref{eq:Mandelstam} has been used.
Similar relations apply to diagrams with particles propagating in the $t$- and $u$- channels; diagrams with $4$-point couplings are instead annihilated by the derivatives since they do not contain any propagators.

Then summing~\eqref{eq:inel_amp_all_massess_scale_well} and~\eqref{eq:inel_4_point_ctII}, and using the ansatz in~\eqref{eq:counterterm_ansatz}, we obtain
\begin{equation}
\label{eq:inelastic_one_loop_amplitude}
\begin{split}
&M^{(1)}_{ab \to cd}=M^{(1\text{-loop})}_{ab \to cd}+ M^{(\text{ct.I})}_{a b \to cd}+M_{ab \to cd}^{(\text{ct. II})}\\
&=- i (2 \dd_3-\dd) \Bigl(\sum_{i \in s} \frac{ C^{(3)}_{ab\bar{i}}  C^{(3)}_{i \bar{c} \bar{d}}}{s-m^2_i} +\sum_{j \in t} \frac{ C^{(3)}_{a\bar{c}\bar{j}}  C^{(3)}_{j b \bar{d}}}{t-m^2_j}+ \sum_{k \in u} \frac{ C^{(3)}_{a\bar{d}\bar{k}}  C^{(3)}_{k b \bar{c}}}{u-m^2_k}   \Bigl)- i \dd_4 C_{ab \bar{c} \bar{d}}^{(4)} \,.
\end{split}
\end{equation}
The requirement for this amplitude to vanish is
\begin{equation}
\label{eq:constraint_coupling_renormalization}
2 \dd_3-\dd= \dd_4 \,,
\end{equation}
which follows from the vanishing of the tree-level inelastic amplitude by Property~\ref{Condition_tree_level_elasticity_introduction}. The space of solutions to~\eqref{eq:constraint_coupling_renormalization} has one degree of freedom spanned by $\dd_3$; note indeed that the factor $\gamma$ appearing in the mass corrections is fixed by summing bubble diagrams and is therefore a function of the classical  Lagrangian data (as shown in~\eqref{definition_of_mass_renormalization_and_tab_to_diagonalise_the_mass_matrix}). 
We see that $\gamma_4$ is exactly of the type described in~\eqref{eq:couplings-special-casen}.
Similarly, it is possible to prove that all on-shell production amplitudes vanish at one loop if relation~\eqref{eq:couplings-special-casen} is satisfied on all the remaining higher-order couplings.

These conditions can be derived following the same analysis of Section 3.6 of~\cite{Polvara:2023vnx} and are obtained thanks to the fact that all couplings are connected by recursion relations of type~\eqref{eq0_6}. It is important to notice that the coupling counterterms depend on a single degree of freedom $\gamma_3$. 
This is not surprising if we think that one-loop effects do not change the mass ratios.
As discussed in~\cite{Dorey:2021hub} the knowledge of what $3$-point couplings are non-zero and of the masses should provide enough information to define unambiguously the model under discussion in the space of tree-level integrable theories with Lagrangians of type~\eqref{eq0_1}. The only remaining degree of freedom should be indeed the interaction scale common to all $3$-point couplings. In this case, the parameter $\gamma_3$ corresponds to our freedom to change this scale without modifying the theory. 

Let us now show how the one-loop elastic amplitudes are affected by our choice of coupling counterterms in~\eqref{eq:couplings-special-case3} and~\eqref{eq:couplings-special-casen}.
It can be seen that the dependence of the full one-loop amplitude by $\gamma$ and $\gamma_3$ is just
\begin{multline}
     M_{ab}^{(1)}(\theta_{pp'})=2 (\dd_3-\dd) M_{ab}^{(0)}(\theta_{pp'}) +\frac{\bigl(M^{(0)}_{ab}(\theta_{p p'}) \bigl)^2}{8 m_a m_b \sinh{\theta_{p p'}}}\\
     +\frac{i \ai}{2 \pi} m_a m_b \sinh{\theta_{p p'}} \theta_{p p'} \frac{\partial}{\partial \theta_{p p'}} S^{(0)}_{ab}(\theta_{p p'}) \sum^r_{e=1} m_e^2 \\ 
     +\frac{i}{\pi} m_a m_b \sinh{\theta_{pp'}}  \sum_{e=1}^r \frac{\partial}{\partial \theta_{p p'}} \ \text{p.v.} \int^{+\infty}_{-\infty} d\theta_k \,  S^{(0)}_{ae} (\theta_{pk} ) S^{(0)}_{b e} (\theta_{p' k}) \,.
\end{multline}
This one-loop amplitude simplifies further if we set $\gamma_3 = \gamma$, which following \eqref{eq:couplings-special-casen} would imply that
\begin{equation}
\label{eq:SL-affine-toda-cond}
 \frac{\delta m_a^2}{m_a^2}= \frac{\delta C^{(n)}_{a_1 \cdots a_n}}{C^{(n)}_{a_1 \cdots a_n}} = \dd  \,.
\end{equation}
This is exactly what happens in simply-laced affine Toda models when we choose to renormalize the overall mass scale of the potential but not the interaction scale given by a dimensionless coupling $\g$, as shown for example in~\cite{Braden:1991vz,Braden:1990qa}.
Thus if the mass ratios do not renormalise at one-loop, if we require~\eqref{eq:SL-affine-toda-cond} and introduce the normalization factor in~\eqref{S_matrix_amplitude_connection}, we obtain the following closed expression for the one-loop S-matrices
\begin{equation}
\label{eq:result_S_mat_eq_m_ren_2}
\begin{split}
S^{(1)}_{ab}(\theta_{pp'})&=\frac{\bigl(S^{(0)}_{ab}(\theta_{p p'}) \bigl)^2}{2}+\frac{i \ai}{8 \pi}  \theta_{p p'} \frac{\partial}{\partial \theta_{p p'}} S^{(0)}_{ab}(\theta_{p p'}) \sum^r_{e=1} m_e^2 \\
&+\frac{i}{4 \pi}  \sum_{e=1}^r \frac{\partial}{\partial \theta_{p p'}} \ \text{p.v} \int^{+\infty}_{-\infty} d\theta_k \   S^{(0)}_{ea} (\theta_{k p} ) S^{(0)}_{e b} (\theta_{k p'}) \,,
\end{split}
\end{equation}
which depends only on the tree-level S-matrix elements and not on the counterterms.

\subsection{Analytic continuation to the complex rapidity plane}

So far we have worked with the assumption that the rapidities $\theta_{p}$ and $\theta_{p'}$ of the scattered particles were real with $\theta_{p}> \theta_{p'}$. It is however possible to analytically continue formula~\eqref{eq:result_S_mat_eq_m_ren_2} to the full complex plane. To do that we first remove the principal value prescription by surrounding the two potential singularities of the integral at $\theta_k=\theta_p$ and $\theta_k=\theta_{p'}$ with two half-circles of infinitesimal radius $\epsilon$ and subtract the contributions of these circles. 
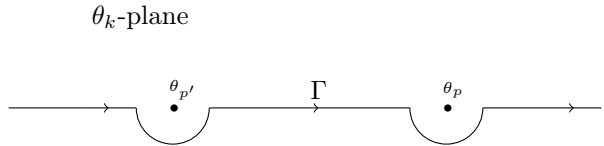
\begin{figure}[t]
\begin{center}
\begin{tikzpicture}
\tikzmath{\y=1.2;}

\filldraw[black] (5*\y,1*\y)  node[anchor=west] {\small{$\theta_{k}$-plane}};
\draw[] (4.2*\y,0*\y) -- (5*\y,0*\y);
\draw[->] (5*\y,0*\y) -- (5.3*\y,0*\y);
\draw[] (5.3*\y,0*\y) -- (5.6*\y,0*\y);
\draw[] (5.6*\y,0*\y) arc(-180:0:0.4*\y);
\draw[->] (6.4*\y,0*\y) -- (7.6*\y,0*\y);
\draw[] (7.6*\y,0*\y) -- (8.6*\y,0*\y);
\draw[] (8.6*\y,0*\y) arc(-180:0:0.4*\y);
\draw[->] (9.4*\y,0*\y) -- (10.4*\y,0*\y);
\draw[] (10.4*\y,0*\y) -- (10.7*\y,0*\y);
\filldraw[black] (5.85*\y,0*\y)  node[anchor=west] {\tiny{$\bullet$}};
\filldraw[black] (8.85*\y,0*\y)  node[anchor=west] {\tiny{$\bullet$}};
\filldraw[black] (5.85*\y,0.2*\y)  node[anchor=west] {\tiny{$\theta_{p'}$}};
\filldraw[black] (8.85*\y,0.2*\y)  node[anchor=west] {\tiny{$\theta_{p}$}};

\filldraw[black] (7.4*\y,0.2*\y)  node[anchor=west] {\small{$\ip$}};

\end{tikzpicture}
\caption{Integration path entering the analytically continued formula \eqref{eq:result_S_mat_eq_m_ren} for the one-loop S-matrix.}
\label{Defermation_of_real_thetak_integration_line}
\end{center}
\end{figure}
Labelling the new integration path by $\ip$, which is shown in Figure~\ref{Defermation_of_real_thetak_integration_line}, the formula for the one-loop renormalized S-matrix can be written as
\begin{equation}
\label{eq:result_S_mat_eq_m_ren}
\begin{split}
S^{(1)}_{ab}(\theta_{pp'})&=\frac{\bigl(S^{(0)}_{ab}(\theta_{p p'}) \bigl)^2}{2}+\frac{i \ai}{8 \pi}  \theta_{p p'} \frac{\partial}{\partial \theta_{p p'}} S^{(0)}_{ab}(\theta_{p p'}) \sum^r_{e=1} m_e^2 \\
&+\frac{i}{4 \pi}  \sum_{e=1}^r \frac{\partial}{\partial \theta_{p p'}} \int_{\ip} d\theta_k S^{(0)}_{ea} (\theta_{k p} ) S^{(0)}_{e b} (\theta_{k p'})\\
&+\frac{1}{16}  \left( \frac{M^{(0)}_{aa}(0)}{m_a^2}- \frac{M^{(0)}_{bb}(0)}{m_b^2} \right) \frac{\partial}{\partial \theta_{p p'}} S^{(0)}_{ab}(\theta_{p p'}) \,.
\end{split}
\end{equation}
We can now move the rapidities $\theta_p$ and $\theta_{p'}$ in the complex plane; this is possible as far as there are no singularities of $S^{(0)}_{ea}$ and $S^{(0)}_{eb}$ overlapping the integration path. If a singularity is approaching $\ip$ then it drags the integration path which needs to be deformed to avoid poles on the path. In summary: formula~\eqref{eq:result_S_mat_eq_m_ren} is the way to analytically continue to all values of $\theta_p$,$\theta_{p'} \in \mathbb{C}$. We will return to this point later when we will discuss the pole structure of the one-loop S-matrix.

It is interesting to note that the tree-level S-matrix satisfies:
\begin{equation}
\label{eq:crossing_equation}
S^{(0)}_{ea}(\theta)=-S^{(0)}_{\bar{e} a}(\theta \pm i \pi) \,.
\end{equation}
This is clear by the fact that under a shift $\theta \to \theta \pm i \pi$ the Mandelstam variables $s$ and $t$ are exchanged in 
the tree-level amplitude $M^{(0)}_{ea}(\theta)$ and therefore it needs to hold that
\begin{equation}
M^{(0)}_{ea}(\theta)=M^{(0)}_{\bar{e}a}(\theta \pm i \pi) \,.
\end{equation}
This is known as crossing symmetry and is a property satisfied individually by each of the tree-level Feynman diagrams.
Note that the S-matrix is connected to $M^{(0)}_{ea}(\theta)$ through the normalization factor~\eqref{S_matrix_amplitude_connection} and for this reason it satisfies~\eqref{eq:crossing_equation}. 

From this fact, it follows that the function defined by
\begin{equation}
\sum_{e=1}^r S^{(0)}_{ea}(\theta_{kp}) S^{(0)}_{eb}(\theta_{kp'}),
\end{equation}
is periodic in $\theta_k$ with period $i \pi$.
We then introduce an auxiliary parameter $\beta$ and use this periodicity to rewrite the integral on $\Gamma$ appearing in~\eqref{eq:result_S_mat_eq_m_ren} as the integral on a closed contour $\Gamma_n$ defined as a rectangle of infinite horizontal side and vertical side equal to $i n \pi$ with $n \in \mathbb{N}$ and $n \ge 1$. Then we can write
\begin{equation}
\label{eq:introducing_beta}
\begin{split}
\sum^r_{e=1}\int_{\ip} d\theta_k S^{(0)}_{ea} (\theta_{k p} ) S^{(0)}_{e b} (\theta_{k p'}) &= \lim_{\beta \to 0}  \sum^r_{e=1} \int_{\ip} d\theta_k e^{i \beta \theta_k }S^{(0)}_{ea} (\theta_{k p} ) S^{(0)}_{e b} (\theta_{k p'})\\
&=\lim_{\beta \to 0} \frac{1}{1 - e^{- n \pi \beta}} \sum^r_{e=1} \oint_{\ip_n} d\theta_k e^{i \beta \theta_k }S^{(0)}_{ea} (\theta_{k p} ) S^{(0)}_{e b} (\theta_{k p'}) \,.
\end{split}
\end{equation}
The new integration contour $\ip_n$ is shown in Figure~\ref{Clesed_Gamma_integration_line}.
\begin{figure}
\begin{center}
\begin{tikzpicture}
\tikzmath{\y=1.2;}

\draw[->] (4*\y,0*\y) -- (5.3*\y,0*\y);
\draw[] (5.3*\y,0*\y) -- (5.6*\y,0*\y);
\draw[] (5.6*\y,0*\y) arc(-180:0:0.4*\y);
\draw[->] (6.4*\y,0*\y) -- (7.6*\y,0*\y);
\draw[] (7.6*\y,0*\y) -- (8.6*\y,0*\y);
\draw[] (8.6*\y,0*\y) arc(-180:0:0.4*\y);
\draw[->] (9.4*\y,0*\y) -- (11*\y,0*\y);
\draw[] (11*\y,0*\y) -- (12*\y,0*\y);
\draw[->] (12*\y,0*\y) -- (12*\y,1*\y);
\draw[] (12*\y,1*\y) -- (12*\y,2*\y);
\draw[->] (12*\y,2*\y) -- (11*\y,2*\y);
\draw[] (11*\y,2*\y) -- (9.4*\y,2*\y);
\draw[] (8.6*\y,2*\y) arc(-180:0:0.4*\y);
\draw[->] (8.6*\y,2*\y) -- (7.6*\y,2*\y);
\draw[] (7.6*\y,2*\y) -- (6.4*\y,2*\y);
\draw[] (5.6*\y,2*\y) arc(-180:0:0.4*\y);
\draw[->] (5.6*\y,2*\y) -- (5.3*\y,2*\y);
\draw[] (5.3*\y,2*\y) -- (4*\y,2*\y);
\draw[->] (4*\y,2*\y) -- (4*\y,1*\y);
\draw[] (4*\y,1*\y) -- (4*\y,0*\y);

\filldraw[black] (5.85*\y,0*\y)  node[anchor=west] {\tiny{$\bullet$}};
\filldraw[black] (8.85*\y,0*\y)  node[anchor=west] {\tiny{$\bullet$}};
\filldraw[black] (5.85*\y,0.2*\y)  node[anchor=west] {\tiny{$\theta_{p'}$}};
\filldraw[black] (8.85*\y,0.2*\y)  node[anchor=west] {\tiny{$\theta_{p}$}};

\filldraw[black] (5.85*\y,2*\y)  node[anchor=west] {\tiny{$\bullet$}};
\filldraw[black] (8.85*\y,2*\y)  node[anchor=west] {\tiny{$\bullet$}};
\filldraw[black] (5.85*\y,2.2*\y)  node[anchor=west] {\tiny{$\theta_{p'}+i n \pi$}};
\filldraw[black] (8.85*\y,2.2*\y)  node[anchor=west] {\tiny{$\theta_{p}+i n \pi$}};

\filldraw[blue] (5.85*\y,0.4*\y)  node[anchor=west] {\tiny{$\times$}};
\filldraw[blue] (5.85*\y,1*\y)  node[anchor=west] {\tiny{$\times$}};

\filldraw[red] (8.85*\y,0.7*\y)  node[anchor=west] {\tiny{$\times$}};
\filldraw[red] (8.85*\y,1.1*\y)  node[anchor=west] {\tiny{$\times$}};
\filldraw[red] (8.85*\y,1.5*\y)  node[anchor=west] {\tiny{$\times$}};

\filldraw[black] (7.4*\y,-0.3*\y)  node[anchor=west] {\small{$\ip_n$}};
\end{tikzpicture}
\caption{Contour used to perform the integral of the product of the tree-level S-matrices.  The coloured marks correspond to the poles of the two tree-level S-matrices in the integrand.}
\label{Clesed_Gamma_integration_line}
\end{center}
\end{figure}
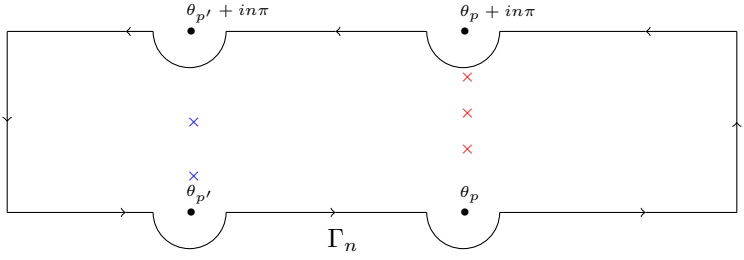
Taylor expanding the numerator and denominator of the expression on the r.h.s. of~\eqref{eq:introducing_beta} in $\beta$ and taking the limit $\beta \to 0$ we obtain finally that
\begin{equation}
\label{eq:integral}
\sum^r_{e=1}\int_{\ip} d\theta_k S^{(0)}_{ea} (\theta_{k p} ) S^{(0)}_{e b} (\theta_{k p'}) = \frac{i}{n \pi} \sum^r_{e=1} \oint_{\ip_n} d\theta_k \theta_k S^{(0)}_{ea} (\theta_{k p} ) S^{(0)}_{e b} (\theta_{k p'}) \,.
\end{equation}
The integral can now be computed by using Cauchy's residue theorem in terms of the poles and residues of the tree-level S-matrices. 
Then formula~\eqref{eq:result_S_mat_eq_m_ren} for the renormalized one-loop S-matrices can be written as
\begin{equation}
\label{eq:result_S_mat_eq_m_ren_3}
\begin{split}
S^{(1)}_{ab}(\theta_{pp'})&=\frac{\bigl(S^{(0)}_{ab}(\theta_{p p'}) \bigl)^2}{2}+\frac{i \ai}{8 \pi}  \theta_{p p'} \frac{\partial}{\partial \theta_{p p'}} S^{(0)}_{ab}(\theta_{p p'}) \sum^r_{e=1} m_e^2 \\
&-\frac{1}{4n \pi^2}  \sum_{e=1}^r \frac{\partial}{\partial \theta_{p p'}} \oint_{\ip_n} d\theta_k \theta_k S^{(0)}_{ea} (\theta_{k p} ) S^{(0)}_{e b} (\theta_{k p'})\\
&+\frac{1}{16}  \left( \frac{M^{(0)}_{aa}(0)}{m_a^2}- \frac{M^{(0)}_{bb}(0)}{m_b^2} \right) \frac{\partial}{\partial \theta_{p p'}} S^{(0)}_{ab}(\theta_{p p'}) \,.
\end{split}
\end{equation}

\subsection{Some examples}

In this section, we will test the simplified expression \eqref{eq:result_S_mat_eq_m_ren_3} for some integrable models and then compare it with the one-loop expansion of their bootstrapped S-matrices. For this purpose, we recall from~\cite{Braden:1989bu} the following building blocks for the bootstrapped S-matrices:
\begin{equation}
\label{eq:building-blocks}
    (x)_{\theta} =  \frac{\sinh\left( \frac{\theta}{2} + \frac{i \pi x}{2h} \right)}{\sinh\left( \frac{\theta}{2} - \frac{i \pi x}{2h} \right)} \ \ \ \textrm{,} \ \ \ 
    \{x\}_{\theta} =  \frac{(x-1)_{\theta} (x+1)_{\theta}}{(x-1+B)_{\theta} (x+1-B)_{\theta}} \,.
\end{equation}
In the expressions above $h$ is an integer specific to each model and $B$ is a function of the Lagrangian coupling $\g$ defined by
\begin{equation}
\label{eq:B_function_of_g}
B = \frac{\g^2}{2\pi}\frac{1}{1+\g^2 /4\pi} \,.
\end{equation}

\paragraph{Sinh-Gordon model.}

Let us first consider the sinh-Gordon theory defined by the following Lagrangian
\begin{equation}
\label{eq:sG_Lagrangian}
    \mathcal{L}=\frac{\partial_\mu \phi  \partial^\mu \phi}{2} - \frac{m^2}{g^2} \left( \cosh(\sqrt{2} g\phi)-1 \right) \,.
\end{equation}
It is a known fact that the sinh-Gordon S-matrix is the analytically continued breather-breather S-matrix of the sine-Gordon theory and is given by~\cite{Zamolodchikov:1978xm}
\begin{equation}
\label{eq:sG_S_Matrix}
    S(\theta) = \frac{\sinh{\theta} - i \sin \bigl(\frac{\pi B}{2}\bigl)}{\sinh{\theta} + i \sin \bigl(\frac{\pi B}{2} \bigl)} \,,
\end{equation}
where $B$ is given in~\eqref{eq:B_function_of_g} in terms of the Lagrangian coupling $\g$ and we label the difference between the rapidities of the scattered particles by $ \theta \equiv \theta_{p p'}$. Expressed in terms of the building blocks~\eqref{eq:building-blocks} this is equivalent to
\begin{equation}
    S(\theta) = \{1\}_{\theta} \quad \text{with} \quad h=2
\end{equation}
and corresponds to the S-matrix of the $a^{(1)}_1$ affine Toda theory in the classification presented in~\cite{Braden:1989bu}.

As usual, amplitudes can be computed perturbatively order by order in $\g$ after having expanded the Lagrangian around the stable vacuum ($\phi=0$) 
\begin{equation}
    \mathcal{L}=\frac{\partial_\mu \phi  \partial^\mu \phi}{2} -m^2 \phi^2 - m^2 \sum_{n=2}^{\infty} \frac{2^{n}}{(2n)!} g^{2n-2} \phi^{2n} \,.
\end{equation}
In this manner, the Lagrangian is written in the form~\eqref{eq0_1} and is now suitable for Feynman diagrams computations.
In this case, the tree-level S-matrix and the value of the tree-level two-to-two amplitudes at infinity (properly normalised by the masses of the scattered particles) are
\begin{equation}
    S^{(0)}(\theta) = -\frac{i \g^2 }{2 \sinh{\theta}} \ \ \ \textrm{and} \ \ \ \ai = -\frac{i \g^2}{m^2}.
\end{equation}
Inserting these tree-level data as input in~\eqref{eq:result_S_mat_eq_m_ren_3} we obtain
\begin{equation}
\label{eq:1-loop-SG}
     S^{(1)}(\theta) = -\frac{\g^4 (\pi -i \sinh {\theta}) }{8 \pi
   \sinh^2{\theta}}.
\end{equation}
The result has been easily generated by taking $n=1$ in~\eqref{eq:result_S_mat_eq_m_ren_3} and noting that the only poles contributing to the integral in~\eqref{eq:result_S_mat_eq_m_ren_3} are the collinear singularities at $\theta_k=\theta_p$ and $\theta_k=\theta_{p'}$. It is easy to show that~\eqref{eq:1-loop-SG} exactly matches the expansion of~\eqref{eq:sG_S_Matrix} at the order $\g^4$.

\paragraph{Bullogh–Dodd model.} A more interesting theory is the so-called Bullogh–Dodd model, defined by the Lagrangian
\begin{equation}
    \mathcal{L}=\frac{\partial_\mu \phi  \partial^\mu \phi}{2} - \frac{m^2}{6g^2} \left( 2 e^{g\phi}+ e^{-2g\phi} - 3 \right) \,,
\end{equation}
 whose expansion around $\phi=0$ is 
\begin{equation}
    \mathcal{L}=\frac{\partial_\mu \phi  \partial^\mu \phi}{2} -\frac{m^2 \phi^2}{2} - \frac{m^2}{6} \sum_{n=3}^{\infty} \frac{(2+ (-2)^n ) g^{n-2}}{n!} \phi^n.
\end{equation}
Unlike before, this Lagrangian contains a $3$-point coupling and the associated tree-level S-matrix has contributions from bound states propagating in the $s$-, $t$- and $u$-channels. We can write it as
\begin{equation}
\begin{split}
S^{(0)}(\theta)&= -\frac{1}{4\sinh(\theta)} \bigl(\frac{i m^2 g^2}{s-m^2}+\frac{i m^2 g^2}{t-m^2}+\frac{i m^2 g^2}{u-m^2} + 3i g^2 \bigr)\\
&= -i g^2 \frac{\cosh (2 \theta)}{\sinh(3 \theta)} \,,
\end{split}
\end{equation}
where the second equality is obtained by using the definitions for the Mandelstam variables in~\eqref{eq:Mandelstam}.  Using that here $\ai=-2ig^2/m^2$, we can insert the tree-level S-matrix in \eqref{eq:result_S_mat_eq_m_ren_3} and get the following one-loop expression:
\begin{equation}
\label{eq:1-loop-BD}
S^{(1)}(\theta)= \frac{\left( S^{(0)}(\theta) \right)^2}{2} - \frac{g^2}{4\pi} S^{(0)}(\theta)  + \frac{ig^2}{36} \frac{\partial}{\partial \theta} \left(S^{(0)}\left(\theta+\frac{i\pi}{3}\right)-S^{(0)}\left(\theta-\frac{i\pi}{3}\right) \right) \,.
\end{equation}
In this case, the result of the integral in~\eqref{eq:result_S_mat_eq_m_ren_3} was obtained by choosing the integration contour $\ip_1$ and summing over the poles in the contour, which are located at $\theta_k = \theta_p$, $\theta_k = \theta_p + i \frac{\pi}{3}$, $\theta_k = \theta_p + i \frac{2\pi}{3}$, $\theta_k = \theta_{p'}$, $\theta_k = \theta_{p'} + i \frac{\pi}{3}$ and $\theta_k = \theta_{p'} + i \frac{2 \pi}{3}$.
The conjectured non-perturbative S-matrix for this model was found in~\cite{Arinshtein:1979pb} and can be written in terms of the building blocks~\eqref{eq:building-blocks} as
\begin{equation}
S(\theta)=\{ 1 \}_{\theta} \{ 2 \}_{\theta} \ \ \ \textrm{with} \ \ \ h=3 \,.
\end{equation}
This corresponds to the S-matrix of affine Toda theories of $a^{(2)}_2$ type, obtained by folding the simply-laced theory $d^{(1)}_4$. S-matrices of $a_{2n}^{(2)}$ affine Toda models (with $n=1, \,2, \,\dots$) can be found for example in Section 7 of~\cite{Corrigan:1993xh}. It is easy to check that the one-loop expansion of this S-matrix matches exactly the expression~\eqref{eq:1-loop-BD} we derived.

\paragraph{$\mathbf{d_4^{(1)}}$ affine Toda theory.} 
The $d_4^{(1)}$ model belongs to the class of simply-laced affine Toda theories and, as such, its mass ratios do not renormalise at one-loop\footnote{This was of course the case also for the sinh-Gordon and Bullough-Dodd models previously considered since they contain a single particle.} \cite{Braden:1989bu}. A detailed study of the one-loop S-matrices of simply-laced affine Toda theories will be given in Section~\ref{sec:SL-affine-toda} while here we just focus on the properties of this particular model. 
It features four massive scalar fields $\phi_1$, $\phi_2$, $\phi_3$, and $\phi_4$ carrying masses
\begin{equation}
    m_1 = m_2 = m_3 = \sqrt{2} \ \ \ \textrm{and} \ \ \ m_4 = \sqrt{6} \ .
\end{equation}
The nonvanishing cubic couplings are
\begin{equation}
\begin{split}
   &C^{(3)}_{114}=C^{(3)}_{224}=C^{(3)}_{334}= - \sqrt{2} \g \,,\\
   &C^{(3)}_{123}= \sqrt{2} \g \quad, \quad C^{(3)}_{444}= 3 \sqrt{2} \g  \,,
   \end{split}
\end{equation}
and are responsible for the propagation of bound states in the tree-level S-matrices. 
The constant $\g$ common to all couplings is the interaction scale of the model. The exact S-matrix elements of this model can be found for example in~\cite{Braden:1989bu} in terms of the building blocks~\eqref{eq:building-blocks}. In this case $h=6$ and we can express the S-matrix elements as
\begin{equation}
\label{eq:SM-d4}
    \begin{array}{l}
    S_{12} (\theta) = S_{23} (\theta) = S_{13} (\theta) = \{ 3 \}_{\theta}  \\
    S_{11} (\theta) = S_{22} (\theta) = S_{33} (\theta) = \{ 1 \}_{\theta} \{ 5 \}_{\theta}, \\ 
    S_{14} (\theta) = S_{24} (\theta) = S_{34} (\theta) = \{ 2 \}_{\theta} \{ 4 \}_{\theta}, \\
    S_{44} (\theta) = \{ 1 \}_{\theta} \{ 3 \}_{\theta}^2 \{ 5 \}_{\theta}.
    \end{array}
\end{equation}

The one-loop S-matrices of the different two-to-two processes can be obtained in two different ways. The first way is to expand the expressions in~\eqref{eq:SM-d4} to the order $\g^4$, which corresponds to one-loop in perturbation theory. The second way is 
to extract the tree-level S-matrices by computing the order $\g^2$ of the expressions in~\eqref{eq:SM-d4} and plug these tree-level S-matrices into~\eqref{eq:result_S_mat_eq_m_ren_3}. For all processes, we obtained an exact matching of the two expressions, confirming the correctness of formula~\eqref{eq:result_S_mat_eq_m_ren_3}.
This check is quite non-trivial since unlike before, the integral \eqref{eq:integral} contributes with four terms due to the presence of four different stable particles. 

Having tested the one-loop formula \eqref{eq:result_S_mat_eq_m_ren_3} for some models, we now 
show how first- and second-order singularities at one loop are captured by our relation. In this regard, simple poles are due to bound state propagators (the same that appear in tree-level diagrams) attached to vertex corrections. Second-order poles correspond instead to Landau singularities and are associated with Feynman diagrams of Coleman-Thun type~\cite{Coleman:1978kk}.

\subsection{Unitarity cuts and Landau poles}

While expression~\eqref{eq:result_S_mat_eq_m_ren_3} was successful in reproducing the one-loop S-matrices of well-known examples of integrable theories, it is important to mention that another one-loop formula for these S-matrices was previously obtained in~\cite{Bianchi:2013nra,Bianchi:2014rfa}. 
In particular, in~\cite{Bianchi:2014rfa} a universal formula for the one-loop S-matrices of generic integrable theories was advanced and subsequently tested on different non-trivial examples, 
finding evidence that (up to missing rational terms always proportional to the tree-level S-matrix) the one-loop S-matrices of integrable theories are cut constructible.

For the class of theories defined by~\eqref{eq0_1} and satisfying Property~\ref{Condition_tree_level_elasticity_introduction}, the formula proposed in~\cite{Bianchi:2014rfa} reduces to\footnote{We write the formula using the conventions presented in Section \ref{sec:main_results_conventions}, which are slightly different compared with those used in~\cite{Bianchi:2013nra,Bianchi:2014rfa}.}
\begin{equation}
\label{Ben_Lorenzo_Valentina_cut_formula_in_purely_elastic_models}
    S_{ab}^{(\text{un. cuts})}(\theta) = \frac{\left(S^{(0)}_{ab}(\theta) \right)^2}{2}  -\frac{i}{16 \pi} \left(\frac{M^{(0)}_{aa}(0)}{m_a^2} +\frac{M^{(0)}_{bb}(0) }{m_b^2} \right) S_{ab}^{(0)}(\theta).
\end{equation}
As before, we define $\theta \equiv \theta_{p p'}$. We should then ask if formulas~\eqref{Ben_Lorenzo_Valentina_cut_formula_in_purely_elastic_models} (obtained from unitarity cuts) and~\eqref{eq:result_S_mat_eq_m_ren_3} are equivalent. This is supported by the fact that both formulas reproduce the correct one-loop S-matrix of the sinh-Gordon model, which is quite surprising given that the only term that~\eqref{Ben_Lorenzo_Valentina_cut_formula_in_purely_elastic_models} and~\eqref{eq:result_S_mat_eq_m_ren_3} have in common is the square of the tree-level S-matrix. Noting this fact, we may hope that after some hidden simplification~\eqref{eq:result_S_mat_eq_m_ren_3} always reduce to the simpler formula~\eqref{Ben_Lorenzo_Valentina_cut_formula_in_purely_elastic_models}. However, this is not the case; indeed, as soon as we consider Lagrangians of type~\eqref{eq0_1} containing $3$-point couplings (as it is the case for the Bullogh-Dodd model and the $d^{(1)}_4$ theory previously considered) the formula in~\eqref{Ben_Lorenzo_Valentina_cut_formula_in_purely_elastic_models} does not reproduce the correct result, while formula~\eqref{eq:result_S_mat_eq_m_ren_3} does. More in general
we observe that for a model with bound states formula~\eqref{Ben_Lorenzo_Valentina_cut_formula_in_purely_elastic_models} transforms simple poles at the tree-level into double poles at one-loop. While these double poles are sometimes present in the one-loop S-matrices of integrable models, they require an explanation in terms of Feynman diagrams of Coleman-Thun type~\cite{Coleman:1978kk} and are not a feature common to all processes and integrable theories. Therefore the unitarity cut method used in~\cite{Bianchi:2014rfa} misses important rational terms that would cancel these undesired singularities.

Let us be more specific about the singularity structure of the S-matrix. First of all we notice that both formulas in~\eqref{eq:result_S_mat_eq_m_ren_3} and~\eqref{Ben_Lorenzo_Valentina_cut_formula_in_purely_elastic_models} satisfy unitarity at one loop, which is\footnote{Often the convention of defining $S_{ab}^{(0)}=i T_{ab}^{(0)}$ and $S_{ab}^{(1)}=i T_{ab}^{(1)}$ is used, from which one-loop unitarity is $\text{Im} \bigl( T_{ab}^{(1)}(\theta)\bigl) = |T_{ab}^{(0)}|^2 /2$.}
\begin{equation}
\label{eq:one_loop_phys_unitarity}
\text{Re} \bigl( S_{ab}^{(1)}(\theta)\bigl) = -\frac{|S_{ab}^{(0)}|^2}{2} = \frac{\bigl(S_{ab}^{(0)}\bigl)^2}{2}  \quad \text{for} \ \theta \in \mathbb{R} \,.
\end{equation}
The second equality above follows from the fact that both the tree-level amplitudes and S-matrices are purely imaginary for real values of $\theta$. Due to this fact the r.h.s. of equations~\eqref{eq:result_S_mat_eq_m_ren_3} and~\eqref{Ben_Lorenzo_Valentina_cut_formula_in_purely_elastic_models} is imaginary but for the factor proportional to the square of the tree-level S-matrix, which is real (even if this fact is not immediately evident from~\eqref{eq:result_S_mat_eq_m_ren_3}, we need to remember that~\eqref{eq:result_S_mat_eq_m_ren_3} is a reformulation of~\eqref{eq:result_S_mat_eq_m_ren_2} which clearly satisfies this property), and unitarity is so satisfied.
Despite the real part of the one-loop S-matrix must be proportional to the square of the tree-level S-matrix this does not imply that a pole of order one in the tree-level S-matrix corresponds to a pole of order two in the one-loop S-matrix. Indeed relation~\eqref{eq:one_loop_phys_unitarity} is satisfied only for real values of the rapidity; at the pole position (associated with the propagation of a bound state) $\theta$ becomes imaginary and in general we should expect a term cancelling the second order singularity arising from the square of the tree-level S-matrix. This term must be imaginary for $\theta \in \mathbb{R}$ and does not appear in~\eqref{eq:one_loop_phys_unitarity}. We now show how formula~\eqref{eq:result_S_mat_eq_m_ren_3} captures these important additional terms, responsible for the cancellation of the unwanted second-order poles in one-loop S-matrix of the Bullough-Dodd model; these terms are also responsible for the generation of second-order poles in those theories 
in which these poles are expected from the Landau analysis.
In the following, we focus on the one-loop expression \eqref{eq:result_S_mat_eq_m_ren} instead of \eqref{eq:result_S_mat_eq_m_ren_3} since this formulation of the  S-matrix is simpler for the analysis under discussion; the same conclusions follow from~\eqref{eq:result_S_mat_eq_m_ren_3} since the two formulas are equivalent.

Let us start with the simple case of the Bullough-Dodd model, for which formula~\eqref{eq:result_S_mat_eq_m_ren} reduces to
\begin{equation}
\label{eq:result_S_mat_eq_m_renBD}
\begin{split}
S^{(1)}(\theta)&=\frac{\bigl(S^{(0)}(\theta) \bigl)^2}{2}+\frac{\g^2}{4 \pi}  \theta \frac{\partial}{\partial \theta} S^{(0)}(\theta) +\frac{i}{4 \pi}  \frac{\partial}{\partial \theta_{p p'}} \int_{\ip} d\theta_k S^{(0)} (\theta_{k p} ) S^{(0)} (\theta_{k p'}) \,.
\end{split}
\end{equation}
The tree-level S-matrix of this model has two simple poles in the physical strip at $\theta= i\pi/3$ and $\theta= 2 \pi i/3$, associated with the propagation of bound states in the $t$- and $s$-channels, respectively.
The expansion of the tree-level S-matrix around these poles is
\begin{equation}
\begin{split}
S^{(0)}(\theta)\Bigl|_{\theta \sim i \frac{\pi}{3}} &=  \frac{-i \g^2}{6 \bigl( \theta - i \frac{\pi}{3} \bigl)} -i \frac{\g^2}{\sqrt{3}} + O\left(\theta - i \frac{\pi}{3}\right) \,,\\
S^{(0)}(\theta)\Bigl|_{\theta \sim i \frac{2 \pi}{3}} &=  \frac{+i \g^2}{6 \bigl( \theta - i \frac{2 \pi}{3} \bigl)} -i \frac{\g^2}{\sqrt{3}} + O\left(\theta - i \frac{2 \pi}{3}\right) \,.
\end{split}
\end{equation}
The first two terms on the r.h.s. of~\eqref{eq:result_S_mat_eq_m_renBD} turn bound state poles at the tree level into double poles at one loop. Then we should ask ourselves how the integral in~\eqref{eq:result_S_mat_eq_m_renBD} avoids these double poles. 

We consider the limit $\theta \to i \pi/3$ (the case $\theta \to 2\pi i/3$ can be similarly studied). To reach this singular point we should move $\theta_p$ upward into the complex plane. By doing so the tower of singularities of $S^{(0)} (\theta_{k p} )$, which lie on a vertical line passing through $\theta_p$ and are spaced from each other with distance $i\pi/3$ (see Figure~\ref{fig_sing_tower_BD} where these singularities are represented with crosses), move and some of them can eventually approach the real line. When this happens the integration path $\Gamma$ must be deformed to avoid the singularity. 
\begin{figure}
\begin{center}
\includegraphics*[width=0.5\textwidth]{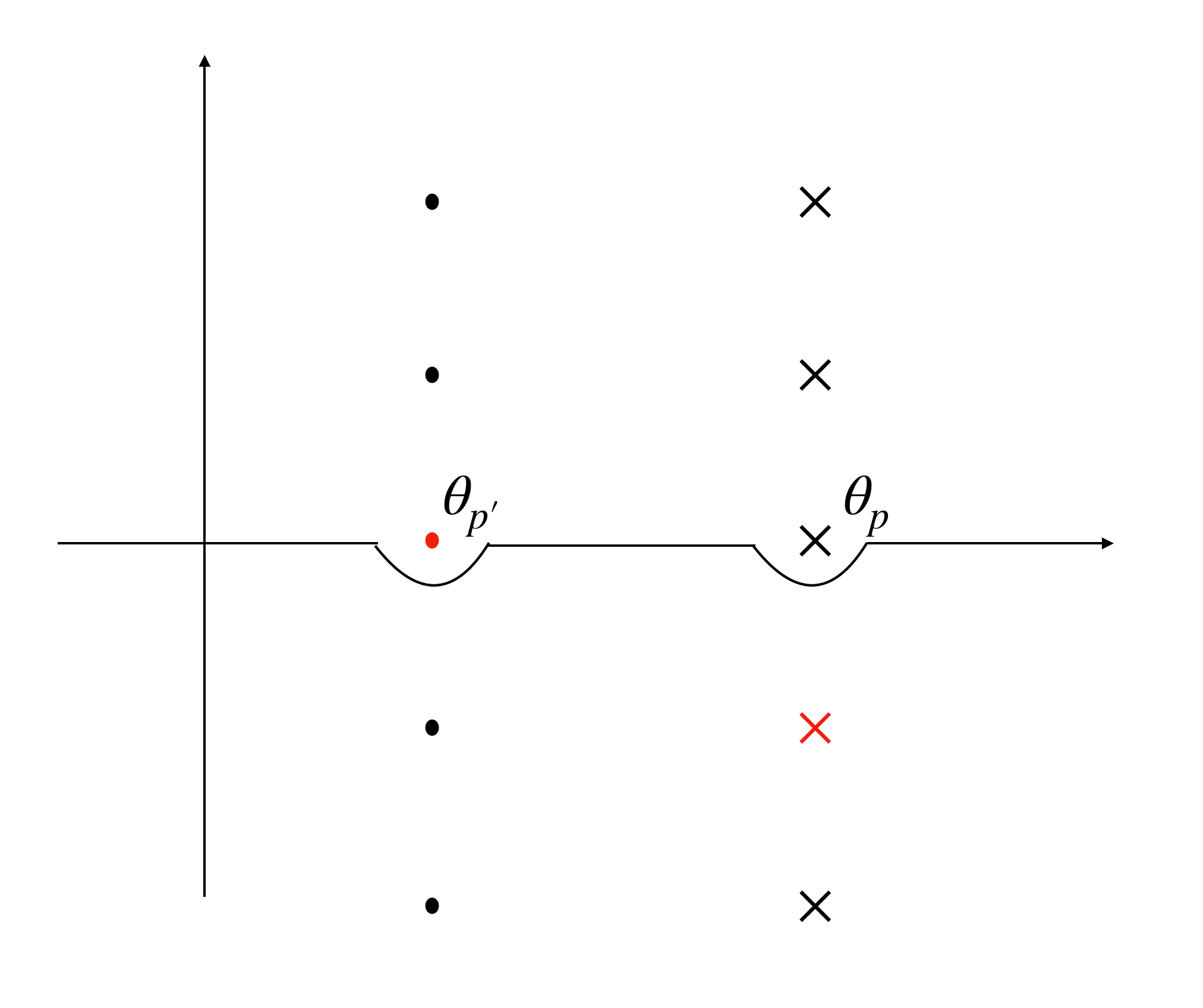} 
\end{center}
\caption{Towers of singularities of $S^{(0)} (\theta_{k p} )$ (represented with bullets) and $S^{(0)} (\theta_{k p'})$ (represented with crosses) for the Bullough-Dodd S-matrix. The two singularities coloured red trap the contour in the limit $\theta_{p p'} \to i\pi/3$ generating a pole in the integrated expression.}
\label{fig_sing_tower_BD}
\end{figure}
However, pairs of these singularities may approach the contour from opposite sides, trapping the path $\Gamma$ which cannot be deformed anymore. This is exactly what happens when we take the limit $\theta \to i \pi/3$. As depicted in Figure~\ref{fig_sing_tower_BD_trap}, in this case the collinear singularity at $\theta_k=\theta_{p'}$ and the bound state singularity at $\theta_k=\theta_{p} - i \pi/3$ (coloured red in the figure) trap the contour.
\begin{figure}
\begin{center}
\includegraphics*[width=1\textwidth]{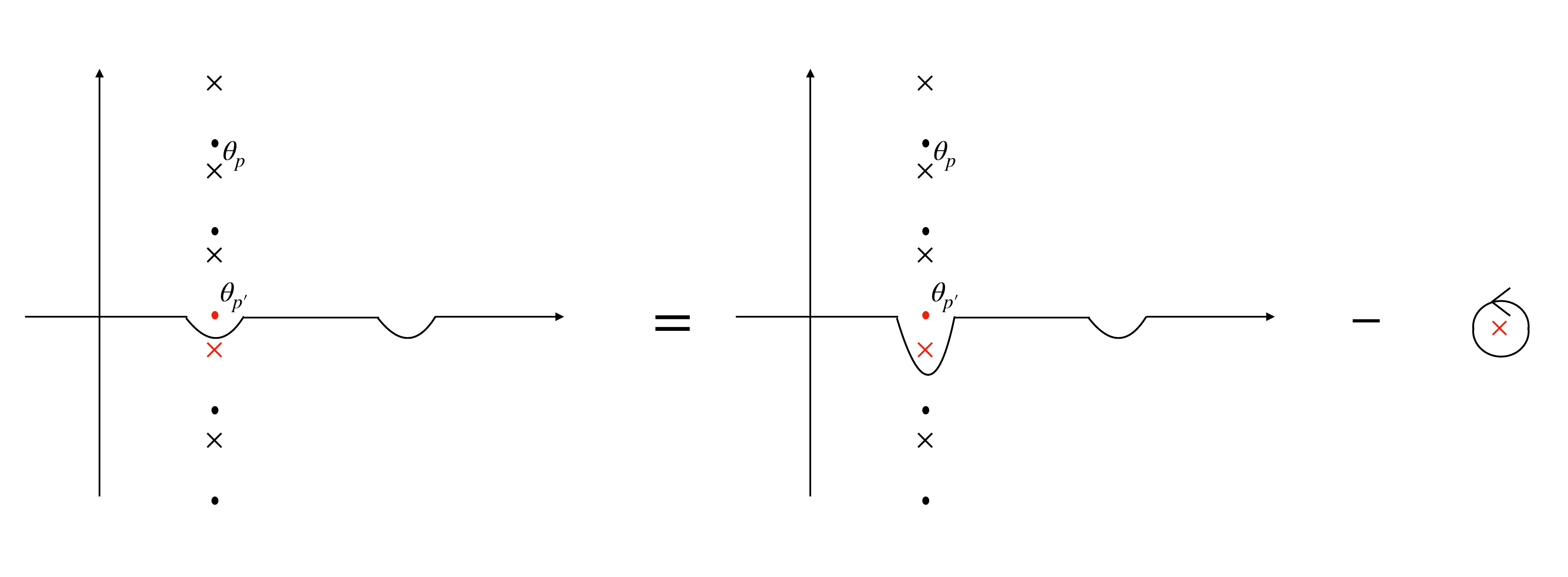} 
\end{center}
\caption{Deformation of contour $\Gamma$ when pairs of singularities trap the contour for the Bullough-Dodd model when $\theta_{p p'} \to i\pi/3$.}
\label{fig_sing_tower_BD_trap}
\end{figure}
Thus we change the contour $\Gamma$ discontinuously to bypass the singular point $\theta_{k}=\theta_{p} - i \pi/3$ and by doing so we pick its residue.  Now all the singular information is in this residue and the remaining integral with modified contour is finite as $\theta_{pp'}\rightarrow i\pi/3$. Then the singular behaviour of the integral in~\eqref{eq:result_S_mat_eq_m_renBD} at $\theta_{pp'}\rightarrow i\pi/3$ is given by
\begin{equation}
    \frac{\partial}{\partial \theta_{p p'}} \int_{\ip} d\theta_k S^{(0)} (\theta_{k p} ) S^{(0)} (\theta_{k p'}) \sim (2 \pi i) \frac{\partial}{\partial \theta_{p p'}} \textrm{Res}_{\theta_k = \theta_{p}- \frac{i \pi}{3}}(S^{(0)} (\theta_{k p} ) S^{(0)} (\theta_{k p'})).
\end{equation}
Inserting this residue into formula \eqref{eq:result_S_mat_eq_m_renBD} we correctly reproduce the simple pole at $\theta_{pp'}= i\pi/3$ of the Bullogh-Dodd model, avoiding the double pole. By doing a similar analysis we can reproduce the singular behaviour of the one-loop S-matrix at all the other poles.

This trapping mechanism generates not only simple poles corresponding to bound states but also Landau singularities, which at one loop manifest themselves as double poles (see, e.g., \cite{Dorey:2023cuq}). To see this let us look at the $d_4^{(1)}$ affine Toda model previously considered. 
It is possible to show that at the order $\g^4$, which corresponds to one-loop in perturbation theory, the S-matrix element $S_{14} (\theta_{pp'})$ has a double pole at $\theta_{p p'}= i \pi/2$.
This is a second-order Landau pole which is not related to the propagation of any bound states at tree-level~\cite{Braden:1990wx} and thus can be generated only by the integral contributions in \eqref{eq:result_S_mat_eq_m_ren}. Now these contributions are of the form
\begin{equation}
\label{eq:gen-integral}
    \sum^4_{e=1} \frac{\partial}{\partial \theta_{p p'}}  \int_{\ip} d\theta_k S^{(0)}_{e1} (\theta_{k p} ) S^{(0)}_{e 4} (\theta_{k p'}).
\end{equation}
We have to analyze which poles trap $\Gamma$ for each term in the sum, then change the contour to bypass these poles and finally pick their residue as in the Bullogh-Dodd case. 
We observe that each integral entering the sum~\eqref{eq:gen-integral} has a pair of poles trapping $\Gamma$ when we move $\theta_p \to \theta_{p'} + i\pi/2$. They are $\theta_k=\theta_{p'}+ i\pi/6$ and $\theta_k=\theta_{p}- i\pi/3$ for $e=1,2,3$, and $\theta_{p}-i\pi/6$ and $\theta_k=\theta_{p'}$ for $e=4$. By summing over all these residues we get the double pole at $\theta_{pp'} = i\pi/2$ with the correct coefficient. Similarly, we can also reproduce the correct one-loop coefficients of the Landau singularities at $\theta_{pp'} = i\pi/3$ and $\theta_{pp'} = 2i\pi/3$ in the S-matrix element $S_{44} (\theta_{pp'})$.

In summary, we find evidence that all singularities in the one-loop S-matrix, whether they are simple or anomalous thresholds, are correctly reproduced by formula~\eqref{eq:result_S_mat_eq_m_ren}; the residues at these singularities can be computed by analysing which tree-level poles trap the contour $\Gamma$. Then we deform $\Gamma$ to avoid each of these poles and in doing so we pick up a residue that contains the singular behavior. We then put this residue back into \eqref{eq:result_S_mat_eq_m_ren}. Summing over all residues we reproduce the correct one-loop coefficient of the expansion of the S-matrix at the singularity.
Clearly, this reproduces only the singular part of the S-matrix since the integral (with now a deformed contour) still contributes with a finite amount. While this analysis has been carried out only for a couple of models it would be interesting to connect it with the more standard approach of generating the residues at the poles by summing multiple Landau diagrams~\cite{Braden:1990wx}. It is possible that some universal features of these poles, such as the multiplicity of the singular one-loop networks discussed in~\cite{Dorey:2022fvs,Dorey:2023cuq}, have a correspondence with the number of trapping singularities discussed above.

Moreover, three additional conditions should be checked further: generalised unitarity, crossing and fusion. It is known that a combination of generalised unitarity and crossing should lead to S-matrices which are $2 \pi i$-periodic in $\theta$ (see, e.g., the discussion in~\cite{Dorey:1996gd})\footnote{This is the case only if the theory is purely elastic.}. From~\eqref{eq:result_S_mat_eq_m_ren} we note that there is a term proportional to $\theta$, which violates this periodicity.
Even though in all the models we studied we observed that
this $\theta$-proportional term always cancels with contributions coming from the integral we have not been able to prove this fact in full generality.
The same cancellation should allow for fusion, which also involves shifts in the rapidity. However, testing fusion is a more complicated issue, especially for theories in which the mass ratios renormalise, as is the case of nonsimply-laced affine Toda theories. In this case, the masses receive quantum corrections which completely change the analysis at weak coupling and for these theories equation~\eqref{eq:result_S_mat_eq_m_ren} does not apply anymore. We should instead study what counterterms are required by the absence of inelastic processes at one-loop and armed with these counterterms obtain the S-matrix by combining~\eqref{eq:definition_one_loop_amplitude}, \eqref{eq:final_result_amplitude} and \eqref{eq:countertermsII} and dividing by the normalization factor~\eqref{S_matrix_amplitude_connection}. It should then be interesting to check how fusion works in these more complicated theories at the level of one-loop S-matrices and confirm the conjecture advance in~\cite{Corrigan:1993xh} that the fusion relations are spoiled for certain singularities. We hope to return to some of these problems in the future. For now, we limit ourselves to checking the validity of our formula on the full class of simply-laced affine Toda theories, as reported in the next section.

\section{Simply-laced affine Toda theories}
\label{sec:SL-affine-toda}

Affine Toda models are a famous class of (1+1)-dimensional quantum field theories known to be classically integrable~\cite{Mikhailov:1980my, Olive:1984mb} and are believed to be also quantum integrable. Indeed their S-matrices were conjectured in the past using the bootstrap approach: results include simply- \cite{Arinshtein:1979pb, Freund:1989jq, Destri:1989pg, Christe:1989ah, Christe:1989my, Klassen:1989ui, Braden:1989bg, Braden:1989bu, Dorey:1990xa, Dorey:1991zp, Fring:1991gh} and nonsimply-laced models~\cite{Delius:1991cu, Delius:1991kt, Corrigan:1993xh, Dorey:1993np, Oota:1997un}, and certain supersymmetric extensions of nonsimply-laced theories \cite{Delius:1990ij,Delius:1991sv}. Despite these S-matrices, especially for simply-laced models, have passed multiple perturbative checks (see, e.g., \cite{Braden:1990wx, Braden:1991vz, Braden:1992gh, Braden:1990qa, Sasaki:1992sk, Dorey:2022fvs, Dorey:2023cuq}) a universal confirmation of their S-matrices at one-loop directly from the quantum field theory has never been achieved so far and this will be matter of this section.  
Our motivation for studying affine Toda theories comes from the fact that all these models universally satisfy Property~\ref{Condition_tree_level_elasticity_introduction} \cite{Dorey:2021hub}.

\subsection{Lagrangians and couplings}

Affine Toda theories are massive bosonic quantum field theories, each associated with a Lie algebra $\mathfrak{g}$ of rank $r$, describing the interaction of $r$ bosonic scalar fields $(\phi_1, \dots, \phi_r)$ through a Lagrangian
\begin{equation}
\label{eq:Toda_Lag}
    \mathcal{L}=\frac{\partial_{\mu} \phi_a \partial^{\mu}\phi_a}{2} - \frac{m^2}{\g^2} \left( \sum_{i=0}^{r} n_i e^{g \alpha^a_i \phi_a} -h \right) \,,
\end{equation}
where $m$ and $\g$ are the mass and the interaction scales of the model, respectively, and the sum over $a=1,\, \dots, \,r$ is implicit. The vectors $\{\alpha_i \}^r_{i=0}$ lying in $\mathbb{R}^r$ comprise both the simple roots of $\mathfrak{g}$ (which are $\alpha_1, \, \dots, \, \alpha_r$) and the lowest root $\alpha_0$
\begin{equation}
\label{eq:definition_min_root_a0}
\alpha_0 = -\sum_{i=1}^{r} n_i \alpha_i \,,
\end{equation}
necessary for the theory to have a stable vacuum at $\phi=0$, \footnote{The presence of this additional root in the potential also prevents the theory from being conformal, as discussed for example in~\cite{Corrigan:1994nd}.}. The integers  $\{ n_i \}^r_{i=1}$ are the Kac labels of $\mathfrak{g}$ and we set $n_0=1$. The constant $h$ is the Coxeter number, defined by 
\begin{equation}
h=\sum^r_{i=0} n_i \,.
\end{equation}

In this section, we focus on simply-laced theories, 
i.e., models associated with Lie algebras $\mathfrak{g}$ constructed from simply-laced Dynkin diagrams.
This class of theories is also known as the ADE series since it comprises all the $a_n$, $d_n$ algebras and some of the exceptional ones like $e_6$, $e_7$, and  $e_8$. For this class of models, all roots have the same length, which we assume to be $\sqrt{2}$.
As already mentioned in Section~\ref{sec:mass-scale-same}, a feature of simply-laced theories is that the mass ratios do not renormalise at one-loop and their one-loop S-matrices must therefore satisfy formula~\eqref{eq:result_S_mat_eq_m_ren_2} (or analogously \eqref{eq:result_S_mat_eq_m_ren_3}).

We can Taylor expand~\eqref{eq:Toda_Lag} around $\phi_a=0$ and put it in the same form as~\eqref{eq0_1}. Condition~\eqref{eq:definition_min_root_a0}, together with the fact that $n_0=1$, ensures that the expansion of~\eqref{eq:Toda_Lag} around $\phi_a=0$ does not have a linear term in $\phi_a$ and the model has a stable vacuum. Then the expansion takes the form
\begin{equation}
\label{eq:Lagrangian-AT}
    \mathcal{L} = \frac{\partial_{\mu} \phi_a \partial^{\mu}\phi_a}{2}  - \frac{(\mathbb{M}^2)_{ab}}{2}\phi_a \phi_b  - \sum_{n=3}^{\infty} \frac{1}{n!} \sum_{a_1, \dots, a_n=1}^r \tilde{C}^{(n)}_{a_1 \cdots a_n} \phi_{a_1} \cdots \phi_{a_n} \,,
\end{equation}
where the squared of the mass matrix $\mathbb{M}^2$ and the couplings $\tilde{C}^{(n)}_{a_1 \cdots a_n}$ are given by 
\begin{equation}
\label{eq:mass_matrix}
    (\mathbb{M}^2)_{ab} = m^2 \sum_{i=0}^{r} n_i \alpha_i^a \alpha_i^b \,,
\end{equation}
\begin{equation}
    \tilde{C}^{(n)}_{a_1 \cdots a_n} = m^2 g^{n-2} \sum_{i=0}^{r} n_i \alpha_i^{a_1} \cdots \alpha_i^{a_n} \,.
\end{equation}
Note that, differently from~\eqref{eq0_1}, the couplings $\tilde{C}^{(n)}_{a_1 \cdots a_n}$ cannot be directly used to compute Feynman diagrams since the mass matrix is not diagonal. 
To write the Lagrangian in the form of~\eqref{eq0_1}, suitable for computing scattering amplitudes
in perturbation theory, we need to diagonalize the squared of the mass matrix. In the basis in which this matrix is diagonal, it is possible to prove that all nonzero 3-point couplings satisfy the following area rule~\cite{Braden:1989bu,Fring:1991me}
\begin{equation}
C^{(3)}_{abc}= \pm \frac{4 \g}{\sqrt{h}} \Delta_{abc} \,,
\end{equation}
where $\Delta_{abc}$ is the area of the triangle having as sides the masses of the fusing particles $m_a$, $m_b$ and $m_c$. The signs of the couplings are related to the structure constants of the underlying Lie algebra in such a way as to ensure the cancellation of all singularities in two-to-two tree-level inelastic processes. Higher-order couplings can then be entirely written in terms of the masses and $3$-point couplings. This can be done either using universal properties of roots~\cite{Fring:1992tt} or imposing recursively absence of tree-level production processes as shown in~\cite{Gabai:2018tmm} and quickly reviewed in Section~\ref{sec:tree_level_int}. An interesting fact noted in~\cite{Dorey:2021hub} is that these two independent constructions led to the same set of recursion relations for higher-order couplings, meaning that Lie algebra information can in principle be extracted from the elastic tree-level scattering requirements. This observation was used in~\cite{Dorey:2021hub} to universally prove that all affine Toda field theories satisfy Property~\ref{Condition_tree_level_elasticity_introduction}.

\subsection{On the Coxeter geometry}
\label{sec:coxeter-geometry}

In this section we briefly review the Coxeter geometry of the root systems underlying simply-laced affine Toda field theories; this will be necessary to describe the exact S-matrices of these models. For more details on this topic, the reader is invited to look at~\cite{3d8d9559-20d1-3b1a-b70e-fcc39f9953fd}. 

Firstly, let us consider a semisimple Lie algebra $\mathfrak{g}$ of rank $r$. For any root $\alpha$ belonging to the root system of $\mathfrak{g}$ we define the Weyl reflection $w_{\alpha}$ as
\begin{equation}
    w_{\alpha} (x) = x- \frac{2 (x , \alpha)}{\alpha^2} x \ ,
\end{equation}
where $x\in\mathbb{R}^r$ and $(\cdot \,,\cdot)$ denotes the scalar product. Geometrically this operation reflects $x$ with respect to the hyperplane orthogonal to $\alpha$. Given a set of simple roots, the so-called Coxeter element $w$ is the product of Weyl reflections with respect to these simple roots. While there are various choices of ordering for this product, a useful one to describe the exact S-matrix is the Steinberg ordering~\cite{MR0106428}. 
This corresponds to splitting the simple roots into two sets, `black' ($\bullet$) and `white' ($\circ$), each containing mutually orthogonal roots. This means that two simple roots in the same set are not connected by any link in the Dynkin diagram, as shown in Figure~\ref{Dyn_diagram_with_balckwhite}.
Note that for each Lie algebra $\mathfrak{g}$ there are two possible ways of splitting the roots
into $\bullet$ and $\circ$, which comes from exchanging $\bullet\leftrightarrow \circ$; however this is inconsequential since both yield the same S-matrix~\cite{Dorey:1990xa, Dorey:1991zp}. 
\begin{figure}
\begin{center}
\begin{tikzpicture}
\tikzmath{\y=1.1;}

\filldraw[color=black,fill=black, very thick](0*\y ,0*\y) circle (0.1*\y);
\draw[] (0.1*\y,0*\y) -- (0.9*\y,0*\y);
\filldraw[color=black,fill=white, very thick](1*\y ,0*\y) circle (0.1*\y);
\draw[] (1.1*\y,0*\y) -- (1.9*\y,0*\y);
\filldraw[color=black,fill=black, very thick](2*\y ,0*\y) circle (0.1*\y);
\draw[] (2.1*\y,0*\y) -- (2.9*\y,0*\y);
\filldraw[color=black,fill=white, very thick](3*\y ,0*\y) circle (0.1*\y);
\draw[] (3.1*\y,0*\y) -- (3.9*\y,0*\y);
\filldraw[color=black,fill=black, very thick](4*\y ,0*\y) circle (0.1*\y);
\draw[] (2*\y,0.1*\y) -- (2*\y,0.9*\y);
\filldraw[color=black,fill=white, very thick](2*\y,1*\y) circle (0.1*\y);
\end{tikzpicture}
\end{center}
\caption{$E_6$ Dynkin diagram split into $\bullet$ and $\circ$ roots.}
\label{Dyn_diagram_with_balckwhite}
\end{figure}
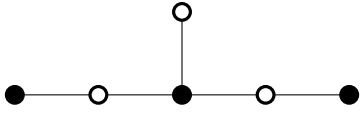
Then we define the Coxeter element as
\begin{equation}
    w= \prod_{\alpha \in \bullet}  w_{\alpha}\prod_{\beta \in \circ} w_{\beta}.
\end{equation}
The ordering of roots within each set is irrelevant since their Weyl reflections commute due to the orthogonality of the roots. 
It is possible to prove that $w$ is diagonalised by an orthonormal basis $\{ z_s \}$
\begin{equation}
\label{eq:eigenvectors_w}
    w z_s = e^{2 i \theta_s} z_s \ \ \ \text{with} \ \ \ \theta_s = \frac{\pi}{h} s
\end{equation}
with $s$ taking $r$ integer values in the set of exponents of $\mathfrak{g}$. Physically these are the spins of the conserved charges of the affine Toda model~\cite{Dorey:1990xa}.
From this fact it follows that $w$ is periodic with periodicity $h$ (i.e. $w^h=1$), being $h$ the Coxeter number of $\mathfrak{g}$. Moreover, no matter the Lie algebra considered $s=1$ and $s=h-1$ are always exponents and the eigenvectors $z_1$ and $z_{h-1}$ define the so-called spin-1 eigenplane of the Coxeter element.

Let $\{\lambda_a\}_{a=1}^{r}$ be the fundamental weights of $\mathfrak{g}$, which satisfy
\begin{equation}
(\lambda_a, \alpha_b)= \delta_{ab} \,.
\end{equation}
Then we can define a new set of roots $\{\gamma_a\}_{a=1}^{r}$
\begin{equation}
\label{eq:roots_weigths_connection}
\gamma_a=(1-w^{-1}) \lambda_a = \begin{cases} 
\alpha_a \hspace{15mm} \text{if} \  a \in \circ\\
-w^{-1} \alpha_a \hspace{5mm} \text{if} \ a \in \bullet\\
\end{cases} \,,
\end{equation}
from which generating the entire root system through the action of the Coxeter element~\cite{3d8d9559-20d1-3b1a-b70e-fcc39f9953fd}. Note that the first definition in~\eqref{eq:roots_weigths_connection} is general while the second one assumes a specific Steinberg ordering; nonetheless these definitions are equivalent. Acting with $w$ on each root $\gamma_a$ we generate an orbit of length $h$
\begin{equation}
    \Gamma_a = \{w^{-p} \gamma_a : p=0,\cdots, h-1\}.
\end{equation}
These orbits do not intersect and their union is the entire root system of $\mathfrak{g}$. It turns out that the particles of the affine Toda theories are in one-to-one correspondence with these orbits and that the masses and eigenvalues of the higher spin conserved charges can be obtained by projecting these orbits onto the different eigenplanes of $w$~\cite{Dorey:1990xa, Freeman:1991xw} (for a given spin $s$ the associated eigenplane of $w$ is spanned by the vectors $z_{s}$ and $z_{h-s}$).

For any root $\alpha$ we label the angle formed by $\alpha$ when projected onto the spin-1 eigenplane of the Coxeter element by $U_{\alpha}$. Then the action of $w$ rotates the projection of $\alpha$ on this plane by $2\pi/h$, which means we have
\begin{equation}
U_{w \alpha} = U_{\alpha}+ \frac{2 \pi}{h} \,.
\end{equation}
It is possible to choose a reference frame for the axis such that $U_{\gamma_a} = 0$ when $a \in \circ$, and $U_{\gamma_a} = -\pi/h$ when $a \in \bullet$. In Figure~\ref{fig:bw_orbit_representatives} we show the projections of $\gamma_a$, with $a \in \circ$ and $a \in \bullet$ onto the spin $1$ eigenplane of $w$.
If we define $u_{\alpha}$ by
\begin{equation}
\label{eq:angle-def}
    U_\alpha = \frac{\pi}{h} u_{\alpha},
\end{equation}
then for $\alpha\in\Gamma_a$ we have $u_{\alpha}= 2p - 1$ if $a \in \bullet$ and  $u_\alpha = 2p$ if $a \in \circ$ (with $p \in \mathbb{Z}$).

\begin{figure}
\begin{center}
\begin{tikzpicture}
\tikzmath{\y=0.8;}
\filldraw[black] (3*\y,2.5*\y)  node[anchor=west] {\footnotesize{$a \in \circ$}};
\draw[->](-3*\y,0*\y) -- (5*\y,0*\y);
\draw[->](0*\y,-2*\y) -- (0*\y,3*\y);
\draw[->,thick](0*\y,0*\y) -- (3.5*\y,0*\y);
\filldraw[black] (2*\y,-0.35*\y)  node[anchor=west] {\footnotesize{$P_1(\gamma_a)$}};

\filldraw[black] (13*\y,2.5*\y)  node[anchor=west] {\footnotesize{$a \in \bullet$}};
\draw[->](7*\y,0*\y) -- (15*\y,0*\y);
\draw[->](10*\y,-2*\y) -- (10*\y,3*\y);
\draw[->,thick](10*\y,0*\y) -- (10*\y+4*0.8*\y,-2*0.8*\y);
\filldraw[black] (11.6*\y,-1.75*\y)  node[anchor=west] {\footnotesize{$P_1(\gamma_a)$}};
\draw[][] (11.2*\y,0*\y) arc(0:-27:1.2*\y);
\filldraw[black] (11.2*\y,-0.4*\y)  node[anchor=west] {\tiny{$\frac{\pi}{h}$}};

\end{tikzpicture}
\end{center}
\caption{Projections of white (on the left) and black (on the right) orbit representatives on the spin $1$ eigenplane of the Coxeter element. After properly orienting the plane, the action of the Coxeter element corresponds to the rotation in the counter-clockwise direction by an angle $2 \pi/h$.}
\label{fig:bw_orbit_representatives}
\end{figure}
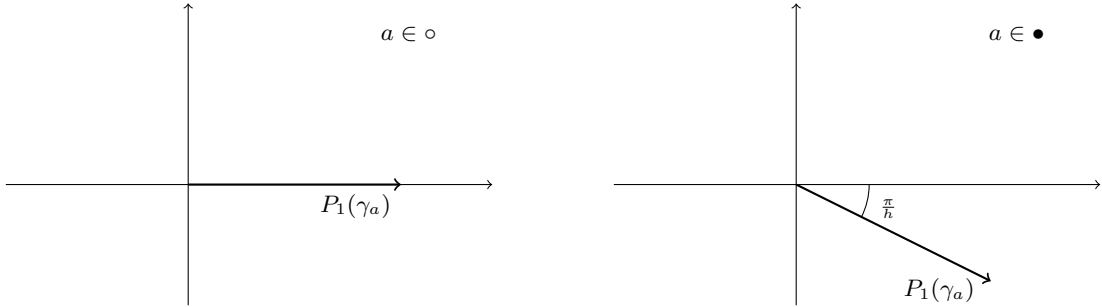

The Coxeter element is central in the construction of the exact S-matrix.

\subsection{Exact S-matrices for the ADE series}

With this brief review of the Coxeter geometry, we have now the tools to describe the exact S-matrix of simply-laced affine Toda models. Universal expressions for these S-matrices were proposed in~\cite{Dorey:1990xa, Dorey:1991zp} in terms of the building blocks 
\begin{equation}
\label{eq:bBlock_Patrick}
\{x\}_- \equiv \frac{(x-1)_- (x+1)_-}{(x-1+B)_- (x+1-B)_-} \,,
\end{equation}
where
\begin{equation}
(x)_- \equiv \frac{1}{\sinh \Bigl( \frac{\theta}{2} - \frac{i \pi x}{2h} \Bigl)} \,.
\end{equation}
The parameter $B$ appearing in~\eqref{eq:bBlock_Patrick} was previously defined in~\eqref{eq:B_function_of_g} and is a function of the coupling $\g$ appearing in the affine Toda Lagrangian. The S-matrices of the ADE series of affine Toda models can be written as functions of the roots and weights of the underlying Lie algebras as~\cite{Dorey:1990xa, Dorey:1991zp}
\begin{equation}
\label{eq:bootstrap_univ_S_mat}
S_{ab}(\theta)= \prod_{\beta \in \Gamma_b} \{1+ u_{\gamma_a \beta} \}_{-}^{(\lambda_a, \beta)} \ ,
\end{equation}
where $u_{\alpha \beta}$ is the difference between the angles $u_\alpha$ and $u_\beta$ in units of $\pi/h$ as defined in \eqref{eq:angle-def}. We remark that in \eqref{eq:bootstrap_univ_S_mat} the root $\gamma_a$, related to a particle of type $a$, is kept fixed in the orbit $\Gamma_a$ while we perform the product over all elements of $\Gamma_b$\footnote{It is possible to show that taking $d_4$ as an example one recovers all the S-matrices given in \eqref{eq:SM-d4}.}. In \cite{Dorey:1990xa, Dorey:1991zp}, with the use of Coxeter geometry tools, it was shown that these S-matrices universally satisfy unitarity, generalised unitarity, parity, crossing symmetry and fusion for the full class of simply-laced affine Toda theories.
Alternative expressions for these S-matrices can also be found in~\cite{Fring:1991gh}.

Given the exact S-matrix~\eqref{eq:bootstrap_univ_S_mat} we now derive its perturbative expansion, as given in~\eqref{eq:1-loop-def}. First, we expand the building blocks around the point $B=0$, which yields
\begin{multline}
 \{x\}_-= 1 - \frac{i \pi}{2h} B \left(1 - \frac{\pi B}{2 h} \cot \left(\frac{\pi}{h}\right) \right)\\
 \times \left[ \coth \left( \frac{\theta}{2} - \frac{i \pi}{2h} (x-1) \right) - \coth \left( \frac{\theta}{2} - \frac{i \pi}{2h} (x+1) \right) \right] + O(B^3) \,.
\end{multline}
Using the fact that at weak coupling $\g$
\begin{equation}
B=\frac{\g^2}{2 \pi} - \frac{\g^4}{8 \pi^2} + O(\g^6)
\end{equation}
then we get the following tree-level S-matrix
\begin{equation}
\label{eq:tree_level_S}
S^{(0)}_{ab}(\theta)= - \frac{i \g^2}{4h} \sum_{\beta \in \Gamma_b} (\lambda_a, \beta) \left[ \coth \left( \frac{\theta}{2} - \frac{i \pi}{2h} u_{\gamma_a \beta} \right) - \coth \left( \frac{\theta}{2} - \frac{i \pi}{2h} (u_{\gamma_a \beta}+2) \right) \right].
\end{equation}
By using that the Coxeter element acts as a rotation on the spin-1 eigenplane, that is
\begin{equation}
u_{\gamma_a  \beta} + 2 = u_{\gamma_a, w^{-1} \beta},
\end{equation}
we can rewrite the tree-level S-matrix \eqref{eq:tree_level_S} as
\begin{equation}
\label{eq:tree_level_S_2}
\begin{split}
S^{(0)}_{ab}(\theta)&= - \frac{i \g^2}{4h} \left[\sum_{\beta \in \Gamma_b} (\lambda_a, \beta)  \coth \left( \frac{\theta}{2} - \frac{i \pi}{2h} u_{\gamma_a \beta} \right) - \sum_{\beta \in \Gamma_b} (\lambda_a, \beta) \coth \left( \frac{\theta}{2} - \frac{i \pi}{2h} u_{\gamma_a, w^{-1} \beta} \right) \right]\\
&= - \frac{i \g^2}{4h} \left[\sum_{\beta \in \Gamma_b} (\lambda_a, \beta)  \coth \left( \frac{\theta}{2} - \frac{i \pi}{2h} u_{\gamma_a \beta} \right) - \sum_{\beta \in \Gamma_b} (\lambda_a, w \beta) \coth \left( \frac{\theta}{2} - \frac{i \pi}{2h} u_{\gamma_a \beta} \right) \right]\\
&= - \frac{i \g^2}{4h} \sum_{\beta \in \Gamma_b} ((1+w^{-1})\lambda_a, \beta)  \coth \left( \frac{\theta}{2} - \frac{i \pi}{2h} u_{\gamma_a \beta} \right)\\
&= - \frac{i \g^2}{4h} \sum_{\beta \in \Gamma_b} (\gamma_a, \beta)  \coth \left( \frac{\theta}{2}- \frac{i \pi}{2h} u_{\gamma_a \beta} \right) \, .
\end{split}
\end{equation}
In the last line of the expression above we used the definition in~\eqref{eq:roots_weigths_connection}. Analogously we can derive the one-loop S-matrix and obtain
\begin{equation}
\label{eq:one_loop_SIV}
S^{(1)}_{ab}(\theta)= \frac{\left( S^{(0)}_{ab}(\theta) \right)^2}{2}  -\frac{\g^2}{4 \pi} S^{(0)}_{ab}(\theta)+ \frac{\g^4}{32 h^2} \sum_{\beta \in \Gamma_b} \frac{ ((1+w^{-1})\lambda_a, \beta) }{\sinh^2 \bigl( \frac{\theta}{2} - \frac{i \pi}{2h} u_{\gamma_a \beta} \bigl)} \,.
\end{equation}
We remark that formula~\eqref{eq:tree_level_S_2} for the tree-level S-matrix was also obtained from standard Feynman diagrams computations in \cite{Dorey:2021hub}. 
In the remaining part of this section, we reproduce the one-loop S-matrix by plugging~\eqref{eq:tree_level_S_2} into~\eqref{eq:result_S_mat_eq_m_ren_3}.  We show that the expression generated in this way matches exactly~\eqref{eq:one_loop_SIV}.
For the derivation, we will use the properties of the Coxeter geometry previously discussed and some additional identities which we prove in Appendix~\ref{app:orbit_relation}.

\subsection{Derivation of one-loop S-matrices}

In all simply-laced affine Toda theories, the following properties apply
\begin{equation}
\begin{split}
&\frac{M^{(0)}_{aa}(0)}{m^2_a}= -i \frac{4 \g^2}{h} \quad \forall \ a \in \{1,\, \dots, \, r\} \,,\\
&\ai= - i \frac{\g^2}{m^2 h}  \,,
\end{split}
\end{equation}
whose proofs follow from the tree-level S-matrix~\eqref{eq:tree_level_S_2}, as shown in~\cite{Dorey:2021hub}. Moreover, since we set the root length equal to $\sqrt{2}$, we also have 
\begin{equation}
\sum^r_{e=1} m_e^2=\text{Tr} \bigl(\mathbb{M}^2 \bigl) = m^2 \sum^r_{i=0} n_i \alpha^2_i = 2 m^2 h \,.
\end{equation}
Using these relations the last term in~\eqref{eq:result_S_mat_eq_m_ren_3} vanishes and choosing the integration contour $\Gamma_2$ we are left with the following formula for the one-loop S-matrix
\begin{equation}
\label{eq:result_S_mat_eq_m_ren_4}
\begin{split}
S^{(1)}_{ab}(\theta_{pp'})&=\frac{\left(S^{(0)}_{ab}(\theta_{p p'}) \right)^2}{2}+\frac{\g^2}{4 \pi}  \theta_{p p'} \frac{\partial}{\partial \theta_{p p'}} S^{(0)}_{ab}(\theta_{p p'}) \\
&-\frac{1}{8 \pi^2}  \sum_{e=1}^r \frac{\partial}{\partial \theta_{p p'}} \oint_{\ip_2} d\theta_k \theta_k S^{(0)}_{ea} (\theta_{k p} ) S^{(0)}_{e b} (\theta_{k p'}) \,.
\end{split}
\end{equation}

Let us focus on the computation of the integral 
\begin{equation}
I \equiv \sum_{e=1}^r  \oint_{\ip_2} d\theta_k \theta_k S^{(0)}_{ea} (\theta_{k p} ) S^{(0)}_{e b} (\theta_{k p'}) \,,
\end{equation}
which is the most difficult term in the formula.
Plugging the tree-level S-matrix \eqref{eq:tree_level_S_2} into this integral we obtain
\begin{multline}
\label{eq:int_Toda}
I=-\frac{\g^4}{16 h^2} \, \sum^r_{e=1} \sum_{\alpha \in \Gamma_a} \sum_{\beta \in \Gamma_b} (\gamma_e, \alpha) (\gamma_e, \beta) \oint_{\ip_2} d\theta_k \theta_k \coth \left( \frac{\theta_{kp}}{2} - \frac{i \pi}{2h} u_{\gamma_e \alpha} \right) \times \\
\coth \left( \frac{\theta_{kp'}}{2} - \frac{i \pi}{2h} u_{\gamma_e \beta} \right) \,.
\end{multline}
The integrand is a meromorphic function in the $\theta_k$ complex plane and the integral can be computed using Cauchy's theorem. However, some care is needed on which poles are inside the contour. So far the difference between the projected angles (in units of $\pi/h$) has been defined up to shifts of $2h$. This is not a problem since the integrand in~\eqref{eq:int_Toda} is invariant under shifting $u_{\gamma_e \alpha} \to u_{\gamma_e \alpha} \pm 2h$ or $u_{\gamma_e \beta} \to u_{\gamma_e \beta} \pm 2h$. However, when solving the integral it is important to be careful that all residues over which we sum correspond to angles enclosed in the integration contour and lie therefore in the strip $\text{Im}(\theta_k) \in [0 , 2 \pi)$. To perform the integration properly we choose the following convention for the angles.
\begin{enumerate}
    \item If $e \in \circ$ then $u_{\gamma_e}=0$ and the angles $u_\alpha$ and $u_\beta$ are chosen as follows:
    \begin{enumerate}
        \item If $a \in \circ$ then
        \begin{equation}
        \alpha=w^{-p} \gamma_a \quad , \quad u_{\alpha}=-2p
        \end{equation}
        with $p=0, \dots, h-1$.
        \item If $a \in \bullet$ then
        \begin{equation}
        \alpha=w^{-p} \gamma_a \quad , \quad u_{\alpha}=-1-2p
        \end{equation}
        with $p=0, \dots, h-1$.
    \end{enumerate}
\item If $e \in \bullet$ then $u_{\gamma_e}=-1$ and the angles $u_\alpha$ and $u_\beta$ are chosen as follows:
\begin{enumerate}
        \item If $a \in \circ$ then
        \begin{equation}
        \alpha=w^{-(p+1)} \gamma_a \quad , \quad u_{\alpha}=-2p-2
        \end{equation}
        with $p=0, \dots, h-1$.
        \item If $a \in \bullet$ then
        \begin{equation}
        \alpha=w^{-p} \gamma_a \quad , \quad u_{\alpha}=-1-2p
        \end{equation}
        with $p=0, \dots, h-1$.
    \end{enumerate}
\end{enumerate}
The same convention is applied to the angles obtained by projecting the roots $\beta \in \Gamma_b$. With this convention, all the angles $u_{\gamma_e \alpha}$ and $u_{\gamma_e \beta}$ are in the interval $[0, 2h)$ and each term in the double sum on $\alpha \in \Gamma_a$ and $\beta \in \Gamma_b$ contains two poles inside the integration contour $\ip_2$ located at $\theta_{kp}=i \pi u_{\gamma_e \alpha}/h$ and $\theta_{kp'}=i\pi u_{\gamma_e \beta}/h$. 
Using Cauchy's theorem we obtain
\begin{multline}
I=-\frac{\g^4 i \pi}{4 h^2} \sum^r_{e=1} \sum_{\alpha \in \Gamma_a} \sum_{\beta \in \Gamma_b} (\gamma_e, \alpha) (\gamma_e, \beta) \biggl[ \left(\theta_p + \frac{i \pi}{h} u_{\gamma_e \alpha}\right) \coth \left( \frac{\theta_{pp'}}{2} +\frac{i \pi}{2 h} (u_{\gamma_e \alpha}- u_{\gamma_e \beta}) \right)+ \\ 
\left(\theta_{p'} + \frac{i \pi}{h} u_{\gamma_e \beta}\right) \coth \left( \frac{\theta_{p' p}}{2} +\frac{i \pi}{2h} (u_{\gamma_e \beta}- u_{\gamma_e \alpha}) \right)  \biggl].
\end{multline}
Noting that $u_{\gamma_e \beta}- u_{\gamma_e \alpha}=u_{\alpha \beta}$, $u_{\gamma_e \alpha}- u_{\gamma_e \beta}=u_{\beta \alpha}$ and $\coth{(-x)}=-\coth{(x)}$ the expression above can be written as
\begin{equation}
\label{eq:IplusII}
\begin{split}
I=I_1+I_2
\end{split}
\end{equation}
where
\begin{subequations}
\label{eq:I_II_terms}
\begin{equation}
\label{eq:I_term}
I_1=-\frac{\g^4 i \pi}{4 h^2} \theta_{p p'} \sum^r_{e=1} \sum_{\alpha \in \Gamma_a} \sum_{\beta \in \Gamma_b} (\gamma_e, \alpha) (\gamma_e, \beta) \coth \left( \frac{\theta_{pp'}}{2} -\frac{i \pi}{2 h} u_{\alpha \beta} \right) \,,
\end{equation}
\begin{equation}
\label{eq:II_term}
I_2=- \frac{\pi^2 \g^4}{4 h^3} \sum^r_{e=1} \sum_{\alpha \in \Gamma_a} \sum_{\beta \in \Gamma_b} (\gamma_e, \alpha) (\gamma_e, \beta) u_{\alpha \beta}\coth \left( \frac{\theta_{pp'}}{2} -\frac{i \pi}{2h} u_{\alpha \beta} \right) \, .
\end{equation}
\end{subequations}
The term $I_1$ is simple to compute due to its invariance under shifts of $u_{\alpha\beta} \to u_{\alpha\beta} \pm 2h$, which is lost in $I_2$ due to the presence of $u_{\alpha \beta}$ outside the $\coth$ function. Since 
\begin{equation}
u_{w^p \alpha, w^p \beta}=u_{ \alpha, \beta} \ \text{mod} \ 2h \quad \text{and} \quad (\alpha, w^p \beta)= (w^{-p} \alpha, \beta) 
\,,
\end{equation}
then we can write
\begin{equation}
 \label{eq:I_term2}
 I_1=-\frac{\g^4 i \pi}{4 h^2} \theta_{p p'} \sum^r_{e=1}  \sum_{p=0}^{h-1} \ \sum_{q=0}^{h-1} (w^{-p} \gamma_e,  \gamma_a) (w^{-p}\gamma_e, w^{q}\gamma_b) \coth \left( \frac{\theta_{pp'}}{2} -\frac{i \pi}{2 h} u_{ \gamma_a , w^{q} \gamma_b} \right) \,.
 \end{equation}
The sum on $e$ and $p$ can be performed using the completeness relation in~\eqref{eq:completeness-relation}, leading to
\begin{equation}
\label{eq:final_expression_for_I1}
\begin{split}
 I_1=2 \g^2 \pi\theta_{p p'} S^{(0)}_{ab}(\theta_{p p'}) \,.
 \end{split}
\end{equation}

The computation of $I_2$ is more convoluted due to the presence of $u_{\alpha \beta}$ outside the $\coth$ function and depends on what set ($\circ$ or $\bullet$) the scattered particles $a$ and $b$ belong to, according with the conventions for the angles previously discussed.
Let us consider the case $a,b\in\bullet$, for which we have
\begin{equation}
I_2=- \frac{\pi^2 \g^4}{2 h^3} \sum^r_{e=1} \sum_{p=0}^{h-1} \sum_{q=0}^{h-1} (\gamma_e, w^{-p} \gamma_a) (\gamma_e, w^{-q} \gamma_b) (q-p)\coth \left( \frac{\theta_{pp'}}{2} -\frac{i \pi}{h} (q-p) \right) \, .
\end{equation}
By defining $l\equiv q-p$ we can write this quantity as
\begin{equation}
\label{eq:I2-1}
\begin{split}
I_2=&- \frac{\pi^2 \g^4}{2 h^3} \sum^r_{e=1} \sum_{q=0}^{h-1} \sum_{l=q+1-h}^{q} (\gamma_e, w^{(l-q)} \gamma_a) (\gamma_e, w^{-q} \gamma_b) l \coth \left( \frac{\theta_{pp'}}{2} -\frac{i \pi}{h} l \right)\\
=&- \frac{\pi^2 \g^4}{2 h^3} \sum^r_{e=1} \sum_{q=0}^{h-1} \left(\sum_{l=q+1-h}^{0} + \sum_{l=1}^{q} \right) (\gamma_e, w^{(l-q)} \gamma_a) (\gamma_e, w^{-q} \gamma_b) l \coth \left( \frac{\theta_{pp'}}{2} -\frac{i \pi}{h} l \right)\\
=&- \frac{\pi^2 \g^4}{2 h^3} \sum^r_{e=1} \sum_{q=0}^{h-1} \sum_{l=1}^{h}  (\gamma_e, w^{(l-q)} \gamma_a) (\gamma_e, w^{-q} \gamma_b) l \coth \left( \frac{\theta_{pp'}}{2} -\frac{i \pi}{h} l \right)\\
&+ \frac{\pi^2 \g^4}{2 h^2} \sum^r_{e=1} \sum_{q=0}^{h-1} \sum_{l=q+1}^{h}  (\gamma_e, w^{(l-q)} \gamma_a) (\gamma_e, w^{-q} \gamma_b) \coth \left( \frac{\theta_{pp'}}{2} -\frac{i \pi}{h} l \right),
\end{split}
\end{equation}
where in the last equality we used the periodicity of the Coxeter element and $\coth$ functions under shifting $l$ by $h$. Exchanging the sums on $q$ and $l$, which is
\begin{equation}
\sum_{q=0}^{h-1} \sum_{l=q+1}^{h}= \sum_{l=1}^h \sum_{q=0}^{l-1} \,,
\end{equation}
and using the completeness relation~\eqref{eq:completeness-relation} we end up with 
\begin{equation}
\label{eq:I2blackblack1}
I_2=\frac{\pi^2 \g^4}{2 h^2} \sum_{l=1}^{h} \left[ -2 l (\gamma_b, w^{l} \gamma_a) + \sum^r_{e=1} \sum^{l-1}_{q=0} (\gamma_e, w^{(l-q)} \gamma_a) (\gamma_e, w^{-q} \gamma_b) \right] \coth \left(\frac{\theta_{pp'}}{2} -\frac{i \pi}{h} l \right)\ .
\end{equation}
Using the identity~\eqref{eq:rel_white_white} this expression is equivalent to
\begin{equation}
\label{eq:I2blackblack2}
I_2=\frac{\pi^2 \g^4}{2 h^2} \sum_{\beta\in\Gamma_b} ((1+w^{-1})\lambda_a , \beta) \coth \left(\frac{\theta_{pp'}}{2} -\frac{i \pi}{2h} u_{\gamma_a \beta} \right).
\end{equation}

The case $\{a \in \circ, \,b \in \bullet\}$ can be similarly addressed and we end up with the following expression
\begin{multline}
    I_2=\frac{\pi^2 \g^4}{2 h^2} \sum_{l=1}^h \Bigl[-(2l+1) (\gamma_b, w^{l} \gamma_a) + \sum^r_{e=1} \sum^{l-1}_{q=0} (\gamma_e, w^{l-q} \gamma_a) (\gamma_e, w^{-q} \gamma_b) \\
    +\sum_{e \in \bullet} (\gamma_e, \gamma_a) (\gamma_e, w^{-l} \gamma_b) \Bigl] \coth \left( \frac{\theta_{pp'}}{2} -\frac{i \pi}{2h} (2l+1) \right) \, ,
\end{multline}
which after using the identity \eqref{eq:rel_white_black} is also equivalent to~\eqref{eq:I2blackblack2}. 
As aforementioned the choice of Steinberg's ordering, i.e., setting which roots are black and white is arbitrary, therefore the cases $\{a \in \circ$, $b \in \circ\}$ and $\{a \in \bullet$, $b \in \circ\}$ follow immediately from the two cases dealt here. In conclusion, combining \eqref{eq:final_expression_for_I1} with \eqref{eq:I2blackblack2} we find that
\begin{equation}
\label{eq:final-eq-I}
I= 2 \g^2 \pi\theta_{p p'} S^{(0)}_{ab}(\theta_{p p'}) +  \frac{\pi^2 \g^4}{2 h^2} \sum_{\beta \in \Gamma_b} ((1+w^{-1})\lambda_a, \beta)  \coth \left( \frac{\theta}{2} - \frac{i \pi}{2h} u_{\gamma_a \beta} \right) \,.
\end{equation}
Inserting \eqref{eq:final-eq-I} into the one-loop expression \eqref{eq:result_S_mat_eq_m_ren_4} we observe that the terms proportional to $\theta_{p p'}$ cancel in the sum and we reproduce exactly its counterpart~\eqref{eq:one_loop_SIV} obtained from the bootstrap. Therefore for all simply-laced affine Toda models, we showed that the exact S-matrix can be universally reproduced from perturbation theory at one-loop, thus further validating the bootstraped formula.

\section{Conclusions}

In this paper, we derived expressions for one-loop S-matrices of 1+1 dimensional bosonic models described by (bare) Lagrangians of type~\eqref{eq0_1} and satisfying the tree-level pure elasticity Property~\ref{Condition_tree_level_elasticity_introduction}. 
For the one-loop analysis, we needed the renormalized Lagrangian \eqref{eq:renormalised_Lagrangian} and we set the appropriate renormalization conditions.  As noted in~\cite{Polvara:2023vnx} counterterms are necessary not only to avoid UV divergences but also to ensure the absence of particle production at one-loop, which leads to the establishment of the renormalization condition (3) in Section~\ref{sec:ren_condition_123} to fix them. 
By properly cutting Feynman diagrams into Dirac delta functions and retarded propagators, and keeping into account all counterterms, we have been able to write one-loop amplitudes in terms of tree-level quantities; these amplitudes are obtained by combining~\eqref{eq:definition_one_loop_amplitude}, \eqref{eq:final_result_amplitude} and \eqref{eq:countertermsII}.
While the coupling counterterms 
necessary to have absence of production processes at one-loop have not been determined for general theories, we have been able to universally find these counterterms for all models having mass ratios which do not renormalise at one-loop. For these theories, we showed that the sum of~\eqref{eq:definition_one_loop_amplitude}, \eqref{eq:final_result_amplitude} and \eqref{eq:countertermsII} simplifies and the one-loop S-matrices can be written through the closed formula~\eqref{eq:result_S_mat_eq_m_ren_2Int}.
This formula was then tested on some simple models having the nice properties of preserving the mass ratios, like the sinh-Gordon, Bullough-Dodd and $d^{(1)}_4$ affine Toda model; in all these cases the S-matrices obtained through our formula reproduced exactly the S-matrices obtained in the past through the bootstrap. As a more elaborated test, we used our formula to reproduce universal expressions for the one-loop S-matrices of simply-laced affine Toda models finding perfect match with the universal expressions for these S-matrices proposed in~\cite{Dorey:1990xa,Dorey:1991zp}. Our results generalise the perturbative computations for one-loop S-matrices performed in~\cite{Braden:1990qa,Braden:1991vz}, which were based on a case-by-case study of different simply-laced affine Toda theories.

We also showed how Landau singularities in one-loop S-matrices can be explained in terms of multiple pairs of tree-level bound state singularities trapping the integration path used in formula~\eqref{eq:result_S_mat_eq_m_ren_2Int}. 
Landau poles in the S-matrices of simply-laced affine Toda models can also be studied in terms of threshold Feynman diagrams~\cite{Braden:1990wx}; in~\cite{Dorey:2022fvs,Dorey:2023cuq} it was observed that the coefficient of a second-order singularity appearing in the Laurent expansion of the S-matrix around a general higher-order Landau pole is proportional to the number of singular networks of diagrams associated with the pole. It would be interesting to compare the number of trapping punctures discussed in this paper with the multiplicity of singular networks discussed in~\cite{Dorey:2022fvs,Dorey:2023cuq} and connect them with properties of the underlying root systems. 
We also remark that these higher-order Landau singularities are not captured by unitarity cut methods used in the past and would be interesting to revisit the formulas for one-loop S-matrices proposed in~\cite{Bianchi:2013nra,Bianchi:2014rfa} in light of this.

Another possible direction to explore is to use the formula for one-loop S-matrix obtained by combining \eqref{eq:definition_one_loop_amplitude}, \eqref{eq:final_result_amplitude} and \eqref{eq:countertermsII} for nonsimply-laced affine Toda models. For this class of theories, the mass ratios renormalise in a nontrivial way~\cite{Braden:1989bu, Christe:1989my} which is highly model dependent. It would then be interesting to find how to fix the coupling counterterms in terms of the mass ratios in such a way as to avoid one-loop production amplitudes.
It is possible that an in-depth analysis of these models using the method proposed in this work can lead to a further understanding of the quantum properties of these theories and their relation with the conjectured exact S-matrices advanced in~\cite{Delius:1991kt,Corrigan:1993xh}.

We stress that the results obtained in this paper apply to massive bosonic theories with polynomial-like interactions. It is then natural to ask whether it is possible to extend our study to Lagrangians containing different types of matter (such as fermionic or vector fields) and different types of interactions; a natural direction would be to consider derivative interactions in the potential. Famous examples of integrable theories of this type are provided by the class of generalised sine-Gordon models considered for example in~\cite{deVega:1981ka,Hoare:2010fb}. One-loop S-matrices for these models are correctly reproduced by the unitarity cuts formula advanced in~\cite{Bianchi:2014rfa}; if we were able to extend our study to these models we would probably obtain a more general formula for one-loop S-matrices of integrable theories that would capture the unitarity cut results and at the same time would be able to keep into account the intricated structure of bound states responsible for Landau poles discussed in this paper.
An interesting direction would also be the extension of the results obtained here to the analysis of integrable non-linear sigma models and their deformations, which have been receiving attention recently~\cite{Hoare:2021dix}. Remaining on the generalization front, another open problem is the study of massless models at one-loop. A place to start would be the tree-level analysis made in~\cite{Hoare:2018jim}; in this case, there are ambiguities in perturbation theory and curiously the tree-level S-matrices of integrable models exhibit production. The study of these massless theories could potentially solve some incompatibilities between the perturbative S-matrices for AdS$_3 \times$ S$^3 \times$ T$^4$ derived in~\cite{Roiban:2014cia, Sundin:2016gqe} and the ones obtained through the bootstrap in~\cite{Borsato:2013hoa, Frolov:2021fmj} since these conflicts are partially due to the presence of massless modes.

Finally, we should ask ourselves whether the on-shell methods used in this paper can be applied to multiple-loop S-matrices and if there is a way to generate these S-matrices in terms of tree-level data as we did at one-loop. It would be fascinating if for integrable theories it was possible to generate these higher-loop S-matrices recursively in terms of S-matrices with a lower number of loops. As seen here there are many open problems in the realm of perturbative integrability, whose study may shed light on what makes integrable theories special and how they feature in the vast space of quantum field theories.

\section*{Acknowledgments}

We thank Patrick Dorey, Ben Hoare, Anton Pribytok and Alessandro Sfondrini for related discussions. DP especially thanks Ben Hoare for useful discussions and the kind hospitality in Durham where this work was started. The authors also thank the participants of the workshop ``Integrability
in Low-Supersymmetry Theories'' in Filicudi, Italy, for a stimulating environment where
part of this work was carried out. 
The authors acknowledge support from the European Union – NextGenerationEU, from the program
STARS@UNIPD, under the project ``Exact-Holography – A
new exact approach to holography: harnessing the power
of string theory, conformal field theory and integrable
models'', also from the PRIN Project n. 2022ABPBEY,
``Understanding quantum field theory through its deformations'', and from the CARIPLO Foundation ``Supporto ai giovani talenti italiani nelle competizioni dell'European Research Council'' grant n. 2022-1886 ``Nuove basi per la teoria delle stringhe''.

\appendix

\section{Computing the off-shell limit of tree-level amplitudes}
\label{appendix_on_off_shell_limit_of_tree_level_amplitudes}

In this appendix, we derive equation~\eqref{V1ab_to_ab_final_result} starting from~\eqref{single_cut_relevant_contribution_u_channel_not_yet_expanded}. 
We start by substituting~\eqref{expansion_of_propagators_in_u_channel} into~\eqref{single_cut_relevant_contribution_u_channel_not_yet_expanded}:
\begin{equation}
\label{single_cut_relevant_contribution_u_channel_propagators_expanded}
\begin{split}
\hat{V}^{(1)}_{ab\to ab}=\frac{i}{16\pi} \sum_{e=1}^r \int_{-\infty}^{+\infty} \frac{d\theta_k}{k \cdot x} \ \Bigl( &-M^{(0)}_{ae \to a e}(p, k, q, r) \ M^{(0)}_{b e \to b e }(p', r, q', k)\\
&+M^{(0)}_{ae \to a e}(p, r', q, k) \  M^{(0)}_{be \to be }(p', k, q', r') \Bigr)\\
-\frac{i}{32 \pi} \sum_{e=1}^r \int_{-\infty}^{+\infty} \frac{d\theta_k \mu^2}{(k \cdot x)^2} \ \Bigl( &+M^{(0)}_{ae \to a e}(p, k, q, r) \ M^{(0)}_{b e\to b e }(p', r, q', k)\\
&+M^{(0)}_{ae \to a e}(p, r', q, k) \  M^{(0)}_{be \to be }(p', k, q', r') \Bigr) + O(\mu).
\end{split}
\end{equation}
At this point, we expand each tree-level amplitude on the r.h.s. of~\eqref{single_cut_relevant_contribution_u_channel_propagators_expanded} around $\mu=0$. 
In performing the expansion we assume $p$, $p'$ and $k$ fixed and on-shell and we move $q$, $q'$, $r$ and $r'$; these vectors can be written as functions of $\mu$ thanks to~(\eqref{regularization_through_x_introducing_outgoing_mass_deformations}, \eqref{r_and_rprime_written_in_terms_of_k_and_x}). The real parameter $\theta_x$ in~\eqref{definition_of_my_regulator_x} is fixed and spans a set of possible directions that can be followed to reach the elastic configuration in which incoming and outgoing momenta become equal and on-shell. 
We consider $M^{(0)}_{ae \to a e}(p, k, q, r)$ first, whose expansion is
\begin{equation}
\label{appendix_expansion_of_Mae_pk_qr_tree_amplitude}
\begin{split}
&M^{(0)}_{ae \to a e}(p, k, q, r)= M^{(0)}_{ae \to a e}(p, k, p, k)\\
&+ \frac{\partial}{\partial q^2} M^{(0)}_{ae \to a e}(p, k, q, r)\Bigl|_{q=p, r=k} (q^2 - m_a^2)+\frac{\partial}{\partial r^2} M^{(0)}_{ae \to a e}(p, k, q, r)\Bigl|_{q=p, r=k} (r^2 - m_e^2)\\
&+\frac{\partial}{\partial \theta_q} M^{(0)}_{ae \to a e}(p, k, q, r)\Bigl|_{q=p, r=k} (\theta_q - \theta_p)+\frac{\partial}{\partial \theta_r} M^{(0)}_{ae \to a e}(p, k, q, r)\Bigl|_{q=p, r=k} (\theta_r - \theta_k) \, .
\end{split}
\end{equation}
Plugging~\eqref{definition_of_my_regulator_x} into~(\eqref{regularization_through_x_introducing_outgoing_mass_deformations}, \eqref{r_and_rprime_written_in_terms_of_k_and_x}) it is straightforward to write $q^2$ and $r^2$ as functions of $\mu$; they are given by
\begin{subequations}
\label{qsquare_and_rsquare_as_functions_of_mu}
\begin{equation}
\label{qsquare_as_functions_of_mu}
q^2= p^2+ 2 p \cdot x +x^2= m_a^2 + 2 p \cdot x + O(\mu^2) \, ,
\end{equation}
\begin{equation}
\label{rsquare_as_functions_of_mu}
r^2= k^2 - 2 k \cdot x +x^2  = m_e^2- 2 k \cdot x + O(\mu^2) \, .
\end{equation}
\end{subequations}
The expansion for the rapidities is slightly more complicated but can be performed as well and leads to
\begin{subequations}
\label{thetaq_and_thetar_as_functions_of_mu}
\begin{equation}
\label{thetaq_as_functions_of_mu}
\theta_q= \theta_p - \frac{\mu}{m_a} \sinh{\theta_{px}} + O(\mu^2)\, ,
\end{equation}
\begin{equation}
\label{thetar_as_functions_of_mu}
\theta_r= \theta_k + \frac{\mu}{m_e} \sinh{\theta_{kx}} + O(\mu^2) \, ,
\end{equation}
\end{subequations}
where convention~\eqref{convention_on_difference_between_rapidities} for the difference between rapidities has been followed. 
Substituting~\eqref{qsquare_and_rsquare_as_functions_of_mu} and~\eqref{thetaq_and_thetar_as_functions_of_mu} into~\eqref{appendix_expansion_of_Mae_pk_qr_tree_amplitude} we obtain
\begin{equation}
\label{appendix_expansion_of_Mae_pk_qr_tree_amplitude_second_formulation}
\begin{split}
&M^{(0)}_{ae \to a e}(p, k, q, r)= M^{(0)}_{ae \to a e}(p, k, p, k)\\
&+ \frac{\partial}{\partial (p^{\text{out}})^2} M^{(0)}_{ae \to a e}(p, k, p, k) 2 p \cdot x-\frac{\partial}{\partial (k^{\text{out}})^2} M^{(0)}_{ae \to a e}(p, k, p, k) 2 k \cdot x\\
&-\frac{\partial}{\partial \theta^{\text{out}}_p} M^{(0)}_{ae \to a e}(p, k, p, k) \frac{\mu}{m_a} \sinh{\theta_{px}}+\frac{\partial}{\partial \theta^{\text{out}}_k} M^{(0)}_{ae \to a e}(p, k, p, k) \frac{\mu}{m_e} \sinh{\theta_{kx}}\\
\end{split}
\end{equation}
A superscript word, `in' or `out', has been added to indicate
if the derivative is performed with respect to the rapidity or the momentum squared of an incoming or outgoing particle. 
These derivatives are performed as if incoming and outgoing momenta were independent; indeed the overall energy-momentum conservation is taken into account by the fact that $q=p+x$ and $r=k-x$. The amplitude on the l.h.s. of~\eqref{appendix_expansion_of_Mae_pk_qr_tree_amplitude_second_formulation} should then be thought of as a function of $4$ momenta, each containing a certain $\mu$ dependence. 
Note that there is no unique manner to write a Feynman diagram in terms of the external momenta and the amplitude on the l.h.s. of~\eqref{appendix_expansion_of_Mae_pk_qr_tree_amplitude_second_formulation} can be written in terms of $p$, $k$, $q$ and $r$ in several different ways. However, once we require that $q=p+x$ and $r=k-x$, the expansion on the r.h.s. of~\eqref{appendix_expansion_of_Mae_pk_qr_tree_amplitude_second_formulation} does not depend on the way adopted to write the amplitude as a function of $p$, $k$, $q$ and $r$.

Similarly, the other tree-level amplitudes entering the integrand of~\eqref{single_cut_relevant_contribution_u_channel_propagators_expanded} can be expanded as
\begin{equation}
\label{appendix_expansion_of_Mae_prprime_qk_tree_amplitude}
\begin{split}
&M^{(0)}_{ae \to a e}(p, r', q, k)= M^{(0)}_{ae \to a e}(p, k, p, k)\\
&+\frac{\partial}{\partial (k^{\text{in}})^2} M^{(0)}_{ae \to a e}(p, k, p, k) 2 k \cdot x+ \frac{\partial}{\partial (p^{\text{out}})^2} M^{(0)}_{ae \to a e}(p, k, p, k) 2 p \cdot x\\
&-\frac{\partial}{\partial \theta^{\text{in}}_k} M^{(0)}_{ae \to a e}(p, k, p, k) \frac{\mu}{m_e} \sinh{\theta_{kx}}-\frac{\partial}{\partial \theta^{\text{out}}_p} M^{(0)}_{ae \to a e}(p, k, p, k) \frac{\mu}{m_a} \sinh{\theta_{px}} \, ,
\end{split}
\end{equation}
\begin{equation}
\label{appendix_expansion_of_Mbe_pprimer_qprimek_tree_amplitude}
\begin{split}
&M^{(0)}_{be \to b e}(p', r, q', k)= M^{(0)}_{be \to b e}(p', k, p', k)\\
&- \frac{\partial}{\partial (k^{\text{in}})^2} M^{(0)}_{be \to b e}(p', k, p', k) 2 k \cdot x-\frac{\partial}{\partial (p'^{\text{out}})^2} M^{(0)}_{be \to b e}(p', k, p', k) 2 p' \cdot x\\
&+\frac{\partial}{\partial \theta^{\text{in}}_k} M^{(0)}_{be \to b e}(p', k, p', k) \frac{\mu}{m_e} \sinh{\theta_{kx}}+\frac{\partial}{\partial \theta^{\text{out}}_{p'}} M^{(0)}_{be \to b e}(p', k, p', k) \frac{\mu}{m_b} \sinh{\theta_{p' x}}
\end{split}
\end{equation}
and
\begin{equation}
\label{appendix_expansion_of_Mbe_pprimek_qprimerprime_tree_amplitude}
\begin{split}
&M^{(0)}_{be \to b e}(p', k, q', r')= M^{(0)}_{be \to b e}(p', k, p', k)\\
&-\frac{\partial}{\partial (p'^{\text{out}})^2} M^{(0)}_{be \to b e}(p', k, p', k) 2 p' \cdot x+ \frac{\partial}{\partial (k^{\text{out}})^2} M^{(0)}_{be \to b e}(p', k, p', k) 2 k \cdot x\\
&+\frac{\partial}{\partial \theta^{\text{out}}_{p'}} M^{(0)}_{be \to b e}(p', k, p', k) \frac{\mu}{m_b} \sinh{\theta_{p' x}}-\frac{\partial}{\partial \theta^{\text{out}}_k} M^{(0)}_{be \to b e}(p', k, p', k) \frac{\mu}{m_e} \sinh{\theta_{kx}} \, .
\end{split}
\end{equation}
Substituting~\eqref{appendix_expansion_of_Mae_pk_qr_tree_amplitude_second_formulation}, \eqref{appendix_expansion_of_Mae_prprime_qk_tree_amplitude}, \eqref{appendix_expansion_of_Mbe_pprimer_qprimek_tree_amplitude} and~\eqref{appendix_expansion_of_Mbe_pprimek_qprimerprime_tree_amplitude} into~\eqref{single_cut_relevant_contribution_u_channel_propagators_expanded}, after some simplifications, we obtain
\begin{equation}
\label{Vab1_appendix_intermediate_step}
\begin{split}
&\hat{V}^{(1)}_{ab\to ab}=\\
&+\frac{i}{8\pi} \sum_{e=1}^r \int_{-\infty}^{+\infty} d\theta_k \  \frac{\partial}{\partial k^2} \Bigl( M^{(0)}_{ae \to a e}(p, k, p, k) M^{(0)}_{be \to b e}(p', k, p', k) \Bigl)\Bigl|_{k^2=m^2_e}\\
&-\frac{i}{16\pi} \sum_{e=1}^r  \int_{-\infty}^{+\infty} \frac{d\theta_k}{k \cdot x} \frac{\mu}{m_e} \sinh{\theta_{kx}} \ \frac{\partial}{\partial \theta_k} \Bigl( M^{(0)}_{ae \to a e}(p, k, p, k) M^{(0)}_{be \to b e}(p', k, p', k) \Bigl)\\
&-\frac{i}{16 \pi} \sum_{e=1}^r \int_{-\infty}^{+\infty} \frac{d\theta_k \mu^2}{(k \cdot x)^2} \ M^{(0)}_{ae \to a e}(p, k, p, k) \ M^{(0)}_{b e\to b e }(p', k, p', k) + O(\mu) \, ,
\end{split}
\end{equation}
where we define
$$
\frac{\partial}{ \partial k^2} \equiv \frac{\partial}{ \partial (k^{\text{in}})^2}+\frac{\partial}{ \partial (k^{\text{out}})^2} \hspace{5mm} \text{and} \hspace{5mm} \frac{\partial}{ \partial \theta_k} \equiv \frac{\partial}{ \partial \theta_k^{\text{in}}}+\frac{\partial}{ \partial \theta_k^{\text{out}}} \, .
$$
Finally, using the fact that
$$
k \cdot x = \mu m_e \cosh{\theta_{kx}}
$$
and integrating the second row of~\eqref{Vab1_appendix_intermediate_step} by parts we obtain~\eqref{V1ab_to_ab_final_result}.

\section{The on-shell limit of 3-to-3 tree-level processes}
\label{app:on_shell_limit}
In this appendix, we describe tree-level processes of the form in~\eqref{3_to_3_tree_level_process}. First, we compute 
the on-shell limit of the sum of Feynman diagrams in Figure~\ref{Factorization_contributions_in_tree_level_6_point_processes}.
Despite these diagrams being singular at the values of momenta $\{q=p, q'=p', \tilde{k}=k\}$, we show that reaching this singular configuration 
keeping all the particles on-shell and the overall energy-momentum conserved the singularity disappears and the sum of these diagrams is finite.
In performing this computation we prove that Property~\ref{Property_4_point_couplings} is a consequence of Property~\ref{Condition_tree_level_elasticity_introduction}. Then, we plug the obtained result into~\eqref{definition_of_Vab_to_ab_term_i_123_onshell_limit} and compute in this way $\hat{V}^{(\text{on})}_{ab}$.

\subsection{Summing singular diagrams}

Let us define
\begin{equation}
\label{app:split_of_V_3_to_3_on}
\hat{V}^{(\text{on})}_{abe}= \hat{V}^{(1, \text{on})}_{abe}+\hat{V}^{(2, \text{on})}_{abe}+\hat{V}^{(3, \text{on})}_{abe}
\end{equation}
to be the sum of the three combinations of Feynman diagrams in Figure~\ref{Factorization_contributions_in_tree_level_6_point_processes} obtained by reaching the point $\{q=p, q'=p', \tilde{k}=k\}$ keeping all the particles on-shell. We perform the limit on the collection of diagrams number $(2)$ in Figure~\ref{Factorization_contributions_in_tree_level_6_point_processes}, contributing to $\hat{V}^{(2, \text{on})}_{abc}$; the collections numbers $(1)$ and $(3)$, contributing to $\hat{V}^{(1, \text{on})}_{abc}$ and $\hat{V}^{(3, \text{on})}_{abc}$ respectively, can be computed similarly. 

We label  by $p_a$ the momentum of the propagator on the l.h.s. and by $\tilde{p}_a$ the momentum of the propagator on the r.h.s. of picture $(2)$, given by
\begin{equation}
p_a=p+k - \tilde{k} \hspace{4mm} \text{and} \hspace{4mm} \tilde{p}_a=p+p'-q'
\end{equation}
respectively.
We keep all the external momenta on-shell with $\theta_p$, $\theta_{p'}$ and $\theta_k$ fixed; then due to the overall energy-momentum conservation the rapidities $\theta_{\tilde{k}}$, $\theta_{q}$ and $\theta_{q'}$ depend on a single degree of freedom. We assume this degree of freedom to be $p_a^2$ (the momentum squared of the intermediate propagator) for the collection of diagrams generating the l.h.s. of picture number $(2)$ and $\tilde{p}_a^2$ for the collection of diagrams generating the r.h.s. of the picture.
We will consider the elastic branch of the kinematics on which $\{ \theta_{q}=\theta_p \,, \theta_{q'}=\theta_{p'} \,, \theta_{\tilde{k}}=\theta_k\}$ when $p_a^2=p^2$ (or equivalently $\tilde{p}_a^2=p^2$) \,.

We evaluate the l.h.s. part of picture number $(2)$ in Figure~\ref{Factorization_contributions_in_tree_level_6_point_processes} first. 
For fixed $p$ and $k$, we can imagine the blob on the bottom left of picture number $(2)$ as a function of $p_a^2$ alone. 
Indeed, once we fix the value of $p_a^2$, the rapidities $\theta_a$ and $\theta_{\tilde{k}}$ are determined by the energy-momentum conservation applied to this blob.
The expansion of this blob around the elastic point can therefore be written as
\begin{equation}
\label{On_shell_limit_picture_2_LHS_bottom_blob}
M^{(0)}_{ae \to ae} (p, k , p_a , \tilde{k})=M^{(0)}_{ae \to ae} (p, k , p , k) + \frac{\partial}{\partial p_a^2}M^{(0)}_{ae \to ae} (p, k , p_a , \tilde{k})\Bigl|_{p_a^2=m_a^2} (p_a^2 - m_a^2) + \dots \, ,
\end{equation}
where $\theta_a$ and $\theta_{\tilde{k}}$ implicitly depend on $p_a^2$ when the derivative is performed. The ellipses contain subleading contributions in $p_a^2 - m_a^2$.

On the other hand, ignoring the existence of the energy-momentum conservation on the blob just described, the blob on the top left of picture number $(2)$ in Figure~\ref{Factorization_contributions_in_tree_level_6_point_processes} should be thought as a function of two independent variables: $p_a^2$ and $\theta_a$. Indeed $p'$ is fixed and $\theta_q$ and $\theta_{q'}$ are unambiguously determined by the energy-momentum conservation applied to this blob only if $p_a^2$ and $\theta_a$ are both specified. This blob can be expanded as
\begin{multline}
\label{On_shell_limit_picture_2_LHS_top_blob}
    M^{(0)}_{ab \to ab} (p_a, p' , q , q')=M^{(0)}_{ab \to ab} (p, p' , p , p') + \frac{\partial}{\partial p_a^2}M^{(0)}_{ab \to ab} (p_a, p' , q , q')\Bigl|_{\substack{ p_a^2=m_a^2 \\ \theta_a= \theta_p}} (p_a^2 - m_a^2)\\
    + \frac{\partial}{\partial \theta_a}M^{(0)}_{ab \to ab} (p_a, p' , q , q')\Bigl|_{\substack{ p_a^2=m_a^2 \\ \theta_a= \theta_p}} (\theta_a - \theta_p) + \dots \, .
\end{multline}
The ellipses contain higher order powers in $(\theta_a - \theta_p)$ and $(p_a^2 - m_a^2)$.
Applying the energy-momentum conservation on the blob on the bottom (having as external momenta $p$, $k$, $\tilde{k}$ and $p_a$) we can write the rapidity $\theta_a$ of the particle propagating in the middle of the two blobs in terms of $p_a^2$:
$$
\theta_a=\theta_p+ \frac{1}{2 m_a^2 \tanh{\theta_{kp}}} (p_a^2 - m_a^2) +O((p_a^2 - m_a^2)^2) \, .
$$
Substituting this expression in~\eqref{On_shell_limit_picture_2_LHS_top_blob}, the on-shell limit of the l.h.s. of picture $(2)$ in Figure~\ref{Factorization_contributions_in_tree_level_6_point_processes} can be written as
\begin{equation}
\label{On_shell_limit_picture_2_LHS}
\begin{split}
&V^{(2, \, \text{on}, \, \text{l.h.s.})}_{abe\to abe}= \Bigl(M^{(0)}_{ae \to ae} (p, k , p , k) + \frac{\partial}{\partial p_a^2}M^{(0)}_{ae \to ae} (p, k , p_a , \tilde{k})\Bigl|_{p_a^2=m_a^2} (p_a^2 - m_a^2) \Bigl) \frac{i}{p_a^2 - m^2_a} \\
& \times \Bigl(M^{(0)}_{ab \to ab} (p, p' , p , p') + \frac{\partial}{\partial p_a^2}M^{(0)}_{ab \to ab} (p_a, p' , q , q')\Bigl|_{\substack{ p_a^2=m_a^2 \\ \theta_a= \theta_p}} (p_a^2 - m_a^2)\\
&+ \frac{\partial}{\partial \theta_a}M^{(0)}_{ab \to ab} (p_a, p' , q , q')\Bigl|_{\substack{ p_a^2=m_a^2 \\ \theta_a= \theta_p}} \frac{1}{2 m_a^2 \tanh{\theta_{kp}}} (p_a^2 - m_a^2) \Bigl) +O(p_a^2 - m_a^2) \,,
\end{split}
\end{equation}
which after having been expanded becomes
\begin{equation}
\label{On_shell_limit_picture_2_LHS_final_result}
\begin{split}
V^{(2, \, \text{on}, \, \text{l.h.s.})}_{abe\to abe}&= M^{(0)}_{ae \to ae} (p, k , p , k) \frac{i}{p_a^2 -m^2_a} M^{(0)}_{ab \to ab} (p, p' , p , p')\\
&+ i M^{(0)}_{ab \to ab} (p, p' , p , p') \frac{\partial}{\partial p_a^2}M^{(0)}_{ae \to ae} (p, k , p_a , \tilde{k})\Bigl|_{p_a^2=m_a^2}\\
&+iM^{(0)}_{ae \to ae} (p, k , p , k) \frac{\partial}{\partial p_a^2}M^{(0)}_{ab \to ab} (p_a, p' , q , q')\Bigl|_{\substack{ p_a^2=m^2_a \\ \theta_a= \theta_p}}\\
&+\frac{i M^{(0)}_{ae \to ae} (p, k , p , k)}{2 m_a^2 \tanh{\theta_{kp}}} \frac{\partial}{\partial \theta_a}M^{(0)}_{ab \to ab} (p_a, p' , q , q')\Bigl|_{\substack{ p_a^2=m^2_a \\ \theta_a= \theta_p}} \, .
\end{split}
\end{equation}
A similar computation can be performed on the r.h.s. of picture number $(2)$ in Figure~\ref{Factorization_contributions_in_tree_level_6_point_processes} and leads to
\begin{equation}
\label{On_shell_limit_picture_2_RHS_final_result}
\begin{split}
V^{(2, \, \text{on}, \, \text{r.h.s.})}_{abe\to abe}&= M^{(0)}_{ae \to ae} (p, k , p , k) \frac{i}{\tilde{p}_a^2 -m^2_a} M^{(0)}_{ab \to ab} (p, p' , p , p')\\
&+ i M^{(0)}_{ae \to ae} (p, k , p , k) \frac{\partial}{\partial \tilde{p}_a^2}M^{(0)}_{ab \to ab} (p, p' , \tilde{p}_a , q')\Bigl|_{\tilde{p}_a^2=m_a^2}\\
&+iM^{(0)}_{ab \to ab} (p, p' , p , p') \frac{\partial}{\partial \tilde{p}_a^2}M^{(0)}_{ae \to ae} (\tilde{p}_a, k , q , \tilde{k})\Bigl|_{\substack{ \tilde{p}_a^2=m_a^2 \\ \tilde{\theta}_a= \theta_p}}\\
&+\frac{iM^{(0)}_{ab \to ab} (p, p' , p , p')}{2 m_a^2 \tanh{\theta_{p' p}}} \frac{\partial}{\partial \tilde{\theta}_a}M^{(0)}_{ae \to ae} (\tilde{p}_a, k , q , \tilde{k})\Bigl|_{\substack{ \tilde{p}_a^2=m_a^2 \\ \tilde{\theta}_a= \theta_p}} \,.
\end{split}
\end{equation}

To evaluate the sum of~\eqref{On_shell_limit_picture_2_LHS_final_result} and~\eqref{On_shell_limit_picture_2_RHS_final_result} we need to figure out what are the values of the propagators $\frac{i}{p_a^2 -m^2_a}$ and $\frac{i}{\tilde{p}_a^2 -m^2_a}$ at the pole $\{ \theta_{q}=\theta_p \,, \theta_{q'}=\theta_{p'} \,, \theta_{\tilde{k}}=\theta_k\}$. 
We can expand both propagators as functions of $\theta_q$ around the point $\theta_q=\theta_p$ at fixed values of $\theta_{p}$, $\theta_{p'}$ and $\theta_{k}$; as already discussed $\theta_{q'}$ and $\theta_{\tilde{k}}$ are function of $\theta_q$ thanks to the overall energy-momentum conservation. Performing the computation we obtain that the singular parts of both propagators cancel in the sum and the result is given by
\begin{equation}
\label{sum_of_propagators_onshell_limit_singularities_cancellation}
\frac{i}{\tilde{p}_a^2 -m^2_a}+\frac{i}{p_a^2 -m^2_a} = \frac{i}{2 m_a^2} \frac{\cosh{\theta_{p' k}}}{\sinh{\theta_{p p'}} \sinh{\theta_{p k}}} \, .
\end{equation}
Exploiting this fact the sum of~\eqref{On_shell_limit_picture_2_LHS_final_result} and~\eqref{On_shell_limit_picture_2_RHS_final_result} is given by
\begin{equation}
\label{Vabe_to_abe_on_shell_1}
\begin{split}
&V^{(2, \, \text{on})}_{abe\to abe}=\frac{i}{2 m_a^2} \frac{\cosh{\theta_{p' k}}}{\sinh{\theta_{p p'}} \sinh{\theta_{p k}}} M^{(0)}_{ae \to ae} (p, k , p , k) M^{(0)}_{ab \to ab} (p, p' , p , p')\\
&+ i \frac{\partial}{\partial p^2} \Bigl( M^{(0)}_{ae \to ae} (p, k , p , k) M^{(0)}_{ab \to ab} (p, p' , p , p') \Bigl)\Bigl|_{p^2=m^2_a}\\
&+\frac{i M^{(0)}_{ae \to ae} (p, k , p , k)}{2 m_a^2 \tanh{\theta_{kp}}} \frac{\partial}{\partial \theta_p}M^{(0)}_{ab \to ab} (p, p' , p , p')+\frac{iM^{(0)}_{ab \to ab} (p, p' , p , p')}{2 m_a^2 \tanh{\theta_{p' p}}} \frac{\partial}{\partial \theta_p}M^{(0)}_{ae \to ae} (p, k , p , k) \,.
\end{split}
\end{equation}
Repeating similar computations for the collections of Feynman diagrams contributing to pictures number $(1)$ and $(3)$ in Figure~\ref{Factorization_contributions_in_tree_level_6_point_processes} we obtain
\begin{equation}
\label{Vabe_to_abe_on_shell_2}
\begin{split}
&V^{(\text{on})}_{abe\to abe}= {\color{red} \frac{i}{2 m_e^2} \frac{\cosh{\theta_{p p'}}}{\sinh{\theta_{k p}} \sinh{\theta_{k p'}}} M^{(0)}_{ae \to ae} (p, k , p , k) M^{(0)}_{b e \to b e} (p', k , p' , k)}\\
&{\color{red}+\frac{i}{2 m_a^2} \frac{\cosh{\theta_{p' k}}}{\sinh{\theta_{p p'}} \sinh{\theta_{p k}}} M^{(0)}_{ae \to ae} (p, k , p , k) M^{(0)}_{ab \to ab} (p, p' , p , p')}\\
&{\color{red}+\frac{i}{2 m_b^2} \frac{\cosh{\theta_{p k}}}{\sinh{\theta_{p' p}} \sinh{\theta_{p' k}}} M^{(0)}_{ab \to ab} (p, p' , p , p') M^{(0)}_{b e \to b e} (p', k , p' , k) }\\
&{\color{blue}+ i \frac{\partial}{\partial k^2} \Bigl( M^{(0)}_{ae \to ae} (p, k , p , k) M^{(0)}_{be \to be} (p', k , p' , k) \Bigl)\Bigl|_{k^2=m^2_e}}\\
&{\color{blue}+ i \frac{\partial}{\partial p^2} \Bigl( M^{(0)}_{ae \to ae} (p, k , p , k) M^{(0)}_{ab \to ab} (p, p' , p , p') \Bigl)\Bigl|_{p^2=m^2_a}}\\
&{\color{blue}+ i \frac{\partial}{\partial p'^2} \Bigl( M^{(0)}_{ab \to ab} (p, p' , p , p') M^{(0)}_{be \to be} (p', k , p' , k) \Bigl)\Bigl|_{p'^2=m^2_b}}\\
&{\color{orange}+\frac{i M^{(0)}_{ae \to ae} (p, k , p , k)}{2 m_e^2 \tanh{\theta_{pk}}} \frac{\partial}{\partial \theta_k}M^{(0)}_{be \to be} (p', k , p' , k)+\frac{iM^{(0)}_{be \to be} (p', k , p' , k)}{2 m_e^2 \tanh{\theta_{p' k}}} \frac{\partial}{\partial \theta_k}M^{(0)}_{ae \to ae} (p, k , p , k)}\\
&{\color{orange}+\frac{i M^{(0)}_{ae \to ae} (p, k , p , k)}{2 m_a^2 \tanh{\theta_{kp}}} \frac{\partial}{\partial \theta_p}M^{(0)}_{ab \to ab} (p, p' , p , p')+\frac{iM^{(0)}_{ab \to ab} (p, p' , p , p')}{2 m_a^2 \tanh{\theta_{p' p}}} \frac{\partial}{\partial \theta_p}M^{(0)}_{ae \to ae} (p, k , p , k)}\\
&{\color{orange}+\frac{i M^{(0)}_{ab \to ab} (p, p' , p , p')}{2 m_b^2 \tanh{\theta_{pp'}}} \frac{\partial}{\partial \theta_{p'}}M^{(0)}_{b e \to b e} (p' , k, p', k)+\frac{iM^{(0)}_{b e \to b e} (p', k , p' , k)}{2 m_b^2 \tanh{\theta_{k p'}}} \frac{\partial}{\partial \theta_{p'}}M^{(0)}_{ab \to ab} (p, p' , p , p') \,.}
\end{split}
\end{equation}
We split the result above into three different blocks identified by three different colours. This will be useful in the next sections as we will study these blocks in more detail.
In integrable theories, this result has then to be cancelled by all the remaining non-singular diagrams.

\subsection{Large rapidity limit}

In order for the expression in~\eqref{Vabe_to_abe_on_shell_2} to be cancelled by the sum of all the remaining Feynman diagrams we need to require 
\begin{equation}
\label{requirement_equal_limit_PlusMinusInfinite}
\lim_{\theta_k \to +\infty} V^{(\text{on})}_{abe\to abe}=\lim_{\theta_k \to -\infty} V^{(\text{on})}_{abe\to abe} \,.
\end{equation}
Note indeed that the remaining non-singular diagrams have equal values in the two limits $\theta_k \to + \infty$ and $\theta_k \to -\infty$ (this is a consequence of the fact that the Lagrangian potential is of polynomial type).
Examples of these diagrams are provided in Figure~\ref{Example_of_diagrams_in_large_rapidity_limit} where two diagrams are shown, the first surviving and the second cancelling in the limit of large $\theta_k$.
\begin{figure}
\begin{center}
\begin{tikzpicture}
\tikzmath{\y=1;}


\filldraw[black] (-4.4*\y,0.2*\y)  node[anchor=west] {\normalsize{$\lim_{\theta_k \to \pm \infty}$}};

\draw[] (-2.2*\y,2*\y) arc(165:195:8*\y);
\draw[] (+2.4*\y,2*\y) arc(15:-15:8*\y);

\draw[directed] (0*\y,1*\y) -- (0*\y,0*\y);
\draw[directed] (-0.7*\y,1.7*\y) -- (0*\y,1*\y);
\draw[directed] (0*\y,1*\y) -- (+0.7*\y,1.7*\y);
\draw[directed] (0.866025*\y,-0.5*\y) -- (0*\y,0*\y);
\draw[directed] (1.82224*\y,-0.243782*\y) -- (0.866025*\y,-0.5*\y);
\draw[directed] (0.866025*\y,-0.5*\y) -- (1.12224*\y,-1.45622*\y);
\draw[directed] (-0.866025*\y,-0.5*\y) -- (0*\y,0*\y);
\draw[directed] (-1.82224*\y,-0.243782*\y) -- (-0.866025*\y,-0.5*\y);
\draw[directed] (-0.866025*\y,-0.5*\y) -- (-1.12224*\y,-1.45622*\y);

\filldraw[black] (-1*\y,2*\y)  node[anchor=west] {\tiny{$e(k)$}};
\filldraw[black] (0.4*\y,2*\y)  node[anchor=west] {\tiny{$e(k)$}};
\filldraw[black] (-2.4*\y,-0.4*\y)  node[anchor=west] {\tiny{$a(p)$}};
\filldraw[black] (-1.5*\y,-1.7*\y)  node[anchor=west] {\tiny{$b(p')$}};
\filldraw[black] (1.8*\y,-0.4*\y)  node[anchor=west] {\tiny{$b(p')$}};
\filldraw[black] (0.9*\y,-1.7*\y)  node[anchor=west] {\tiny{$a(p)$}};

\filldraw[black] (0.1*\y,0.6*\y)  node[anchor=west] {\tiny{$i$}};
\filldraw[black] (0.4*\y,0*\y)  node[anchor=west] {\tiny{$j$}};
\filldraw[black] (-0.8*\y,0*\y)  node[anchor=west] {\tiny{$n$}};

\filldraw[black] (2.8*\y,0.2*\y)  node[anchor=west] {\normalsize{$= \frac{(-i C^{(3)}_{e \bar{e} \bar{i}})}{0-m^2_i} \frac{(-i C^{(3)}_{a \bar{b} \bar{n}})}{(p-p')^2-m^2_n} \frac{(-i C^{(3)}_{b \bar{a} \bar{j}})}{(p'-p)^2-m^2_j} (-i C^{(3)}_{i j n})$}};


\filldraw[black] (-4.4*\y,0.2*\y-6*\y)  node[anchor=west] {\normalsize{$\lim_{\theta_k \to \pm \infty}$}};

\draw[] (-2.2*\y,2*\y-6*\y) arc(165:195:8*\y);
\draw[] (+2.4*\y,2*\y-6*\y) arc(15:-15:8*\y);

\draw[directed] (0*\y,1*\y-6*\y) -- (0*\y,-6*\y);
\draw[directed] (-0.7*\y,1.7*\y-6*\y) -- (0*\y,1*\y-6*\y);
\draw[directed] (0*\y,1*\y-6*\y) -- (+0.7*\y,1.7*\y-6*\y);
\draw[directed] (0.866025*\y,-0.5*\y-6*\y) -- (0*\y,0*\y-6*\y);
\draw[directed] (1.82224*\y,-0.243782*\y-6*\y) -- (0.866025*\y,-0.5*\y-6*\y);
\draw[directed] (0.866025*\y,-0.5*\y-6*\y) -- (1.12224*\y,-1.45622*\y-6*\y);
\draw[directed] (-0.866025*\y,-0.5*\y-6*\y) -- (0*\y,0*\y-6*\y);
\draw[directed] (-1.82224*\y,-0.243782*\y-6*\y) -- (-0.866025*\y,-0.5*\y-6*\y);
\draw[directed] (-0.866025*\y,-0.5*\y-6*\y) -- (-1.12224*\y,-1.45622*\y-6*\y);

\filldraw[black] (-1*\y,2*\y-6*\y)  node[anchor=west] {\tiny{$e(k)$}};
\filldraw[black] (0.4*\y,2*\y-6*\y)  node[anchor=west] {\tiny{$a(p)$}};
\filldraw[black] (-2.4*\y,-0.4*\y-6*\y)  node[anchor=west] {\tiny{$a(p)$}};
\filldraw[black] (-1.5*\y,-1.7*\y-6*\y)  node[anchor=west] {\tiny{$b(p')$}};
\filldraw[black] (1.8*\y,-0.4*\y-6*\y)  node[anchor=west] {\tiny{$b(p')$}};
\filldraw[black] (0.9*\y,-1.7*\y-6*\y)  node[anchor=west] {\tiny{$e(k)$}};

\filldraw[black] (0.1*\y,0.6*\y-6*\y)  node[anchor=west] {\tiny{$i$}};
\filldraw[black] (0.4*\y,0*\y-6*\y)  node[anchor=west] {\tiny{$j$}};
\filldraw[black] (-0.8*\y,0*\y-6*\y)  node[anchor=west] {\tiny{$n$}};

\filldraw[black] (2.8*\y,0.2*\y-6*\y)  node[anchor=west] {\normalsize{$= \frac{(-i C^{(3)}_{e \bar{a} \bar{i}})}{\infty-m^2_i} \frac{(-i C^{(3)}_{a \bar{b} \bar{n}})}{(p-p')^2-m^2_n} \frac{(-i C^{(3)}_{b \bar{e} \bar{j}})}{\infty-m^2_j} (-i C^{(3)}_{i j n})=0$}};

\end{tikzpicture}
\caption{Examples of surviving (first row) and cancelling (second row) diagrams in the limit $\theta_k \to \pm \infty$. 
The values of both diagrams do not change if we send $\theta_k \to + \infty$ or $\theta_k \to - \infty$.}
\label{Example_of_diagrams_in_large_rapidity_limit}
\end{center}
\end{figure}

Taking the large rapidity limit of~\eqref{Vabe_to_abe_on_shell_2} we obtain
\begin{multline}
\label{Vabe_to_abe_on_shell_limit_thetak_plus_minus_infty_difference}
\Bigl( \lim_{\theta_k \to + \infty} -\lim_{\theta_k \to - \infty} \Bigl) V^{(\text{on})}_{abe\to abe}=\\
\Bigl( i \frac{\partial}{\partial \theta_p}M^{(0)}_{ab \to ab} (p, p') -i \coth{\theta_{p p'}}   M^{(0)}_{ab \to ab} (p, p')\Bigl) \cdot \left( \frac{M^{(0)}_{ae \to ae} (\infty)}{m_a^2}-\frac{M^{(0)}_{be \to be} (\infty)}{m_b^2} \right)  \,.
\end{multline}
In~\eqref{Vabe_to_abe_on_shell_limit_thetak_plus_minus_infty_difference} we label by $M^{(0)}_{ij \to ij} (\infty)$ the amplitude associated with the scattering of two particles of types $i$ and $j$ in the limit $\theta_i - \theta_j=\infty$. 

We notice that $M^{(0)}_{ab \to ab} (p, p')$ contains simple poles at all imaginary values of $\theta_{p p'}$ corresponding to the propagation of bound states in $M^{(0)}_{ab \to ab} (p, p')$ while $\frac{\partial}{\partial \theta_p}M^{(0)}_{ab \to ab} (p, p')$ contains second-order poles at these locations. For this reason, the first factor on the r.h.s. of the equality in~\eqref{Vabe_to_abe_on_shell_limit_thetak_plus_minus_infty_difference} cannot be null as it is the sum of two functions with poles of different orders. The only possibility to have~\eqref{Vabe_to_abe_on_shell_limit_thetak_plus_minus_infty_difference} vanishing is that
$$
\frac{M^{(0)}_{b e \to b e} (\infty)}{m_b^2 }=\frac{M^{(0)}_{a e \to a e} (\infty)}{m_a^2} \,.
$$
Since this has to be true for any particles $a$, $b$ and $e$ then Property~\ref{Property_4_point_couplings} is proven.

Using~\eqref{values_of_tree_level_amplitudes_at_infty} then~\eqref{requirement_equal_limit_PlusMinusInfinite} is satisfied and it is possible to show that
\begin{equation}
\label{Vabe_to_abe_on_shell_limit_thetak_plus_minus_infty_result}
\begin{split}
& \lim_{\theta_k \to \pm \infty} V^{(\text{on})}_{abe\to abe}={\color{red} -i \ai \, m_e^2 M^{(0)}_{ab \to ab} (p, p' , p , p')}\\
&{\color{blue}+ i \Bigl( M^{(0)}_{ae \to ae} (\infty) \frac{\partial}{\partial p^2}M^{(0)}_{ab \to ab} (p, p' , p , p') \Bigl|_{p^2=m^2_a}}{\color{blue}+  \frac{\partial}{\partial p'^2} M^{(0)}_{ab \to ab} (p, p' , p , p')\Bigl|_{p'^2=m^2_b} M^{(0)}_{be \to be} (\infty) \Bigl)} \,.
\end{split}
\end{equation}
For the terms in the expression above we used the same colours as in~\eqref{Vabe_to_abe_on_shell_2} to indicate from which blocks of~\eqref{Vabe_to_abe_on_shell_2} they are generated.

\subsection{Absence of poles at finite rapidity}

In this section we check that $V^{(\text{on})}_{abe\to abe}$ does not have poles at finite values of $\theta_k \in \mathbb{R}$; this is fundamental if we want to integrate $V^{(\text{on})}_{abe\to abe}$, as a function of $\theta_k$, on the real line.

\begin{figure}
\begin{center}
\begin{tikzpicture}
\tikzmath{\y=1.2;}

\draw[directed] (0*\y,0*\y) -- (-1.3*\y,1*\y);
\draw[directed] (-1.3*\y,1*\y) -- (0*\y,2.3*\y);
\draw[directed] (0*\y,0*\y) -- (1.3*\y,1.3*\y);
\draw[directed] (1.3*\y,1.3*\y) -- (0*\y,2.3*\y);
\filldraw[black] (-1.1*\y,2*\y)  node[anchor=west] {\tiny{$a(p)$}};
\filldraw[black] (-1.2*\y,0.2*\y)  node[anchor=west] {\tiny{$e(k)$}};
\filldraw[black] (+0.7*\y,0.5*\y)  node[anchor=west] {\tiny{$a$}};
\filldraw[black] (+0.6*\y,1.9*\y)  node[anchor=west] {\tiny{$e$}};
\filldraw[black] (-1.45*\y,1*\y)  node[anchor=west] {\tiny{$\bullet$}};
\filldraw[black] (1.12*\y,1.3*\y)  node[anchor=west] {\tiny{$\bullet$}};

\draw[directed] (0*\y,0*\y-4*\y) -- (-1.3*\y,1*\y-4*\y);
\draw[directed] (-1.3*\y,1*\y-4*\y) -- (0*\y,2.3*\y-4*\y);
\draw[directed] (+1.3*\y,1*\y-4*\y) -- (0*\y,2.3*\y-4*\y);
\draw[directed] (0*\y,0*\y-4*\y) -- (+1.3*\y,1*\y-4*\y);
\filldraw[black] (-1.1*\y,2*\y-4*\y)  node[anchor=west] {\tiny{$a(p)$}};
\filldraw[black] (-1.2*\y,0.2*\y-4*\y)  node[anchor=west] {\tiny{$e(k)$}};
\filldraw[black] (+0.7*\y,0.5*\y-4*\y)  node[anchor=west] {\tiny{$e$}};
\filldraw[black] (+0.6*\y,1.9*\y-4*\y)  node[anchor=west] {\tiny{$a$}};
\filldraw[black] (-1.45*\y,1*\y-4*\y)  node[anchor=west] {\tiny{$\bullet$}};
\filldraw[black] (+1.1*\y,1*\y-4*\y)  node[anchor=west] {\tiny{$\bullet$}};

\draw[->] (1.5*\y,0.4*\y) -- (3.5*\y,-0.6*\y);
\filldraw[black] (2.1*\y,0.3*\y)  node[anchor=west] {\tiny{$\theta{p k} \to 0$}};
\draw[->] (1.5*\y,-2.4*\y) -- (3.5*\y,-1.4*\y);
\filldraw[black] (2.1*\y,-2.2*\y)  node[anchor=west] {\tiny{$\theta{p k} \to 0$}};


\draw[directed] (4.5*\y,-3*\y+0.2*\y) -- (4.5*\y,-1.36*\y+0.2*\y);
\filldraw[black] (4.32*\y,-1.45*\y+0.2*\y)  node[anchor=west] {\tiny{$\bullet$}};
\draw[directed] (4.5*\y,-1.36*\y+0.2*\y) -- (4.5*\y,-1.36*\y+1.84*\y+0.2*\y);
\filldraw[black] (4.5*\y,-0.5*\y+0.2*\y)  node[anchor=west] {\tiny{$a(p)$}};
\filldraw[black] (4.5*\y,-2.5*\y+0.2*\y)  node[anchor=west] {\tiny{$e(k)$}};

\end{tikzpicture}
\caption{Elastic (top left) and inelastic (bottom left) configurations of $M^{(0)}_{ae\to ae}(p,k)$ and their collinear limit.}
\label{Elastic_inelastic_configurations_collinear_limit}
\end{center}
\end{figure}
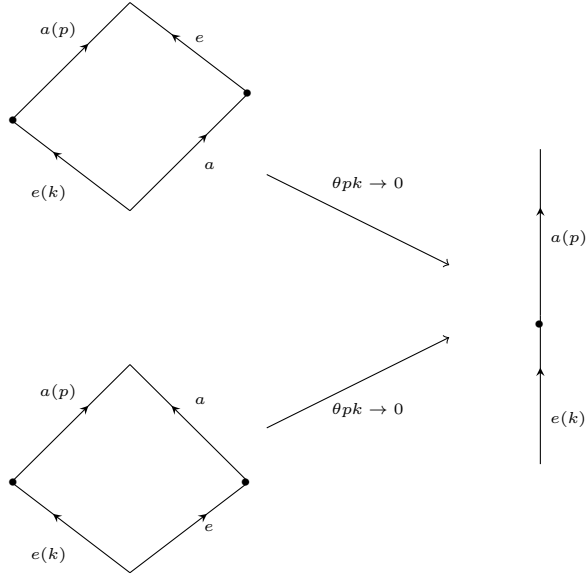
From~\eqref{Vabe_to_abe_on_shell_2} we notice that the only possible singularities of $V^{(\text{on})}_{abe\to abe}$ are located at $\theta_k=\theta_p$ and $\theta_k=\theta_{p'}$; indeed the singularities of the two-to-two tree-level amplitudes entering the r.h.s. of~\eqref{Vabe_to_abe_on_shell_2} are all located at imaginary values of $\theta_k$ since they have to correspond to the propagation of bound states. 
We study the potential singularity at $\theta_k = \theta_p$; the case $\theta_k = \theta_{p'}$ can be studied analogously. Let us take the limit $\theta_k \to \theta_p$ of the red terms in~\eqref{Vabe_to_abe_on_shell_2}: in this limit, the potential singular part of these terms is given by
\begin{equation}
\label{residue_at_thetak_pole_red_contributions}
\begin{split}
{\color{red} \frac{i}{2}\frac{M^{(0)}_{ae \to ae} (p, p)}{\sinh{\theta_{pk}}} \frac{\cosh \theta_{p p'}}{\sinh \theta_{p p'}} \Bigl( -  \frac{M^{(0)}_{b e \to b e} (p', p)}{ m_e^2}   +\frac{M^{(0)}_{ab \to ab} (p, p')}{ m_a^2}    \Bigl) \, .}
\end{split}
\end{equation}
If $e$ and $a$ are particles of different types (i.e. $e \ne a$) then $M^{(0)}_{ae \to ae} (p, p)=0$ and the residue at the pole is null. 
This is clear by the fact that we consider theories with purely elastic tree-level S-matrices for which $M^{(0)}_{ae \to ae}=0$ on the inelastic branch of the kinematics where the momenta of the outgoing particles are different from the momenta of the incoming particles (i.e. when $p^{(\text{in})}_a \ne p^{(\text{out})}_a$ and $p^{(\text{in})}_e \ne p^{(\text{out})}_e$). This kinematical configuration, analytically continued to imaginary rapidities, is depicted on the bottom left of Figure~\ref{Elastic_inelastic_configurations_collinear_limit}; the collinear limit $\theta_k \to \theta_p$ corresponds to the configuration at which the elastic and inelastic branch of the kinematics overlap. Since the amplitude is null on the inelastic branch then it needs to hold $M^{(0)}_{ae \to ae}(p,p)=0$.

In the particular case in which $e=a$ then $M^{(0)}_{ae \to ae}(p,p)$ can be nonzero since the scattering is elastic in both branches of the kinematics; in this case both quadrilateral on the l.h.s. of Figure~\ref{Elastic_inelastic_configurations_collinear_limit} become identical parallelograms and describe an elastic scattering process. 
However, even in this case, the singular part of the red contributions vanishes since the terms 
in parenthesis in~\eqref{residue_at_thetak_pole_red_contributions} sum to zero. A similar analysis can be carried out for the limit $\theta_k \to \theta_{p'}$. 
The potential singularities of the orange contributions in~\eqref{Vabe_to_abe_on_shell_2} cancel separately; this can easily be checked following the same approach used above. In the end, the quantity in~\eqref{Vabe_to_abe_on_shell_2} has no singularities for all values of $\theta_k \in \mathbb{R}$.

\subsection{Integrating on-shell amplitudes}
\label{Integration_of_Sum_e_Vabe_OnShell}

Since the quantity in~\eqref{Vabe_to_abe_on_shell_2}, as a function of $\theta_k$, has no poles on the real axis then we can perform its integration between $-\infty$ and $+\infty$.
To perform the integration first we remove from~\eqref{Vabe_to_abe_on_shell_2} the contributions which survive in the limit $\theta_k \to \infty$ otherwise the integral would be ill-defined: in loop computations, these contributions are generated by cuts of tadpole diagrams which must be cancelled by infinite counterterms.
After the infinite-rapidity contributions \eqref{Vabe_to_abe_on_shell_limit_thetak_plus_minus_infty_result} are removed the expression in~\eqref{Vabe_to_abe_on_shell_2} becomes
\begin{equation}
\label{Vabe_to_abe_on_shell_infty_removed}
\begin{split}
&\hat{V}^{(\text{on})}_{abe\to abe}= {\color{red} \frac{i}{2 m_e^2} \frac{\cosh{\theta_{p p'}}}{\sinh{\theta_{k p}} \sinh{\theta_{k p'}}} M^{(0)}_{ae \to ae} (p, k , p , k) M^{(0)}_{b e \to b e} (p', k , p' , k)}\\
&{\color{red}+\frac{i}{2 m_a^2} \frac{\cosh{\theta_{p' k}}}{\sinh{\theta_{p p'}} \sinh{\theta_{p k}}} M^{(0)}_{ae \to ae} (p, k , p , k) M^{(0)}_{ab \to ab} (p, p' , p , p')}\\
&{\color{red}+\frac{i}{2 m_b^2} \frac{\cosh{\theta_{p k}}}{\sinh{\theta_{p' p}} \sinh{\theta_{p' k}}} M^{(0)}_{ab \to ab} (p, p' , p , p') M^{(0)}_{b e \to b e} (p', k , p' , k) +i \ai \, m_e^2 M^{(0)}_{ab \to ab} (p, p' , p , p')}\\
&{\color{blue}+ i \frac{\partial}{\partial k^2} \Bigl( M^{(0)}_{ae \to ae} (p, k , p , k) M^{(0)}_{be \to be} (p', k , p' , k) \Bigl)\Bigl|_{k^2=m^2_e}}\\
&{\color{blue}+ i \frac{\partial}{\partial p^2} \Bigl( \hat{M}^{(0)}_{ae \to ae} (p, k , p , k) M^{(0)}_{ab \to ab} (p, p' , p , p') \Bigl)\Bigl|_{p^2=m^2_a}}\\
&{\color{blue}+ i \frac{\partial}{\partial p'^2} \Bigl( M^{(0)}_{ab \to ab} (p, p' , p , p') \hat{M}^{(0)}_{be \to be} (p', k , p' , k) \Bigl)\Bigl|_{p'^2=m^2_b}}\\
&{\color{orange}+\frac{i M^{(0)}_{ae \to ae} (p, k , p , k)}{2 m_e^2 \tanh{\theta_{pk}}} \frac{\partial}{\partial \theta_k}M^{(0)}_{be \to be} (p', k , p' , k)+\frac{iM^{(0)}_{be \to be} (p', k , p' , k)}{2 m_e^2 \tanh{\theta_{p' k}}} \frac{\partial}{\partial \theta_k}M^{(0)}_{ae \to ae} (p, k , p , k)}\\
&{\color{orange}+\frac{i M^{(0)}_{ae \to ae} (p, k , p , k)}{2 m_a^2 \tanh{\theta_{kp}}} \frac{\partial}{\partial \theta_p}M^{(0)}_{ab \to ab} (p, p' , p , p')+\frac{iM^{(0)}_{ab \to ab} (p, p' , p , p')}{2 m_a^2 \tanh{\theta_{p' p}}} \frac{\partial}{\partial \theta_p}M^{(0)}_{ae \to ae} (p, k , p , k)}\\
&{\color{orange}+\frac{i M^{(0)}_{ab \to ab} (p, p' , p , p')}{2 m_b^2 \tanh{\theta_{pp'}}} \frac{\partial}{\partial \theta_{p'}}M^{(0)}_{b e \to b e} (p' , k, p', k)+\frac{iM^{(0)}_{b e \to b e} (p', k , p' , k)}{2 m_b^2 \tanh{\theta_{k p'}}} \frac{\partial}{\partial \theta_{p'}}M^{(0)}_{ab \to ab} (p, p' , p , p') \, .} 
\end{split}
\end{equation}

In the following we assume $\theta_{p'}<\theta_{p}$. 
Moreover, we integrate over the line
\begin{equation}
\int_{\gamma} d \theta_k \equiv \int_{-\infty}^{\theta_{p'}-\epsilon} d \theta_k +\int_{\theta_{p'}+\epsilon}^{\theta_{p}-\epsilon} d \theta_k +\int_{\theta_{p}+\epsilon}^{+\infty} d \theta_k
\end{equation}
in such a way to avoid the point $\theta_{p'}$ and $\theta_p$. Since the integrand has no poles at these points then there is no difference between integrating on the real line or on $\gamma$.
The advantage of integrating on $\gamma$ is that the integral of each single term composing~\eqref{Vabe_to_abe_on_shell_infty_removed} is well defined (note indeed that the single terms in~\eqref{Vabe_to_abe_on_shell_infty_removed} can be singular at $\theta_k=\theta_{p'}$ and $\theta_k=\theta_{p}$ and only their sum is finite).
Let us define
\begin{equation}
\label{definition_of_Vab_to_ab_colored}
\hat{V}^{(*)}_{ab\to ab} \equiv \frac{1}{8\pi} \sum_{e=1}^r \int_\gamma d\theta_k \hat{V}^{(*)}_{abe\to abe} \, ,
\end{equation}
where $*$ can be `red', `blue' or `orange' and indicates the different coloured terms composing~\eqref{Vabe_to_abe_on_shell_infty_removed}.
With this definition the expression in~\eqref{definition_of_Vab_to_ab_term_i_123_onshell_limit} can be written as
\begin{equation}
\label{definition_of_Vab_to_ab_as_sum_of_colours}
\hat{V}^{(\text{on})}_{ab\to ab} = \hat{V}^{(\text{red})}_{ab\to ab}+\hat{V}^{(\text{blue})}_{ab\to ab}+\hat{V}^{(\text{orange})}_{ab\to ab} \, .
\end{equation}
In the following, we compute the three terms on the r.h.s. of~\eqref{definition_of_Vab_to_ab_as_sum_of_colours} separately.

\paragraph{Integrating the blue contributions.}
Let us start with the blue contributions in~\eqref{Vabe_to_abe_on_shell_infty_removed}, which separately do not contain poles on the real line.
Substituting these contributions into the integrand of~\eqref{definition_of_Vab_to_ab_colored} and using~\eqref{definition_necessary_for_mass_renormalization_first_time_ga_appears}, \eqref{coefficients_of_the_Sigma_expansion} and the first relation in~\eqref{definition_of_mass_renormalization_and_tab_to_diagonalise_the_mass_matrix} we obtain 
\begin{equation}
\label{eq_blue_Vab_appendix}
\begin{split}
\hat{V}^{(\text{blue})}_{ab\to ab}&= \bigl( \Sigma^{(1)}_{aa}+ \Sigma^{(1)}_{bb} \bigl) M^{(0)}_{ab}(p, p')\\
&-\delta m_a^2 \frac{\partial}{\partial p^2} M^{(0)}_{ab}(p, p' )\Bigl|_{p^2=m_a^2}-\delta m_b^2 \frac{\partial}{\partial p'^2} M^{(0)}_{ab}(p, p' )\Bigl|_{p'^2=m_b^2}\\
&+ \frac{i}{8 \pi} \sum_{e=1}^r \int_{-\infty}^{+\infty} d \theta_k \frac{\partial}{\partial k^2} \Bigl( M^{(0)}_{ae} (p, k) M^{(0)}_{b e} (p', k) \Bigl)\Bigl|_{k^2=m^2_e}\, .
\end{split}
\end{equation}

\paragraph{Integrating the red contributions.}

We move now to the red contributions.
Let us consider the $k$-dependent part in the second row of~\eqref{Vabe_to_abe_on_shell_infty_removed}. Its integral in $\theta_k$ can be written as
\begin{equation}
\begin{split}
 &\int_\gamma d\theta_k  \frac{\cosh{\theta_{p' k}}}{\sinh{\theta_{p k}}} M^{(0)}_{ae} (p, k)=\int_\gamma d\theta_k  \frac{\cosh{(\theta_{p' p}+\theta_{p k})}}{\sinh{\theta_{p k}}} M^{(0)}_{ae} (p, k)\\
 &=\int_\gamma d\theta_k  \cosh{\theta_{p' p}} \coth{\theta_{p k}} M^{(0)}_{ae} (p, k)+\int_\gamma d\theta_k  \sinh{\theta_{p' p}} M^{(0)}_{ae} (p, k) 
 \end{split}
\end{equation}
The first term in the last row of the expression above is odd in $\theta_{pk}$ and its integration in $\theta_{pk}$ between $-\infty$ and $+\infty$ is therefore null. 
However, we are integrating in $\theta_k$ and the integral with respect to this variable is non-zero since the integrand function does not tend to zero in the limit of large rapidity. After having performed a  change of integration variable we obtain
\begin{equation}
 +2 \theta_p \cosh{\theta_{p' p}} m_a^2 m_e^2 \ai+\sinh{\theta_{p' p}} \int_\gamma d\theta_k  M^{(0)}_{ae} (p, k) \,.
\end{equation}

Multiplying by $1/8 \pi$ and the remaining $k$-independent factors in the second row of~\eqref{Vabe_to_abe_on_shell_infty_removed}, and summing over the different types of particles $e$ we obtain
\begin{equation}
\begin{split}
&\frac{i \ai}{8 \pi} \theta_p \coth{\theta_{p p'}} M^{(0)}_{ab} (p, p') \, \sum_{e=1}^r m_e^2+\frac{1}{2}  \frac{\delta m_a^2}{m_a^2} M^{(0)}_{ab}(p, p')\\
&-\frac{i}{2} M^{(0)}_{ab} (p, p') \frac{1}{8 \pi} \ai \sum_{e=1}^r m_e^2 \int_{-\infty}^{+\infty} d\theta_k \,.
\end{split}
\end{equation}
Performing similar computations for the first and third rows of~\eqref{Vabe_to_abe_on_shell_infty_removed} we obtain
\begin{equation}
\label{eq_red_Vab_appendix}
\begin{split}
&\hat{V}^{(\text{red})}_{ab\to ab}= \frac{i \ai}{8 \pi} \theta_{p p'} \coth{\theta_{p p'}} M^{(0)}_{ab} (p, p') \, \sum_{e=1}^r m_e^2\\
&+\frac{1}{2} \Bigl( \frac{\delta m_a^2}{m_a^2} +\frac{\delta m_b^2}{m_b^2} \Bigl) M^{(0)}_{ab}(p, p' ) + \sum_{e=1}^r \int_{\gamma} d \theta_k \frac{i p \cdot p' }{  \pi}  S^{(0)}_{ae} (p, k ) S^{(0)}_{b e} (p', k ) \,.
\end{split}
\end{equation}

\paragraph{Integrating the orange contributions.}
The contribution to $\hat{V}^{(\text{orange})}_{ab\to ab}$ carried from the second and third rows in the orange part of~\eqref{Vabe_to_abe_on_shell_infty_removed} can be computed in a similar way to before and is given by
\begin{equation}
\label{appendix_easy_part_of_orange_contribution}
- \frac{i \ai}{8 \pi} \theta_{p p'}\frac{\partial}{\partial \theta_{p p'}} M^{(0)}_{ab} (p, p') \sum_{e=1}^r \, m_e^2 \,.
\end{equation}
Computing the integral of the first row of the orange part of~\eqref{Vabe_to_abe_on_shell_infty_removed} requires some more steps: noting that 
$$
\frac{\partial}{\partial \theta_k}M^{(0)}_{ae} (p, k)=\frac{\partial}{\partial \theta_k}\hat{M}^{(0)}_{ae} (p, k)=-\frac{\partial}{\partial \theta_p}\hat{M}^{(0)}_{ae} (p, k)
$$ 
and 
$$
\frac{\partial}{\partial \theta_k}M^{(0)}_{be} (p', k)=\frac{\partial}{\partial \theta_k} \hat{M}^{(0)}_{be} (p', k)=-\frac{\partial}{\partial \theta_{p'}}\hat{M}^{(0)}_{be} (p', k)
$$
then the contribution of this row to $\hat{V}^{(\text{orange})}_{ab\to ab}$ can be written as
\begin{equation}
\label{appendix_first_row_orange_integrated_first}
-\frac{i}{16 \pi} \sum_{e=1}^r \frac{1}{m_e^2} \biggl( \frac{\partial}{\partial \theta_{p'}} \int_\gamma d\theta_k  \frac{ M^{(0)}_{ae} (p, k)}{ \tanh{\theta_{pk}}} \hat{M}^{(0)}_{be} (p', k) + \frac{\partial}{\partial \theta_{p}} \int_\gamma d\theta_k \frac{M^{(0)}_{be} (p', k)}{ \tanh{\theta_{p' k}}} \hat{M}^{(0)}_{ae } (p, k) \biggl) \,.
\end{equation}
Both integrals in the expression above are convergent since both the integrand functions tend to zero for large rapidity $\theta_k$. Consequently, we can change the integration variable to $\theta_{kp}$ in the first integral and to $\theta_{k p'}$ in the second integral without caring about the effects of the boundaries of integration. 
Using the fact that the first integrand above is a function of $\theta_{kp}$ and $\theta_{p p'}$ (note that it is also a function of $\theta_{k p'}$ but this variable can be written as $\theta_{k p'}=\theta_{kp}+\theta_{p p'}$) its integration with respect to $\theta_{kp}$ generates a function of $\theta_{p p'}$ alone.
A similar argument can be repeated for the second integral. Due to this fact, we can replace
$$ 
\frac{\partial}{\partial \theta_{p'}} \to -\frac{\partial}{\partial \theta_{pp'}}
$$
in the first term in~\eqref{appendix_first_row_orange_integrated_first} and
$$ 
\frac{\partial}{\partial \theta_{p}} \to \frac{\partial}{\partial \theta_{pp'}}
$$
in the second term in~\eqref{appendix_first_row_orange_integrated_first}. We end up with
\begin{equation}
\label{appendix_first_row_orange_integrated_second}
\frac{i}{16 \pi} \sum_{e=1}^r \frac{1}{m_e^2} \frac{\partial}{\partial \theta_{pp'}} \biggl( \int_\gamma d\theta_k  \frac{ M^{(0)}_{ae} (p, k)}{ \tanh{\theta_{pk}}} \hat{M}^{(0)}_{be} (p', k) - \int_\gamma d\theta_k \frac{M^{(0)}_{be} (p', k)}{ \tanh{\theta_{p' k}}} \hat{M}^{(0)}_{ae} (p, k) \biggl) \,.
\end{equation}

Using
$$
\hat{M}^{(0)}_{ae} (p', k)=M^{(0)}_{ae} (p, k)- \ai m_a^2 m_e^2
$$
and
$$
\hat{M}^{(0)}_{be} (p', k)=M^{(0)}_{be} (p', k)- \ai m_b^2 m_e^2
$$
the expression in~\eqref{appendix_first_row_orange_integrated_second} becomes
\begin{equation}
\label{appendix_first_row_orange_integrated_third}
\begin{split}
&- \frac{i \bigl( \ai \bigl)^2}{8 \pi} m_a^2 m_b^2  \sum_{e=1}^r m_e^2 -\frac{i}{\pi} p \cdot p' \sum_{e=1}^r \int_\gamma d\theta_k S^{(0)}_{ae} (p, k ) S^{(0)}_{b e} (p', k ) \\
&-\frac{i}{\pi} m_a m_b \sinh{\theta_{pp'}}  \sum_{e=1}^r \frac{\partial}{\partial \theta_{p p'}} \int_\gamma d\theta_k S^{(0)}_{ae} (p, k ) S^{(0)}_{b e} (p', k ).
\end{split}
\end{equation}
The sum of~\eqref{appendix_easy_part_of_orange_contribution} and~\eqref{appendix_first_row_orange_integrated_third} is then
\begin{equation}
\label{eq_orange_Vab_appendix}
\begin{split}
\hat{V}^{(\text{orange})}_{ab\to ab}=&- \frac{i \ai}{8 \pi} \theta_{p p'}\frac{\partial}{\partial \theta_{p p'}} M^{(0)}_{ab} (p, p') \sum_{e=1}^r \, m_e^2- \frac{i \bigl( \ai \bigl)^2}{8 \pi} m_a^2 m_b^2  \sum_{e=1}^r m_e^2\\
&-\frac{i}{\pi} p \cdot p' \sum_{e=1}^r \int_\gamma d\theta_k S^{(0)}_{ae} (p, k ) S^{(0)}_{b e} (p', k ) \\
&-\frac{i}{\pi} m_a m_b \sinh{\theta_{pp'}}  \sum_{e=1}^r \frac{\partial}{\partial \theta_{p p'}} \int_\gamma d\theta_k S^{(0)}_{ae} (p, k ) S^{(0)}_{b e} (p', k ) \,.
\end{split}
\end{equation}

\paragraph{Combining all coloured contributions.}
Finally, summing~\eqref{eq_blue_Vab_appendix}, \eqref{eq_red_Vab_appendix} and~\eqref{eq_orange_Vab_appendix} we obtain
\begin{equation}
\label{appendix_total_Vab_to_ab_onshell}
\begin{split}
&\hat{V}^{(\text{on})}_{ab\to ab}=
\bigl( \Sigma^{(1)}_{aa}+ \Sigma^{(1)}_{bb} \bigl) M^{(0)}_{ab}(p, p')\\
&-\delta m_a^2 \frac{\partial}{\partial p^2} M^{(0)}_{ab}(p, p' )\Bigl|_{p^2=m_a^2}-\delta m_b^2 \frac{\partial}{\partial p'^2} M^{(0)}_{ab}(p, p' )\Bigl|_{p'^2=m_b^2}\\
&+ \frac{i}{8 \pi} \sum_{e=1}^r \int_{-\infty}^{+\infty} d \theta_k \frac{\partial}{\partial k^2} \Bigl( M^{(0)}_{ae} (p, k) M^{(0)}_{b e} (p', k) \Bigl)\Bigl|_{k^2=m^2_e}\\
&+\frac{i \ai}{8 \pi} \theta_{p p'} \coth{\theta_{p p'}} M^{(0)}_{ab} (p, p') \, \sum_{e=1}^r m_e^2\\
&+\frac{1}{2} \Bigl( \frac{\delta m_a^2}{m_a^2} +\frac{\delta m_b^2}{m_b^2} \Bigl) M^{(0)}_{ab}(p, p' )- \frac{i \ai}{8 \pi} \theta_{p p'}\frac{\partial}{\partial \theta_p} M^{(0)}_{ab} (p, p') \sum_{e=1}^r \, m_e^2\\
&- \frac{i \bigl( \ai \bigl)^2}{8 \pi} m_a^2 m_b^2  \sum_{e=1}^r m_e^2-\frac{i}{\pi} m_a m_b \sinh{\theta_{pp'}}  \sum_{e=1}^r \frac{\partial}{\partial \theta_{p p'}} \ \text{p.v.} \int^{+\infty}_{-\infty} d\theta_k \,  S^{(0)}_{ae} (p, k ) S^{(0)}_{b e} (p', k ) \,.
\end{split}
\end{equation}

\section{Orbit relations}
\label{app:orbit_relation}

In this appendix, we establish some orbit relations necessary in Section~\ref{sec:SL-affine-toda} for the derivation of the one-loop S-matrices of simply-laced affine Toda theories.
We start briefly reviewing some known results of Coxeter geometry; the reader is invited to look at~\cite{Dorey:2021hub,Corrigan:1994nd} for more extended discussions.

Let us consider a semisimple Lie algebra $\mathfrak{g}$ of rank $r$ associated with a simply-laced Dynkin diagram. 
Following the argument presented at pp. 158-161 of~\cite{carter1989simple} we can define an orthonormal basis $\{ z_s\}$ of the Coxeter element by introducing the vectors
\begin{equation}
\label{eq:def1-as}
\begin{split}
    a^\bullet_s &= \sum_{a \in \bullet} q^a_s \alpha_a , \\
    a^\circ_s &= \sum_{a \in \circ} q^a_s \alpha_a \,,
\end{split}
\end{equation}
where $q_s^a$ are components of the eigenvectors $\{q_s\}$ of the Cartan matrix. Then a suitable set of eigenvectors of $w$ is provided by
\begin{equation}
\label{eq:zs_as_functions_of_as}
\begin{split}
   z_s &= \sqrt{\frac{h}{2}} \frac{1}{|q_s| \sin (\theta_s)} \bigl( e^{i \theta_s} a^\bullet_s + a^\circ_s \bigl)\\
    z_{h-s} &= \sqrt{\frac{h}{2}} \frac{1}{|q_s| \sin (\theta_s)} \bigl( e^{-i \theta_s} a^\bullet_s + a^\circ_s \bigl)
\end{split}
\end{equation}
which satisfy
\begin{equation}
\label{eq:scalar-product-zs}
z^\dagger_{s}=z_{h-s} \quad \text{and} \quad (z_s , z^\dagger_{s'}) = h \delta_{s s'} \,.
\end{equation}
From this fact, it follows that the identity operator in $\mathbb{R}^r$ can be decomposed onto these eigenvectors as
\begin{equation}
\label{eq:completeness-zs}
    \sum_{s} \frac{z_{h-s} \otimes z_s}{h} = \mathbb{I} \,.
\end{equation}

It is possible to show that the projectors of the roots $\{\gamma_a\}$ defined in~\eqref{eq:roots_weigths_connection} on this basis are
\begin{equation}
\label{eq:projections-roots}
\begin{split}
&P_s(\gamma_a)= \sqrt{\frac{2}{h}}(z_{h-s}, \gamma_a) =  2 \frac{q^a_s}{|q_s|} \hspace{12mm} \text{if} \ a \in \circ ,\\
&P_s(\gamma_a)=\sqrt{\frac{2}{h}}(z_{h-s}, \gamma_a) = 2  \frac{q^a_s}{|q_s|}e^{-i \theta_s} \quad \text{if} \ a \in \bullet \,,
\end{split}
\end{equation}
with $\theta_s$ given in~\eqref{eq:eigenvectors_w}. From~\eqref{eq:projections-roots} we see that the black roots are rotated by an angle $-\theta_s$ with respect to the white roots: this is just a generalization of the angles $U_{\gamma_a}$ presented in Section \ref{sec:coxeter-geometry} to the spin $s$ eigenplanes of $w$.
The projections for the remaining roots of the system are obtained by rotating the projectors of $\{\gamma_a\}^r_{a=1}$ as
\begin{equation}
P_s(w^{l}\gamma_a) = e^{2 i l \theta_{s}} P_s(\gamma_a) \,.
\end{equation}
This simply follows from the fact that $\{ z_s \}$ are eigenvectors of the Coxeter element.

Since $\{\gamma_a\}^r_{a=1}$ are real vectors in $\mathbb{R}^r$ and $z^\dagger_{s}=z_{h-s}$ then it holds that
\begin{equation}
(z_{h-s}, \gamma_a)=(z_{s}, \gamma_a)^*
\end{equation}
from which 
\begin{equation}
\label{eq:reality-conditions}
\begin{split}
&q^a_s=q^a_{h-s} \hspace{8mm}\text{if} \ a  \in \circ \,,\\
&q^a_s=-q^a_{h-s} \hspace{5mm} \text{if} \ a  \in \bullet \,.
\end{split}
\end{equation}
The extra minus sign for $\bullet$ roots is due to the extra phase factor $e^{-i\theta_s}$ in \eqref{eq:projections-roots} and the identity $\theta_{h-s} = \pi-\theta_s$.

Having defined the projections we now establish some useful identities involving them. 

\begin{lemma}
\label{property_scalar_pr_asasprime}

The set $\{q_s\}$ of eigenvectors of the Cartan matrix used to define the eigenvectors of $w$ must satisfy
\begin{equation}
\label{eq:completeness-rel-qb}
\sum_{a \in \bullet} q^a_s q^a_{s'} = \frac{|q_s|^2}{2}  \bigl( \delta_{s' , s} - \delta_{s' , h-s}\bigl),
\end{equation}
\begin{equation}
\label{eq:completeness-rel-qw}
\sum_{a \in \circ} q^a_s q^a_{s'} = \frac{|q_s|^2}{2} \bigl( \delta_{s' , s} + \delta_{s' , h-s}\bigl).
\end{equation}
\end{lemma}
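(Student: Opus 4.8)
The plan is to work directly in the orthonormal (over $\mathbb{C}$) eigenbasis $\{z_s\}$ of the Coxeter element and exploit the two relations already at our disposal: the completeness relation~\eqref{eq:completeness-zs}, $\sum_s (z_{h-s}\otimes z_s)/h = \mathbb{I}$, and the explicit formulas~\eqref{eq:projections-roots} for the projections of the orbit representatives $\gamma_a$ onto the spin-$s$ eigenplanes. The key observation is that $\sum_{a\in\circ} q^a_s q^a_{s'}$ and $\sum_{a\in\bullet} q^a_s q^a_{s'}$ are exactly the kinds of sums that arise when one contracts the completeness relation against the vectors $a^\circ_s$, $a^\bullet_s$ defined in~\eqref{eq:def1-as}, so the whole proof should reduce to a bookkeeping exercise with the phases $e^{\pm i\theta_s}$.

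Concretely, I would first recall that $\{q_s\}$ is an orthonormal (up to normalization $|q_s|^2$) eigenbasis of the Cartan matrix $K_{ab}=(\alpha_a,\alpha_b)$ with eigenvalues $4\sin^2(\theta_s/2)$ (or whatever normalization the paper uses); since the $\alpha_a$ themselves form a basis, the Gram matrix identity together with $\sum_a q^a_s q^a_{s'}=|q_s|^2\delta_{ss'}$ is available. The cleanest route, though, is to start from~\eqref{eq:scalar-product-zs}, $(z_s,z^\dagger_{s'})=h\delta_{ss'}$, substitute the expression~\eqref{eq:zs_as_functions_of_as} for $z_s$ and $z_{h-s}$, and expand the scalar product. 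Because $a^\bullet_s$ is a combination of mutually orthogonal black roots and $a^\circ_s$ of mutually orthogonal white roots, and because black and white sets are each self-orthogonal, the cross term $(a^\bullet_s, a^\circ_s)$ is governed purely by the Dynkin-diagram links; the diagonal pieces $(a^\bullet_s,a^\bullet_{s'})=\sum_{a\in\bullet}q^a_s q^a_{s'}\cdot\alpha_a^2 = 2\sum_{a\in\bullet}q^a_s q^a_{s'}$ (using $\alpha_a^2=2$) and similarly for white. Writing out $(z_s,z^\dagger_{s'})$ and $(z_s,z^\dagger_{h-s'})$ for the two relevant choices of $s'$ (namely $s'$ and $h-s'$), and using $\theta_{h-s}=\pi-\theta_s$ so that $e^{i\theta_{h-s}}=-e^{-i\theta_s}$, gives a small linear system in the two unknowns $B_{ss'}:=\sum_{a\in\bullet}q^a_s q^a_{s'}$ and $W_{ss'}:=\sum_{a\in\circ}q^a_s q^a_{s'}$. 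Combined with the obvious $B_{ss'}+W_{ss'}=\sum_a q^a_s q^a_{s'}=|q_s|^2\delta_{ss'}$ (Cartan eigenvector orthogonality), solving the system yields~\eqref{eq:completeness-rel-qb} and~\eqref{eq:completeness-rel-qw}.

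An alternative and perhaps more transparent argument avoids the $a^\bullet,a^\circ$ machinery entirely: contract the completeness relation $\mathbb{I}=\sum_s (z_{h-s}\otimes z_s)/h$ between two orbit representatives. For $a,b\in\circ$ this gives $(\gamma_a,\gamma_b)=\frac1h\sum_s (z_{h-s},\gamma_a)(z_s,\gamma_b) = \frac1h\sum_s \frac h2 P_s(\gamma_a)\overline{P_{h-s}(\gamma_b)}$-type expression, and plugging in $P_s(\gamma_a)=2q^a_s/|q_s|$ (real, no phase) turns the left side, which is the known Gram matrix of white roots, into a statement about $\sum_s q^a_s q^b_s/|q_s|^2$; a discrete Fourier/orthogonality inversion over $s$ then isolates the desired sums. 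Doing the same for $a\in\circ,b\in\bullet$ and for $a,b\in\bullet$, where the phases $e^{-i\theta_s}$ enter, produces the $\pm\delta_{s',h-s}$ structure because $e^{-i\theta_s}$ pairs nontrivially with $e^{i\theta_{h-s}}=-e^{-i\theta_s}$. I would likely present whichever of these two is shorter once the normalizations are pinned down, and mention the other in a remark.

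The main obstacle I anticipate is purely organizational rather than conceptual: keeping the normalization factors $|q_s|$, the root length $\sqrt2$, the factor $h$ in $(z_s,z^\dagger_{s'})=h\delta_{ss'}$, and the $\sqrt{2/h}$ in the projection formulas all consistent, and correctly handling the degenerate cases $s=h-s$ (which can only occur if $h$ is even and $s=h/2$ is an exponent — it is not an exponent for simply-laced algebras with the relevant diagrams, but this should be checked or noted) and $s=s'$ versus $s'=h-s$ where the two Kronecker deltas in~\eqref{eq:completeness-rel-qb}--\eqref{eq:completeness-rel-qw} either reinforce or cancel. I would handle the bookkeeping by fixing once and for all the conventions from~\eqref{eq:def1-as}--\eqref{eq:completeness-zs}, writing the $2\times2$ linear system explicitly, and checking the final formulas against a small example such as $\mathfrak{g}=a_2$ (where $h=3$, one pair of exponents $\{1,2\}$, one black and one white root) before declaring victory.
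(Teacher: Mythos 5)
Your overall strategy is the right one and uses exactly the ingredients the paper uses (the vectors $a^\bullet_s,a^\circ_s$ of~\eqref{eq:def1-as}, the normalization $(z_s,z^\dagger_{s'})=h\delta_{ss'}$, root length $\sqrt2$ and mutual orthogonality within each colour class), but your ``cleanest route'' goes in the opposite direction from the paper's and, as written, has a concrete hole. The paper \emph{inverts}~\eqref{eq:zs_as_functions_of_as}, writing $a^\bullet_s\propto z_s-z_{h-s}$ and $a^\circ_s\propto e^{-i\theta_s}z_s-e^{i\theta_s}z_{h-s}$, and then evaluates $(a^\bullet_s,{a^\bullet_{s'}}^\dagger)$ and $(a^\circ_s,{a^\circ_{s'}}^\dagger)$ in two ways; in that direction no black--white cross terms ever appear and the deltas in~\eqref{eq:completeness-rel-qb}--\eqref{eq:completeness-rel-qw} drop out immediately. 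You instead expand $(z_s,z^\dagger_{s'})$ in terms of the $a$'s, and this produces the mixed scalar products $(a^\bullet_s,a^\circ_{s'})$ and $(a^\circ_s,a^\bullet_{s'})$ in addition to $B_{ss'}$ and $W_{ss'}$; your ``small linear system in the two unknowns'' is really a system in four unknowns, and waving at ``Dynkin-diagram links'' does not close it. It \emph{can} be closed — either by eliminating the cross terms through the Cartan eigenvalue equation $\sum_b(\alpha_a,\alpha_b)q^b_{s'}=(2-2\cos\theta_{s'})q^a_{s'}$, which gives $(a^\bullet_s,a^\circ_{s'})=-2\cos\theta_{s'}\,B_{ss'}$ and similarly for the other cross term, or by taking real and imaginary parts of the two complex equations for $(z_s,z^\dagger_{s'})$ and $(z_s,z^\dagger_{h-s'})$ together with $B_{ss'}+W_{ss'}=|q_s|^2\delta_{ss'}$ — but one of these extra steps must actually be spelled out, and the paper's inversion trick is the cheaper fix.

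One factual point in your final paragraph needs correcting: the claim that $s=h/2$ ``is not an exponent for simply-laced algebras with the relevant diagrams'' is false. For the $d_{2n}$ series — including $d_4$, which is used extensively in this paper — one has $h=4n-2$ and $h/2=2n-1$ is an exponent, in fact with multiplicity two. So the degenerate case cannot be dismissed; it has to be handled by an appropriate choice of basis in the degenerate eigenspace (consistent with $z_s^\dagger=z_{h-s}$), in your approach just as in the paper's.
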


\begin{proof}
Inverting~\eqref{eq:zs_as_functions_of_as} we can write
\begin{equation}
\label{eq:def2-as}
\begin{split}
a^\bullet_s &= \frac{|q_s|}{i \sqrt{2h}} \bigl( z_s - z_{h-s} \bigl),\\
a^\circ_s &= \frac{i |q_s|}{\sqrt{2h}} \bigl( e^{-i \theta_s} z_s - e^{i \theta_s} z_{h-s} \bigl).
\end{split}
\end{equation}
Combining these relations with~\eqref{eq:scalar-product-zs} we obtain
\begin{equation}
\label{eq:sp1}
\begin{split}
&(a^\bullet_s, {a^\bullet_{s'}}^\dagger) = |q_s|^2 \bigl( \delta_{s' , s} - \delta_{s' , h-s}\bigl) \,,\\
&(a^\circ_s, {a^\circ_{s'}}^\dagger) = |q_s|^2 \bigl( \delta_{s' , s} + \delta_{s' , h-s}\bigl) \,.
\end{split}
\end{equation}
However, from the definition \eqref{eq:def1-as} we also have
\begin{equation}
\label{eq:sp2}
\begin{split}
&(a^\bullet_s, {a^\bullet_{s'}}^\dagger) = \sum_{a, b \in \bullet} q^a_s q^b_{s'} (\alpha_a , \alpha_b) = 2 \sum_{a \in \bullet} q^a_s q^a_{s'} \,,\\
&(a^\circ_s, {a^\circ_{s'}}^\dagger) = \sum_{a, b \in \circ} q^a_s q^b_{s'} (\alpha_a , \alpha_b) = 2 \sum_{a \in \circ} q^a_s q^a_{s'} \,,
\end{split}
\end{equation}
where we used the fact that the simple roots within a given set ($\bullet$ or $\circ$) are mutually orthogonal to each other and we set the root length to be $\sqrt{2}$. 
Thus combining \eqref{eq:sp1} and \eqref{eq:sp2} we obtain \eqref{eq:completeness-rel-qb} and~\eqref{eq:completeness-rel-qw}.
\end{proof}

\begin{lemma}
\label{theo1}
Given a semisimple Lie algebra $\mathfrak{g}$ associated with a simply-laced Dynkin diagram then $\forall$ $x$ and $y \in \mathbb{R}^r$ it holds that
\begin{equation}
\label{eq:completeness-relation}
\sum_{e=1}^r \sum^{h-1}_{p=0} (x, w^{-p} \gamma_e) (w^{-p} \gamma_e,y)= 2 h (x, y) \,.
\end{equation}
\end{lemma}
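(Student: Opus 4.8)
The plan is to expand both sides in the orthonormal eigenbasis $\{z_s\}$ of the Coxeter element and reduce the statement to the completeness relations for the Cartan-matrix eigenvectors proven in Lemma~\ref{property_scalar_pr_asasprime}. First I would write $x=\sum_s \frac{(z_{h-s},x)}{h} z_s$ and likewise for $y$ using the decomposition of the identity in~\eqref{eq:completeness-zs}; equivalently, it suffices to verify~\eqref{eq:completeness-relation} when $x=z_s$ and $y=z_{s'}^{\dagger}=z_{h-s'}$ for arbitrary exponents $s,s'$, since these span $\mathbb{R}^r$ over $\mathbb{C}$ and both sides are bilinear. With this reduction the right-hand side is simply $2h\,(z_s,z_{h-s'})=2h^2\delta_{ss'}$ by~\eqref{eq:scalar-product-zs}, so the whole problem becomes showing that the left-hand side equals $2h^2\delta_{ss'}$ as well.

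Next I would compute the left-hand side using the projection formulas~\eqref{eq:projections-roots} together with $P_s(w^{-p}\gamma_e)=e^{-2ip\theta_s}P_s(\gamma_e)$. Writing $(z_s,w^{-p}\gamma_e)=\sqrt{h/2}\,e^{-2ip\theta_s}(z_s,\gamma_e)$ and $(w^{-p}\gamma_e,z_{h-s'})=\sqrt{h/2}\,e^{2ip\theta_{s'}}(\gamma_e,z_{h-s'})$, the sum over $p$ from $0$ to $h-1$ gives a geometric series $\sum_{p=0}^{h-1}e^{2ip(\theta_{s'}-\theta_s)}$, which equals $h$ when $s=s'$ and vanishes otherwise (recalling $\theta_s=\pi s/h$, so $\theta_{s'}-\theta_s$ is a nonzero multiple of $\pi/h$ and $e^{2ih(\theta_{s'}-\theta_s)}=1$). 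This already forces $s=s'$. It then remains to evaluate $\frac{h}{2}\cdot h\sum_{e=1}^r (z_s,\gamma_e)(\gamma_e,z_{h-s})$ and show it equals $2h^2$, i.e. $\sum_{e=1}^r (z_s,\gamma_e)(\gamma_e,z_{h-s})=4h$. Splitting the sum over $e$ into the black and white sets and substituting the explicit projections—$(z_{h-s},\gamma_e)=\sqrt{h/2}\cdot 2 q_s^e/|q_s|$ for $e\in\circ$ and $(z_{h-s},\gamma_e)=\sqrt{h/2}\cdot 2 q_s^e e^{-i\theta_s}/|q_s|$ for $e\in\bullet$, with the conjugate expressions for $(z_s,\gamma_e)$—the phase factors $e^{\pm i\theta_s}$ cancel in each term, and one is left with $\frac{2h}{|q_s|^2}\big(\sum_{e\in\bullet}q_s^e q_s^e+\sum_{e\in\circ}q_s^e q_s^e\big)=\frac{2h}{|q_s|^2}|q_s|^2\cdot 2 = 4h$, using $|q_s|^2=\sum_e (q_s^e)^2$. (Here I would invoke~\eqref{eq:completeness-rel-qb} and~\eqref{eq:completeness-rel-qw} at $s'=s$, which give $\sum_{a\in\bullet}(q_s^a)^2=\sum_{a\in\circ}(q_s^a)^2=|q_s|^2/2$.)

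I expect the main obstacle to be purely bookkeeping: keeping careful track of which roots carry the extra $e^{-i\theta_s}$ phase under the black/white split, making sure the Hermitian conjugation $z_s^\dagger=z_{h-s}$ is applied consistently so that all phases genuinely cancel, and confirming the geometric-sum argument handles the case $s\ne s'$ correctly given that $s,s'$ range only over exponents (so $\theta_{s'}-\theta_s$ need not be a multiple of $\pi$, but is always a multiple of $\pi/h$, which is all that is needed). A cleaner alternative, which I would mention as a cross-check, is to argue more abstractly: the operator $T=\sum_{p=0}^{h-1}\sum_e w^{-p}\gamma_e\otimes \gamma_e\, w^{p}$ commutes with $w$ (reindex $p\to p+1$ and use $w$-invariance of the pairing), hence is block-diagonal on the eigenplanes of $w$; restricting to each eigenplane and taking a trace reduces it to the scalar $4h$ on that plane, giving $T=2h\,\mathbb{I}$. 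I would present the explicit basis computation as the primary proof since it needs only the lemmas already available, and remark on the operator argument as motivation for why $2h$ is the natural constant.
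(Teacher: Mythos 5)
Your strategy is exactly the paper's: expand in the eigenbasis $\{z_s\}$ of $w$ (the paper inserts two copies of the resolution of the identity \eqref{eq:completeness-zs}, which is equivalent to your reduction to $x=z_s$, $y=z_{h-s'}$ by bilinearity), kill the sum over $p$ with the geometric series $\sum_{p=0}^{h-1}e^{2ip(\theta_s-\theta_{s'})}=h\,\delta_{ss'}$, and finish with the black/white projection formulas, the reality conditions \eqref{eq:reality-conditions} and Lemma~\ref{property_scalar_pr_asasprime}. The one thing to repair is your bookkeeping of constants, which as written is internally inconsistent. The identity $(z_s,w^{-p}\gamma_e)=\sqrt{h/2}\,e^{-2ip\theta_s}(z_s,\gamma_e)$ cannot be right (set $p=0$); the correct statement is $(z_s,w^{-p}\gamma_e)=e^{2ip\theta_s}(z_s,\gamma_e)$ with no prefactor, the $\sqrt{h/2}$ belonging only to the conversion $(z_{h-s},\gamma_e)=\sqrt{h/2}\,P_s(\gamma_e)$. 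Consequently the quantity left after the $p$-sum is $h\sum_{e}(z_s,\gamma_e)(\gamma_e,z_{h-s})$, which must equal $2h\,(z_s,z_{h-s})=2h^2$, i.e.\ $\sum_{e}(z_s,\gamma_e)(\gamma_e,z_{h-s})=2h$, not $4h$: each term equals $|(z_{h-s},\gamma_e)|^2=2h\,(q^e_s)^2/|q_s|^2$, and $\sum_{e\in\bullet}(q^e_s)^2+\sum_{e\in\circ}(q^e_s)^2=|q_s|^2$ (each half contributes $|q_s|^2/2$ by \eqref{eq:completeness-rel-qb} and \eqref{eq:completeness-rel-qw}), so the factor of $2$ you append at the end is spurious. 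With these normalizations fixed the computation closes and coincides with the paper's proof; your operator-averaging remark (that $T=\sum_{p,e}w^{-p}\gamma_e\otimes\gamma_e\,w^{p}$ commutes with $w$ and is fixed by a trace) is a clean independent check on the constant $2h$ that the paper does not make.
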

\begin{proof}
    Let LHS be the left-hand side of \eqref{eq:completeness-relation}. By inserting two completeness relations of the type~\eqref{eq:completeness-zs} and using the projections \eqref{eq:projections-roots} we can write
\begin{equation}
    \label{eq:LHS_lemma_completness}
    \begin{split}
        \text{LHS} & = \sum_{e=1}^r \sum^{h-1}_{p=0} \sum_{s,s'} \frac{2}{h} \frac{(x,z_{h-s})(z_{s'}, y)}{|q_{s'}||q_{h-s}|}\ q_{s'}^e q_{h-s}^e e^{\frac{i \pi}{h} u_{\gamma_e} (s'-s+1)}\ e^{\frac{2 i p}{h} (s - s')} \,,
    \end{split}
    \end{equation}
where $u_{\gamma_e}=-1$ if $e\in\bullet$ and $u_{\gamma_e}=0$ if $e \in \circ$, so to take into account the extra phase in the second line of~\eqref{eq:projections-roots}. 
Noting that
$$
\sum^{h-1}_{p=0} e^{\frac{2 i p}{h} (s - s')} = h \delta_{s s'} \,,
$$
we obtain
\begin{equation}
        \text{LHS}= \sum_{e=1}^r \sum_{s} 2 \frac{(x,z_{h-s})(z_{s}, y)}{|q_s|^2}\ q_{s}^e q_{s}^e = 2 \sum_{s} (x,z_{h-s})(z_{s}, y) = 2 h (x,y) \, .
\end{equation}
In the expression above we used the relations in~\eqref{eq:reality-conditions} to transform $q_{h-s}$ into $q_s$ and the completeness relation \eqref{eq:completeness-zs}.
\end{proof}

\begin{lemma}
\label{property_strange_relations}
$\forall$ $a\in \bullet$, $b\in \bullet$ and $l \in \mathbb{Z}$ it holds that
    \begin{equation}
    \label{eq:rel_white_white}
    -2 l (\gamma_b, w^{l} \gamma_a) + \sum^r_{e=1} \sum^{l-1}_{q=0} (\gamma_e, w^{l-q} \gamma_a) (\gamma_e, w^{-q} \gamma_b)=  ( (1+w^{-1})\lambda_a , w^{-l} \gamma_b) \,.
    \end{equation}
\end{lemma}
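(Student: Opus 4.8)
The plan is to reduce the identity to the evaluation of a finite geometric sum over the Coxeter eigenplanes. First I would simplify the right-hand side. By \eqref{eq:roots_weigths_connection}, $\gamma_a=(1-w^{-1})\lambda_a$, so for $a\in\bullet$ one has $w^{-1}\lambda_a=\lambda_a-\gamma_a$ and hence $(1+w^{-1})\lambda_a=2\lambda_a-\gamma_a$; also $(\gamma_b,w^{l}\gamma_a)=(\gamma_a,w^{-l}\gamma_b)$ since $w$ is orthogonal. Setting $g(l):=\sum_{e=1}^{r}\sum_{q=0}^{l-1}(\gamma_e,w^{l-q}\gamma_a)(\gamma_e,w^{-q}\gamma_b)$, the claim becomes $g(l)=(2l-1)(\gamma_a,w^{-l}\gamma_b)+2(\lambda_a,w^{-l}\gamma_b)$. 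Both sides are periodic in $l$ with period $h$: the operators $T_p:=\sum_e (w^{-p}\gamma_e)\otimes(w^{-p}\gamma_e)$ are $h$-periodic in $p$ and $\sum_{p=0}^{h-1}T_p=2h\,\mathbb{I}$ by the completeness relation \eqref{eq:completeness-relation}, so $g(l+h)=g(l)+2h(\gamma_a,w^{-l}\gamma_b)$, which exactly matches the shift of the $-2l(\gamma_a,w^{-l}\gamma_b)$ term. It therefore suffices to treat $l\geq 1$.

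Next I would turn $g(l)$ into a partial sum of a period-$h$ function. Using $(\gamma_e,w^{m}v)=(w^{-m}\gamma_e,v)$ and reindexing, $g(l)=\sum_{p=1}^{l}\rho(p)$ with $\rho(p):=\sum_{e=1}^{r}(w^{-p}\gamma_e,\gamma_a)(w^{-p}\gamma_e,w^{-l}\gamma_b)$, whose full-period sum is $2h(\gamma_a,w^{-l}\gamma_b)$ by \eqref{eq:completeness-relation}. To compute $\rho(p)$ I would expand each scalar product in the eigenbasis $\{z_s\}$ of $w$ via the completeness relation \eqref{eq:completeness-zs}, use the projection formulas \eqref{eq:projections-roots}, and evaluate the remaining sum over $e$ with the help of Lemma~\ref{property_scalar_pr_asasprime}: the sum $\sum_e(\gamma_e,z_{h-s})(\gamma_e,z_{h-s'})$ equals $h(1+e^{-2i\theta_s})$ when $s'=s$, equals $2h$ when $s'=h-s\neq s$, and vanishes otherwise, the exponent $s=h/2$ (when it occurs) being an easy special case. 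The $s'=h-s$ piece is $p$-independent and, after an $s\leftrightarrow h-s$ reindexing, reassembles into $2(\gamma_a,w^{-l}\gamma_b)$; the $s'=s$ piece contributes a geometric series in $e^{4ip\theta_s}$.

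Then I would carry out the sum $\sum_{p=1}^{l}$, which produces factors $(e^{4il\theta_s}-1)/(e^{4i\theta_s}-1)$ and splits the $s$-sum into an $e^{2il\theta_s}$ part and an $e^{-2il\theta_s}$ part. The crucial observation is that for $a\in\bullet$ one has $(z_s,\gamma_a)=e^{2i\theta_s}(z_{h-s},\gamma_a)$ --- a consequence of \eqref{eq:projections-roots}, \eqref{eq:scalar-product-zs}, $\theta_{h-s}=\pi-\theta_s$ and the reality of the $q_s^a$ --- which turns the ``holomorphic'' pairing $(z_s,\gamma_a)(z_s,\gamma_b)$ into the ``Hermitian'' pairing $\overline{(z_s,\gamma_a)}\,(z_s,\gamma_b)$ occurring in $(\gamma_a,w^{k}\gamma_b)$ and in $(\lambda_a,w^{k}\gamma_b)$ for any $k$ (writing $(\lambda_a,z_{h-s})=(1-e^{-2i\theta_s})^{-1}(\gamma_a,z_{h-s})$, legitimate since no exponent of $\mathfrak g$ equals $0$, so $1-w$ is invertible). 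Collecting the resummed terms identifies the $s'=s$ piece as $-(\gamma_a,w^{-l}\gamma_b)+2(\lambda_a,w^{-l}\gamma_b)$ up to a remainder proportional to $\sum_s\overline{(z_s,\gamma_a)}(z_s,\gamma_b)\,\cos((2l+1)\theta_s)/\sin\theta_s$; this remainder vanishes because pairing $s$ with $h-s$ and using the reality conditions \eqref{eq:reality-conditions} ($q_{h-s}^a=-q_s^a$ for $a\in\bullet$) makes the two contributions opposite, while $\cos((2l+1)\pi/2)=0$ disposes of $s=h/2$. Adding back the $p$-independent $2l(\gamma_a,w^{-l}\gamma_b)$ gives $g(l)=(2l-1)(\gamma_a,w^{-l}\gamma_b)+2(\lambda_a,w^{-l}\gamma_b)$, which is the assertion.

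The main obstacle is precisely this last step: keeping all phase factors straight through the eigenbasis expansion and the finite geometric sum, correctly isolating the part linear in $l$ (which must emerge as the $p$-independent contribution to the period-$h$ function), and recognising the resummed remainder as $((1+w^{-1})\lambda_a,w^{-l}\gamma_b)$ rather than some other combination of roots and fundamental weights. The $s\leftrightarrow h-s$ bookkeeping and the degenerate exponent $s=h/2$ are where sign slips are easiest, so I would cross-check the final formula against the explicit $d_4$ data used elsewhere in this section before declaring victory.
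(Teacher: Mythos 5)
Your proposal is correct and follows essentially the same route as the paper: decompose everything on the Coxeter eigenbasis $\{z_s\}$, perform the sum over $e$ with Lemma~\ref{property_scalar_pr_asasprime}, carry out the finite geometric sum, and finish with the $s\leftrightarrow h-s$ reindexing, the reality conditions \eqref{eq:reality-conditions}, and $(\lambda_a,z_{h-s})=(1-e^{-2i\theta_s})^{-1}(\gamma_a,z_{h-s})$. The only differences are bookkeeping ones — by attaching $w^{-p}$ to $\gamma_e$ in both factors you place the geometric series on the diagonal $s'=s$ and extract the linear-in-$l$ piece from $s'=h-s$ (the paper does the opposite), and your periodicity-in-$l$ reduction is a nice touch that covers all $l\in\mathbb{Z}$ explicitly — but the substance of the argument is identical and the pieces you compute reassemble to the same $\tfrac{2}{i}\sum_s\cot(\theta_s)\,q^a_s q^b_s\,e^{2il\theta_s}/|q_s|^2$ as in the paper.
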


\begin{proof}
    For convenience let us define the left-hand side and the right-hand side of the relation by
    \begin{equation}
        \begin{split}
            \text{LHS} & \equiv -2 l (\gamma_b, w^{l} \gamma_a) + \sum^r_{e=1} \sum^{l-1}_{q=0} (\gamma_e, w^{l-q} \gamma_a) (\gamma_e, w^{-q} \gamma_b), \\
            \text{RHS} & = ( (1+w^{-1})\lambda_a , w^{-l} \gamma_b).
        \end{split}
    \end{equation}
   Let us start with the RHS, which is the simpler quantity to compute. We first write $\lambda_a$ as a function of $\gamma_a$ by inverting \eqref{eq:roots_weigths_connection} and then we use the completeness relation \eqref{eq:completeness-zs}. We then get:
\begin{equation}
\begin{split}
\text{RHS}&=  ( (1+w^{-1})\lambda_a , w^{-l} \gamma_b) = ( (1+w^{-1}) (1- w^{-1})^{-1}\gamma_a , w^{-l} \gamma_b)\\
&= \frac{1}{h} \sum_s (1+e^{-2 i \theta_s}) (1-e^{-2 i \theta_s})^{-1} (\gamma_a, z_{h-s}) (z_s, w^{-l} \gamma_b)= \frac{2}{i} \sum_s \cot (\theta_s) \frac{q^a_s q^b_s}{|q_s|^2} e^{2 i l \theta_s} \,.
\end{split}
\end{equation}
The LHS can be decomposed on the basis of eigenvectors of the Coxeter element in a similar way, leading to
\begin{multline}
    \text{LHS}=  -4 l \sum_s e^{2 i l \theta_s}\frac{q^a_s q^b_s}{|q_s|^2} \\
    + 4  \sum_{s, s'}  e^{2i l \theta_s} \frac{q^a_s q^b_{s'}}{|q_s|^2 |q_{s'}|^2} \sum_{q=0}^{l-1} e^{i (\theta_{s'} - \theta_{s}) (2q+1)} \left(\sum_{e \in \circ} q^e_s q^e_{s'}  + \sum_{e \in \bullet} q^e_s q^e_{s'} e^{i(\theta_{s} - \theta_{s'})} \right).
\end{multline}
Using the relations in~\eqref{eq:completeness-rel-qb} and~\eqref{eq:completeness-rel-qw} this expression becomes
\begin{multline}
\label{eq:intermediate_step_strange_rel}
    \text{LHS}=  -4 l \sum_s e^{2 i l \theta_s}\frac{q^a_s q^b_s}{|q_s|^2} \\
    + 2  \sum_{s, s'}  e^{2i l \theta_s} \frac{q^a_s q^b_{s'}}{ |q_{s'}|^2} \sum_{q=0}^{l-1} e^{i (\theta_{s'} - \theta_{s}) (2q+1)} \bigl[ 2 \delta_{s' , s} + \delta_{s', h-s} (e^{2i \theta_s} + 1) \bigl].
\end{multline}
The term proportional to $\delta_{s' s}$ in the second row of~\eqref{eq:intermediate_step_strange_rel} cancels the term in the first row and after using
\begin{equation}
\theta_{h-s} = \frac{\pi}{h} (h-s) = \pi- \theta_s
\end{equation}
and the relations in~\eqref{eq:reality-conditions} we obtain
\begin{equation}
\begin{split}
\text{LHS}&=   2  \sum_{s}  e^{2i l \theta_s} \frac{q^a_s q^b_{s}}{ |q_{s}|^2} (e^{-2i \theta_s} + 1) \sum_{q=0}^{l-1} e^{- 4 i q \theta_{s}}\\
&=2 \sum_{s}  \Bigl(e^{2 i l \theta_s} - e^{-2 i l \theta_s} \Bigl) \frac{q^a_s q^b_{s}}{ |q_{s}|^2} \frac{1}{1- e^{-2 i \theta_s}} .
 \end{split}
\end{equation}
We can now change $s \to h-s$ for the part of the sum containing $e^{-2 i l \theta_s}$. In this manner $e^{-2 i l \theta_s} \to e^{2 i l \theta_s}$ and using the relations in~\eqref{eq:reality-conditions} we obtain
\begin{equation}
\text{LHS}=2 \sum_{s} e^{2 i l \theta_s} \frac{q^a_s q^b_{s}}{ |q_{s}|^2} \Bigl(\frac{1}{1- e^{-2 i \theta_s}} - \frac{1}{1- e^{2 i \theta_s}} \Bigl) = \frac{2}{i} \sum_{s} e^{2 i l \theta_s} \frac{q^a_s q^b_{s}}{ |q_{s}|^2} \cot(\theta_s)
\end{equation}
which exactly matches the RHS.
\end{proof}

\begin{lemma}
    $\forall$ $a \in \circ$, $b \in \bullet$ and $l \in \mathbb{Z}$ it holds that
    \begin{multline}
    \label{eq:rel_white_black}
    -(2l+1) (\gamma_b, w^{l} \gamma_a) + \sum^r_{e=1} \sum^{l-1}_{q=0} (\gamma_e, w^{l-q} \gamma_a) (\gamma_e, w^{-q} \gamma_b)\\
    +\sum_{e \in \bullet} (\gamma_e, \gamma_a) (\gamma_e, w^{-l} \gamma_b)=( (1+w^{-1})\lambda_a , w^{-l} \gamma_b) \,.
    \end{multline}
\end{lemma}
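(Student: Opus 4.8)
The plan is to mirror, step for step, the proof of Lemma~\ref{property_strange_relations}: decompose both sides of~\eqref{eq:rel_white_black} onto the orthonormal eigenbasis $\{z_s\}$ of the Coxeter element using the completeness relation~\eqref{eq:completeness-zs}, so that every bilinear in the roots $\gamma_a,\gamma_b,\gamma_e$ turns, via the projection formulas~\eqref{eq:projections-roots}, into a sum over $s$ of the quantities $q^a_s q^b_s/|q_s|^2$ multiplied by $\cot\theta_s$ and phases $e^{ik\theta_s}$, and then to reduce the remaining $q$-sums to finite geometric series. First I would compute the right-hand side $((1+w^{-1})\lambda_a,w^{-l}\gamma_b)$: writing $\lambda_a=(1-w^{-1})^{-1}\gamma_a$ by inverting~\eqref{eq:roots_weigths_connection} and inserting~\eqref{eq:completeness-zs}, it collapses to a single sum over $s$, with the colour-dependent phases of $\gamma_a\in\circ$ (none) and $\gamma_b\in\bullet$ (one factor) entering as dictated by the two lines of~\eqref{eq:projections-roots}. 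This is the target expression that the left-hand side must reproduce.

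Then I would expand the left-hand side piece by piece. The term $-(2l+1)(\gamma_b,w^l\gamma_a)$ and the double sum $\sum_e\sum_{q=0}^{l-1}(\gamma_e,w^{l-q}\gamma_a)(\gamma_e,w^{-q}\gamma_b)$ are handled exactly as in Lemma~\ref{property_strange_relations}, the only difference being that the inner sum over $e$ now pairs one white and one black root, so one uses~\eqref{eq:completeness-rel-qb} and~\eqref{eq:completeness-rel-qw} in a slightly different combination and an extra phase $e^{-i\theta_s}$ is generated relative to the white--white case. The genuinely new ingredient is the term $\sum_{e\in\bullet}(\gamma_e,\gamma_a)(\gamma_e,w^{-l}\gamma_b)$, which I would evaluate with the black-root completeness relation~\eqref{eq:completeness-rel-qb}: this produces a $\delta_{s',s}$ part plus a $\delta_{s',h-s}$ remainder. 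One then checks that the $\delta_{s',s}$ parts coming from the double sum and from this extra term combine to cancel the $(2l+1)$-proportional term exactly --- this is precisely why the $\sum_{e\in\bullet}$ term has to be present on the left-hand side of~\eqref{eq:rel_white_black} when $a$ and $b$ have different colours --- and, after substituting $\theta_{h-s}=\pi-\theta_s$ and using the reality conditions~\eqref{eq:reality-conditions}, that the surviving $\delta_{s',h-s}$ terms organise into a geometric series in $e^{-4iq\theta_s}$.

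Summing that geometric series and then replacing $s\to h-s$ in the half of the $s$-sum carrying the exponent $e^{-2il\theta_s}$, just as at the end of the proof of Lemma~\ref{property_strange_relations}, I expect the left-hand side to collapse to exactly the expression obtained for the right-hand side, which finishes the proof. I expect the only real obstacle to be organisational: correctly propagating the colour-dependent phases $e^{\pm i\theta_s}$ through each use of the two colour-completeness relations, and verifying that once the extra black-root term is included the cancellation of the $l$-linear piece really does go through. No geometric input beyond what is already used for the white--white case is needed; the asymmetric coefficient $2l+1$ and the additional $\sum_{e\in\bullet}$ term in~\eqref{eq:rel_white_black} are there exactly to absorb the different angle conventions for $a\in\circ$ versus $b\in\bullet$ fixed in Section~\ref{sec:SL-affine-toda}.
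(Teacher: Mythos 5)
Your proposal is correct and follows essentially the same route as the paper: decompose both sides onto the eigenbasis $\{z_s\}$ via~\eqref{eq:completeness-zs} and~\eqref{eq:projections-roots}, use the colour-completeness relations~\eqref{eq:completeness-rel-qb}--\eqref{eq:completeness-rel-qw} on the $e$-sums, observe that the $\delta_{s's}$ pieces from the double sum and the extra $\sum_{e\in\bullet}$ term together cancel the $(2l+1)$-proportional term, then sum the geometric series and use $s\to h-s$ with~\eqref{eq:reality-conditions} to match the right-hand side. This is exactly the paper's argument, including the correct identification of why the additional black-root term must be present when $a$ and $b$ have different colours.
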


\begin{proof}
 Let LHS and RHS be the left-hand side and right-hand side of \eqref{eq:rel_white_black}, as usual. Proceeding as in the proof of Lemma~\ref{property_strange_relations} it is easy to show that
\begin{equation}
\label{eq:RHS_black_white_rel_app}
\text{RHS}= \frac{2}{i} \sum_s \cot (\theta_s) \frac{q^a_s q^b_s }{|q_s|^2} e^{ i (2l+1) \theta_s} \,.
\end{equation}
Decomposing also the LHS on the basis of eigenvectors of the Coxeter element we obtain
\begin{equation}
\label{eq:int_step_black_white_app}
\begin{split}
    \text{LHS}=  &-2(2 l+1) \sum_s e^{i \theta_s (2l +1)}\frac{q^a_s q^b_s}{|q_s|^2} \\
    &+ 2  \sum_{s, s'}  e^{2i l \theta_s} \frac{q^a_s q^b_{s'}}{ |q_{s'}|^2} \sum_{q=0}^{l-1} e^{2q i (\theta_{s'} - \theta_{s})} e^{i \theta_{s'}} \Bigl( 2 \delta_{s' s} +\delta_{s', h-s} (e^{2i \theta_s} + 1) \Bigl) \\
    &+ 2  \sum_{s, s'}   \frac{q^a_s q^b_{s'}}{ |q_{s'}|^2}  e^{(2 l+1) i \theta_{s'} }  \bigl( \delta_{s' s} + \delta_{s' h-s} e^{2i \theta_s} \bigl)  .
    \end{split}
\end{equation}
As before, the contributions proportional to $\delta_{s' s}$ cancel the term in the first-row of~\eqref{eq:int_step_black_white_app} leading to
 \begin{equation}
\begin{split}
    \text{LHS}=  & 2  \sum_{s}  e^{i \theta_s(2 l -1) } \frac{q^a_s q^b_{s}}{ |q_{s}|^2} \bigl( e^{2 i \theta_s}+1 \bigl) \sum^{l-1}_{q=0} e^{-4 i q \theta_s} + 2  \sum_{s}  e^{-i \theta_s (2l-1) } \frac{q^a_s q^b_{s}}{ |q_{s}|^2}\\
    =& 2  \sum_{s}  e^{i \theta_s(2 l +1) } \frac{q^a_s q^b_{s}}{ |q_{s}|^2}  \frac{1-e^{-4 i l \theta_s}}{1-e^{-2 i \theta_s}}+ 2  \sum_{s}  e^{i \theta_s (2l-1) } \frac{q^a_s q^b_{s}}{ |q_{s}|^2} \,.
    \end{split}
\end{equation}
Using the trick of changing $s \to h-s$ when required to transform $e^{-2i l \theta_s} \to e^{2i l \theta_s}$ we obtain
\begin{equation}
     \text{LHS}= 2  \sum_{s}  e^{2 i l \theta_s} \frac{q^a_s q^b_{s}}{ |q_{s}|^2} \left( \frac{e^{i \theta_s}}{1-e^{-2i \theta_s}} - \frac{e^{-i \theta_s}}{1-e^{2i \theta_s}} + e^{-i \theta_s}\right) = \frac{2}{i}  \sum_{s}  e^{i (2l+1) \theta_s} \frac{q^a_s q^b_{s}}{ |q_{s}|^2}  \cot(\theta_s) \,,
\end{equation}
which exactly matches~\eqref{eq:RHS_black_white_rel_app}.
\end{proof}

Since under exchanging $\circ\leftrightarrow\bullet$ the Coxeter element transforms as $w \to w^{-1}$, then identities for $\{a \in \circ, b \in \circ\}$ and $\{a \in \bullet, b \in \circ\}$ can be obtained from the identities above by replacing $w$ with $w^{-1}$.

\bibliographystyle{JHEP}
\bibliography{refs}

\end{document}